\documentclass[aps,pra,reprint,nofootinbib,superscriptaddress,floatfix,flushbottom]{revtex4-2}

\usepackage{amsmath,amssymb,amsthm,mathtools,bm}
\usepackage{mathrsfs}
\usepackage{booktabs}
\usepackage{microtype}
\usepackage{tikz}
\usetikzlibrary{arrows.meta,positioning,fit,decorations.pathreplacing}
\usepackage[colorlinks=true,linkcolor=blue,citecolor=blue,urlcolor=blue]{hyperref}

\newtheorem{theorem}{Theorem}
\newtheorem{lemma}{Lemma}
\newtheorem{corollary}{Corollary}
\newtheorem{proposition}{Proposition}

\theoremstyle{definition}
\newtheorem{example}{Example}
\theoremstyle{plain}

\newcommand{\Tr}{\operatorname{Tr}}
\newcommand{\id}{\operatorname{id}}
\newcommand{\Ad}{\operatorname{Ad}}
\newcommand{\MRM}{\operatorname{MRM}}
\newcommand{\Sp}{\operatorname{Sp}}
\newcommand{\CSp}{\operatorname{CSp}}
\newcommand{\Cent}{\operatorname{Cent}}
\newcommand{\cM}{\mathcal M}
\newcommand{\cN}{\mathcal N}
\newcommand{\cT}{\mathcal T}
\newcommand{\cE}{\mathcal E}
\newcommand{\cD}{\mathcal D}
\newcommand{\F}{\mathbb F_d}
\newcommand{\ketbra}[2]{|#1\rangle\!\langle #2|}
\newcommand{\doteqph}{\mathrel{\dot=}}
\newcommand{\ket}[1]{|#1\rangle}
\newcommand{\bra}[1]{\langle #1|}

\hypersetup{pdftitle={Transport and Channel Normal Forms for Clifford Interactions},pdfauthor={Chunhe Xiong, Sunho Kim, Qing-Hua Zhang, Shao-Ming Fei, Junde Wu}}

\begin{document}

\title{Transport and Channel Normal Forms for Clifford Interactions}

\author{Chunhe Xiong}
\email{xiongchunhe@csu.edu.cn}
\affiliation{School of Mathematics and Statistics, Central South University, Changsha 410083, China}

\author{Sunho Kim}
\email{kimsunho81@hrbeu.edu.cn}
\affiliation{School of Mathematical Sciences, Harbin Engineering University, Harbin 150001, China}

\author{Qing-Hua Zhang}
\affiliation{School of Mathematics and Statistics, Changsha University of Science and Technology, Changsha 410114, China}

\author{Shao-Ming Fei}
\email{feishm@cnu.edu.cn}
\affiliation{School of Mathematical Sciences, Capital Normal University, Beijing 100048, China}

\author{Junde Wu}
\email{wjd@zju.edu.cn}
\affiliation{School of Mathematical Sciences, Zhejiang University, Hangzhou 310027, China}


\begin{abstract}
We study how Clifford interactions with a fixed environment determine quantum transmission, in a setting that extends positive discrete quantum convolution.
We characterize when Clifford operations on the environment can be replaced by operations on the signal input and the two outputs, and classify the interactions under local Clifford transformations. For pure stabilizer environments, we determine the noiseless, dephasing, and completely depolarizing parts of the channel directly from the interaction and environmental stabilizers, obtaining its quantum and private capacities. This reveals a distinction between interactions: those locally equivalent to positive convolution produce entanglement-breaking channels for every pure stabilizer environment, whereas every other interaction in the class admits noiseless quantum transmission with a suitable pure stabilizer environment. For positive convolution, environmental Clifford transport also allows us to separate the stabilized coordinates of an arbitrary mixed environment. These coordinates carry only a classical label shared by the receiver and environment, so removing them preserves quantum and private capacities. The resulting channel reduction yields a private-capacity bound by the modified relative entropy of magic of the environmental state.

\end{abstract}

\maketitle

\section{Introduction}

An interaction with an environment can preserve quantum information,
retain only classical information, or erase the input.  Which of these
mechanisms occurs depends on how the coupling distributes input
observables between the outputs, as well as on the environmental state.
A channel normal form separates these mechanisms and identifies the
degrees of freedom available for communication.  We seek such a
description for channels induced by bipartite Clifford interactions.

Discrete quantum convolution provides a setting for this problem.
Bu, Gu, and Jaffe introduced finite-field analogues of beam-splitter
and amplifier interactions and developed their stabilizer mean states,
entropy inequalities, and central-limit theory
\cite{BuGuJaffePNAS2023,BuGuJaffeDVG2023,BuGuJaffeTesting2025}.
For positive convolution couplings, pure stabilizer environments produce
entanglement-breaking channels, whereas nonstabilizer environments can
support quantum communication
\cite{BuJaffePRL2025,SunJinPRA2025,XiongKimLongWu2026}.
This environmental dependence parallels the use of non-Gaussian
environments in bosonic communication
\cite{LamiPlenioGiovannettiHolevo2020}.
The coupling also matters: within the componentwise convolution family,
different vanishing-entry patterns select entanglement-breaking or
generalized-dephasing regimes~\cite{XiongCoupling2026}.

In the BGJ many-qudit construction, the same coupling acts on every
corresponding pair of signal and environmental qudits.  We allow different
pairwise couplings and collective Clifford interactions, while retaining
linear mixing of Weyl observables in phase space.  This makes it possible
to distinguish properties of the common coupling from those of Clifford
mixing itself.  The two registers have equal size and odd-prime local
dimension.  We require each nonzero phase-space direction at either
input to contribute to both outputs.  This regularity condition includes
positive BGJ convolution and will be stated as invertibility of the four blocks of the symplectic matrix.  Clifford interactions retain an exact multiplication
rule for the input characteristic functions, so their transmitted
observables can be analyzed directly through finite-dimensional
symplectic geometry.

The distinction is already visible with two qudits in each register.
For one fixed interaction, a product pure stabilizer environment gives
zero quantum capacity, whereas a Bell stabilizer environment permits
noiseless transmission of one qudit
(Example~\ref{ex:two-sector-code}).  Both environments have zero
modified relative entropy of magic and are related by an environmental
Clifford.  Thus an environmental Clifford need not preserve communication
at a fixed interaction.  The correlations here are within the environment;
the signal and environment remain initially independent.

We first determine when an environmental Clifford can be compensated
by operations on the signal input and the two outputs.  This
\emph{transport} condition and the classification of interactions under
separate Clifford operations at their ports are governed by a matrix
invariant, the cross-ratio.  For a specified pure stabilizer environment,
the same invariant gives the numbers of noiseless, dephasing, and
completely depolarizing factors in the induced channel.  Our rank formula
computes these numbers from the interaction and environmental stabilizers;
the proof constructs the corresponding encoding and decoding and yields
the quantum and private capacities.  The abstract stabilizer-channel
normal form was established by Looi and Griffiths~\cite{LooiGriffiths2011}.
Here its factors are determined directly from the coupling data.

Positive BGJ convolution occupies a distinguished part of this
classification.  Its local Clifford orbits are precisely those for
which every environmental Clifford is transportable, and also those
for which every pure stabilizer environment gives an entanglement-breaking
channel.  In this class, transport separates the stabilized coordinates
of an arbitrary environment, including a mixed state.  These coordinates
carry a classical label shared by receiver and environment; removing
them preserves quantum and private capacities.  The remaining channel
gives a private-capacity bound by the environmental modified relative
entropy of magic.  Outside this class, a suitable pure stabilizer
environment already supports noiseless quantum transmission.

The geometric ingredients are operator cross-ratios and symplectic
self-adjoint operators~\cite{Zelikin2006,GoldsteinGuralnick2007}.
Related commutation-form methods describe party-local Clifford
equivalence of stabilizer states~\cite{EnglbrechtKraftKraus2022},
and Sun, Jin, and Jin established fixed-environment channel Clifford
equivalences within their stabilizer-convolution family
\cite{SunJinJin2026}.
Our four-port classification includes the environmental coordinate
change and allows nonscalar cross-ratios.
Bu and Jaffe established an environmental MRM upper bound
on the quantum capacity of discrete beam-splitter
channels~\cite{BuJaffePRL2025}.
Our channel reduction applies to every positive convolution
coupling and to arbitrary pure or mixed environmental states,
and identifies the optimized private and coherent information
with those of a residual channel at every block length.
It extends the quantum-capacity bound to this coupling class
and yields a private-capacity bound with the same coefficient.
In the pure-environment beam-splitter setting, the latter
improves the coefficient in the bound of
Xiong et al.~\cite{XiongKimLongWu2026} from two to one.


Section~\ref{sec:preliminaries} defines the interactions and basic
identities.  Section~\ref{sec:general-transport} gives the transport
criterion, interaction classification, and pure-stabilizer channel
normal form.  Section~\ref{sec:factorization} treats environmental
reduction and capacity bounds in the positive-convolution class.
Longer proofs are collected in the appendices.

\section{Clifford interactions and convolution channels}
\label{sec:preliminaries}

Let $d$ be an odd prime, let $\F$ be the field with $d$ elements,
and let $\mathcal H_n=(\mathbb C^d)^{\otimes n}$.
All phase-space ranks and dimensions are over $\F$.
Logarithms are to base two.  We use von Neumann entropy and quantum
relative entropy in the conventions
\begin{align*}
 S(\rho)&=-\Tr(\rho\log_2\rho),\\
 D(\rho\Vert\omega)&=\Tr[\rho(\log_2\rho-\log_2\omega)],
\end{align*}
where the latter is infinite unless
$\operatorname{supp}\rho\subseteq\operatorname{supp}\omega$.

\subsection{Phase space and Clifford unitaries}
\label{sec:weyl-clifford}

The phase space is $V=V_n=\F^n\oplus\F^n$.  Its vectors are
$z=(p,q)$ and $z'=(p',q')$, written as columns, with
$p,q,p',q'\in\F^n$.  The standard symplectic form is
\[
 [z,z']=z^{\mathsf T}Jz'=p\cdot q'-q\cdot p',
 ~ J=\begin{pmatrix}0&I_n\\-I_n&0\end{pmatrix}.
\]
This bilinear form is alternating and nondegenerate.

For a subspace $W\subset V$, write
\[
 W^\perp=\{z\in V:[z,w]=0\text{ for every }w\in W\}.
\]
The restricted form on $W$ is nondegenerate when $W\cap W^\perp=\{0\}$.
Let $\zeta=e^{2\pi i/d}$.  The generalized Pauli phase and shift
matrices on one qudit are
\[
 Z|j\rangle=\zeta^j|j\rangle,~
 X|j\rangle=|j+1\rangle,~ j\in\F.
\]
For $p,q\in\F^n$, set
$Z(p)=\bigotimes_{j=1}^n Z^{p_j}$ and
$X(q)=\bigotimes_{j=1}^n X^{q_j}$.
The Weyl operator with phase-space label $z=(p,q)$ is
\begin{equation}
 w(z)=w(p,q)=\zeta^{-2^{-1}p\cdot q}Z(p)X(q),
 \label{eq:balanced-weyl}
\end{equation}
where $2^{-1}$ is the inverse of $2$ in $\F$.  Thus $p$ labels phase
changes and $q$ labels computational-basis shifts.  The Weyl operators satisfy
\begin{align*}
& w(z)w(z')=\zeta^{2^{-1}[z,z']}w(z+z'),\\
& w(z)^\dagger=w(-z),~
 \Tr[w(z)^\dagger w(z')]=d^n\delta_{z,z'}.
\end{align*}
They form an orthogonal operator basis.  The characteristic function
and its inversion formula are
\[
 \Xi_\rho(z)=\Tr[\rho w(-z)],~
 \rho=d^{-n}\sum_{z\in V}\Xi_\rho(z)w(z).
\]
For a linear map $M$ on $V$, its symplectic adjoint is
\[
 M^\sharp=J^{-1}M^{\mathsf T}J,
 ~ [Mx,y]=[x,M^\sharp y].
\]
The symplectic group is
\[
 \Sp(V)=\{F\in\operatorname{GL}(V):F^{\mathsf T}JF=J\};
\]
we also denote it by $\Sp(2n,d)$.

A Clifford unitary normalizes the Weyl operators.  Its conjugation
action has the form $Kw(z)K^\dagger\doteqph w(Fz)$ for some
$F\in\Sp(V)$, called its homogeneous action.  Here $\doteqph$
denotes equality up to a scalar phase.  A Clifford with homogeneous
action $F$ is a lift of $F$.
For each $F$ we choose an \emph{exact lift} $U_F$ satisfying
\begin{equation}
 U_Fw(z)U_F^\dagger=w(Fz),~\forall z\in V.
 \label{eq:exact-lift}
\end{equation}
To see its existence, extend $w(z)\mapsto w(Fz)$ linearly to the
Weyl basis.  The Weyl multiplication law and $[Fz,Fz']=[z,z']$
show that this map preserves products and adjoints; its inverse is
obtained from $F^{-1}$.  It is therefore a $*$-automorphism of the
full matrix algebra and is implemented by a unitary.  Two exact
lifts differ only by a scalar phase, since their quotient commutes
with the whole Weyl basis.  The same construction applies to the
joint phase space $V\oplus V$.
We write $\Ad_U(X)=UXU^\dagger$.

\emph{Symplectic sectors and quantum subsystems.}
Write $V=V_1\oplus^\perp V_2$ for a direct sum of symplectically
orthogonal subspaces, with $\dim V_i=2n_i$.
Each $V_i$ is nondegenerate: a vector in $V_i$ orthogonal to $V_i$
is also orthogonal to the other sector, hence to all of $V$.
Choose symplectic identifications $T_i:V_{n_i}\to V_i$ and put
$T(z_1,z_2)=T_1z_1+T_2z_2$.
Its Clifford realization obeys
\begin{equation}
 U_T\bigl(w_1(z_1)\otimes w_2(z_2)\bigr)U_T^\dagger
 =w\bigl(T(z_1,z_2)\bigr).
 \label{eq:sector-weyl-factorization}
\end{equation}
Thus the Weyl operators in the two sectors act on separate tensor
factors $\mathcal H_{n_1}\otimes\mathcal H_{n_2}$ in these coordinates.
For a state $\rho$, its sector reductions are the partial traces of
$U_T^\dagger\rho U_T$.  These factors may be logical subsystems,
rather than the original physical grouping of qudits.  Changing the
adapted basis within either sector acts by a local Clifford, so the
sector entropies do not depend on that choice.

\subsection{Interactions, channels, and basic identities}
\label{sec:interaction-definitions}

The signal input $A$ and environment input $B$ each contain $n$
qudits, as do the outputs $A',B'$.  In port order the joint symplectic
form is $J\oplus J$.  Let
\begin{gather*}
 S=\begin{pmatrix}A&B\\C&D\end{pmatrix}\in\Sp(V\oplus V),\\
 w_{AB}(z_A,z_B)=w(z_A)\otimes w(z_B),
\end{gather*}
and choose its exact lift $U_S$ as in Eq.~\eqref{eq:exact-lift}.
Other lifts with the same homogeneous action differ by local output
Weyl operators and a scalar phase, as shown in
Appendix~\ref{app:clifford-lifts}; their channel capacities are unchanged.
The columns of $S$ refer to the input ports and the rows to the
output ports.  We call $S$ \emph{regular} if all four blocks are
invertible, and write $\Sp_{\rm reg}(V\oplus V)$ for this set.
Equivalently, every nonzero Weyl direction supported on a single
input port has a nonzero component at each output.  The classification
below concerns regular interactions; the first two identities in
Proposition~\ref{prop:basic-identities} hold without this restriction.

For independent input states $\rho_A,\sigma_B$, define
\begin{equation}
 \rho\boxtimes_S\sigma
 :=\cN_{S,\sigma}(\rho)
 =\Tr_{B'}[U_S(\rho\otimes\sigma)U_S^\dagger].
 \label{eq:general-convolution}
\end{equation}
Figure~\ref{fig:stinespring} shows this channel as a two-input
interaction followed by discarding the environmental output.
For a purification $|\phi_\sigma\rangle_{BR}$, the Stinespring
isometry~\cite{Stinespring1955,Watrous2018} is
\[
 V_{S,\sigma}|\psi\rangle_A
 =(U_S\otimes I_R)(|\psi\rangle_A\otimes|\phi_\sigma\rangle_{BR}).
\]
The receiver and complementary channels are
\begin{align}
 \cN_{S,\sigma}(\rho)
 &=\Tr_{B'R}[V_{S,\sigma}\rho V_{S,\sigma}^\dagger],\nonumber\\
 \widehat\cN_{S,\sigma}(\rho)
 &=\Tr_{A'}[V_{S,\sigma}\rho V_{S,\sigma}^\dagger].
 \label{eq:true-complement}
\end{align}
The complementary output is $B'R$, including the purification
reference when $\sigma$ is mixed.
Fixing the pure ancillary state $|\phi_\sigma\rangle_{BR}$ turns
the two-input unitary $U_S$ into the isometry $V_{S,\sigma}$ from
$A$ to $A'B'R$.  Each channel use receives an independent copy
of $\sigma$; inputs may be entangled across uses.

\begin{figure}[tb]
\centering
\begin{tikzpicture}[>=Latex,every node/.style={font=\small},
                    line width=0.5pt]
 \node at (0,1) {$\rho_A$};
 \node at (0,0) {$\sigma_B$};
 \node[draw,minimum width=1.1cm,minimum height=1.6cm]
       (interaction) at (2,0.5) {$U_S$};
 \draw[->] (0.45,1)--node[above] {$A$}(1.45,1);
 \draw[->] (0.45,0)--node[above] {$B$}(1.45,0);
 \draw[->] (2.55,1)--node[above] {$A'$}(4.7,1)
       node[right] {$\cN_{S,\sigma}(\rho)$};
 \draw[->] (2.55,0)--node[above] {$B'$}(3.6,0);
 \draw (3.6,-0.3) rectangle (4.3,0.3);
 \draw (3.72,-0.12) arc[start angle=180,end angle=0,radius=0.23];
 \draw (3.95,-0.12)--(4.1,0.13);
 \draw[->,dashed] (4.3,0)--(4.95,0);
 \node[right,font=\footnotesize] at (4.95,0) {discard};
 \node at (2.8,-0.9) {$\equiv$};
 \node at (0,-1.8) {$\rho_A$};
 \node[draw,minimum width=1.5cm,minimum height=0.65cm]
       (channel) at (2,-1.8) {$\cN_{S,\sigma}$};
 \draw[->] (0.45,-1.8)--(channel.west);
 \draw[->] (channel.east)--(4.7,-1.8)
       node[right] {$\cN_{S,\sigma}(\rho)$};
\end{tikzpicture}
\caption{Two-input realization of $\cN_{S,\sigma}$.  Fixing the
environment $\sigma_B$ and discarding $B'$ gives the channel shown
below.  The meter denotes a complete projective measurement on
$B'$ with both the measured system and its outcome discarded,
equivalently the partial trace $\Tr_{B'}$.  No outcome is selected.}
\label{fig:stinespring}
\end{figure}

\emph{BGJ convolution as a special case.}
For
\[
 G=\begin{pmatrix}g_{00}&g_{01}\\g_{10}&g_{11}\end{pmatrix}
 \in\operatorname{GL}(2,\F),~ N=(\det G)^{-1},
\]
the BGJ interaction is the computational-basis permutation
\begin{equation}
 U_G|i,j\rangle
 =|Ng_{11}i-Ng_{10}j,\,-Ng_{01}i+Ng_{00}j\rangle,
 \label{eq:UG}
\end{equation}
with $i,j\in\F^n$ and the same $G$ acting on every coordinate pair.
The basis labels transform by $G^{-\mathsf T}\otimes I_n$, so
the grouped Weyl labels $p=(p_A,p_B)$ and $q=(q_A,q_B)$ transform by
$G\otimes I_n$ and $G^{-\mathsf T}\otimes I_n$, respectively.
In port order this gives $S_G$ with blocks
\begin{align}
 A_G&=\operatorname{diag}(g_{00}I_n,Ng_{11}I_n),\nonumber\\
 B_G&=\operatorname{diag}(g_{01}I_n,-Ng_{10}I_n),\nonumber\\
 C_G&=\operatorname{diag}(g_{10}I_n,-Ng_{01}I_n),\nonumber\\
 D_G&=\operatorname{diag}(g_{11}I_n,Ng_{00}I_n).
 \label{eq:convolution-blocks}
\end{align}
These transformations preserve $p\cdot q$ and hence the Weyl phase
in Eq.~\eqref{eq:balanced-weyl}.  Thus $U_G$ is an exact lift of
$S_G$, and we choose $U_{S_G}=U_G$.  We retain the BGJ channel notation
\[
 \Lambda_{G,\sigma}:=\cN_{S_G,\sigma},~
 \rho\boxtimes_G\sigma=\rho\boxtimes_{S_G}\sigma.
\]
With the same purification, also
$V_{G,\sigma}=V_{S_G,\sigma}$ and
$\widehat\Lambda_{G,\sigma}=\widehat\cN_{S_G,\sigma}$.

Following Ref.~\cite[Definition~38]{BuGuJaffeDVG2023}, an invertible
$G$ is nontrivial if at most one entry vanishes.  A nontrivial $G$ is
\emph{odd-parity positive} if $g_{01},g_{10}\ne0$ and
\emph{even-parity positive} if $g_{00},g_{11}\ne0$.
It is \emph{positive} when both conditions hold.
Equation~\eqref{eq:convolution-blocks} shows that $S_G$ is regular
exactly when $G$ is positive.  General regular $S$ need not have
this componentwise form.

For BGJ convolution, characteristic-function multiplication and
entropy monotonicity are established in
Ref.~\cite{BuGuJaffeDVG2023}.  We record their form for general
Clifford interactions, together with Weyl covariance, for use in
the stabilizer normal form and environmental capacity bounds.

\begin{proposition}[Basic convolution identities]
\label{prop:basic-identities}
Let $S\in\Sp(V\oplus V)$ and use its exact lift in
Eq.~\eqref{eq:general-convolution}.
\begin{enumerate}
 \item For all states $\rho,\sigma$,
 \begin{equation}
  \Xi_{\rho\boxtimes_S\sigma}(z)
  =\Xi_\rho(A^\sharp z)\Xi_\sigma(B^\sharp z).
  \label{eq:characteristic-product}
 \end{equation}
 \item For all $u,v\in V$, Weyl displacements satisfy
 \begin{align}
  \bigl(\Ad_{w(u)}\rho\bigr)\boxtimes_S
       \bigl(\Ad_{w(v)}\sigma\bigr)=\Ad_{w(Au+Bv)}(\rho\boxtimes_S\sigma).
  \label{eq:weyl-covariance}
 \end{align}
 \item If $A$ is invertible, $(I/d^n)\boxtimes_S\sigma=I/d^n$.
 If $B$ is invertible, the maximally mixed environment gives the
 completely depolarizing channel:
 \begin{equation}
  \cN_{S,I/d^n}(X)=\Tr(X)I/d^n.
  \label{eq:depolarizing-example}
 \end{equation}
 If both blocks are invertible, then
 \begin{equation}
  S(\rho\boxtimes_S\sigma)
  \ge\max\{S(\rho),S(\sigma)\}.
  \label{eq:entropy-increase}
 \end{equation}
\end{enumerate}
The same statements apply to direct sums of interactions, with
arbitrary correlated states within either input port.
\end{proposition}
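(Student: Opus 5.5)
The plan is to derive everything from the exact-lift relation \eqref{eq:exact-lift} on the joint phase space and the Weyl inversion formula. First I rewrite the characteristic function of the output:
\[
\Xi_{\rho\boxtimes_S\sigma}(z)=\Tr\bigl[U_S(\rho\otimes\sigma)U_S^\dagger\,(w(-z)\otimes I)\bigr].
\]
Here $w(-z)\otimes I=w_{AB}(-z,0)$. Conjugating by $U_S^\dagger$ gives $w_{AB}(-S^{-1}(z,0))$, since $U_S^\dagger w(\xi)U_S=w(S^{-1}\xi)$ by \eqref{eq:exact-lift} applied to $S^{-1}$ (the lift of $S^{-1}$ being $U_S^\dagger$ up to a phase, and exactly so by the defining property). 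Since $S$ is symplectic for $J\oplus J$, we have $S^{-1}=(J\oplus J)^{-1}S^{\mathsf T}(J\oplus J)$, whose first block column is $(A^\sharp,B^\sharp)^{\mathsf T}$. Thus $S^{-1}(z,0)=(A^\sharp z,B^\sharp z)$. The product form $w_{AB}=w\otimes w$ then factors the trace as $\Xi_\rho(A^\sharp z)\Xi_\sigma(B^\sharp z)$, which is item 1. No phase bookkeeping is needed, because the lift is exact.

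For item 2 I use $w(u)\otimes w(v)=w_{AB}(u,v)$ and $U_S w_{AB}(u,v)U_S^\dagger=w_{AB}(S(u,v))$. Here $S(u,v)=(Au+Bv,\,Cu+Dv)$, and this Weyl operator factors as $w(Au+Bv)\otimes w(Cu+Dv)$. Conjugation by the $B'$ factor disappears under $\Tr_{B'}$ by cyclicity of the partial trace for operators acting only on $B'$. This gives \eqref{eq:weyl-covariance}. An alternative check goes through item 1: displacement multiplies $\Xi$ by a character $\zeta^{[\cdot,\cdot]}$, and $[u,A^\sharp z]+[v,B^\sharp z]=[Au+Bv,z]$. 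Either route works; I would present the direct one.

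For item 3, the maximally mixed state has $\Xi=\delta_{z,0}$. If $A$ is invertible, then $A^\sharp$ is invertible, so the output characteristic function is $\delta_{A^\sharp z,0}\Xi_\sigma(B^\sharp z)=\delta_{z,0}$, and the output is $I/d^n$. The case of invertible $B$ is symmetric, using $\Xi_X(0)=\Tr X$ and linearity in $X$; note that item 1 extends to arbitrary operators $X$ by linearity.

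For the entropy bound with $A$ invertible I would use twirling. Averaging item 2 over $v$ uniformly in $V$ turns $\sigma$ into $I/d^n$. Since $B$ is invertible, $Bv$ ranges uniformly over $V$, so the average of $\Ad_{w(Bv)}(\rho\boxtimes_S\sigma)$ is $I/d^n$. That statement alone is not what I want. Instead I vary $u$ with $\sigma$ fixed. The map $\rho\mapsto\rho\boxtimes_S\sigma$ is a unital channel because $A$ is invertible (first part of item 3), so $S(\cN(\rho))\ge S(\rho)$ by monotonicity of relative entropy to $I/d^n$: $\log d^n-S(\cN\rho)=D(\cN\rho\Vert\cN(I/d^n))\le D(\rho\Vert I/d^n)$. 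Symmetrically, $\sigma\mapsto\rho\boxtimes_S\sigma$ is a channel in $\sigma$, the channel $\Tr_{A'}\circ\Ad_{U_S}(\rho\otimes\cdot)$ with the roles swapped. It is unital when $B$ is invertible, by the same computation, so $S(\rho\boxtimes_S\sigma)\ge S(\sigma)$. The main point to check is exactly this unitality in each argument, which item 1 delivers.

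The closing remark about direct sums follows because all arguments used only the block structure and product Weyl operators across ports. Correlations within a port only enter $\Xi_\rho$ and $\Xi_\sigma$ as joint functions, and the factorization across the two ports is unaffected.
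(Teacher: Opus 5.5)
Your proposal is correct and follows essentially the same route as the paper. You conjugate $w(-z)\otimes I$ back through the exact lift using the first block column $(A^\sharp,B^\sharp)$ of $S^{-1}$, push $w(u)\otimes w(v)$ forward and trace out $B'$, and obtain unitality of each one-argument map from item 1 before applying relative-entropy data processing.

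One slip needs fixing. The map $\sigma\mapsto\rho\boxtimes_S\sigma$ is $\Tr_{B'}\circ\Ad_{U_S}(\rho\otimes\cdot)$, not $\Tr_{A'}\circ\Ad_{U_S}(\rho\otimes\cdot)$. The latter is the environmental output, and its unitality would depend on $D$ rather than $B$. Your unitality computation is nevertheless the one for the correct map, so the argument itself stands.
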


\begin{proof}
Symplecticity gives
\[
 S^{-1}=(J\oplus J)^{-1}S^{\mathsf T}(J\oplus J)
 =\begin{pmatrix}A^\sharp&C^\sharp\\B^\sharp&D^\sharp\end{pmatrix}.
\]
Conjugating $w(-z)\otimes I$ backwards through the exact lift gives
\begin{align*}
 \Xi_{\rho\boxtimes_S\sigma}(z)
 &=\Tr[(\rho\otimes\sigma)
       (w(-A^\sharp z)\otimes w(-B^\sharp z))]\\
 &=\Xi_\rho(A^\sharp z)\Xi_\sigma(B^\sharp z),
\end{align*}
proving the first identity.  The forward Weyl action gives
\[
 U_S(w(u)\otimes w(v))
 =\bigl(w(Au+Bv)\otimes w(Cu+Dv)\bigr)U_S.
\]
Taking the partial trace proves the second identity.  This covariance
also holds for any other Clifford lift with homogeneous action $S$,
since the extra phases cancel in conjugation.

For the third part, $\Xi_{I/d^n}(z)=\delta_{z,0}$ and the first
identity give the two maximally mixed output statements.  Linearity
extends the second one from density operators to every operator $X$.
Fixing either input defines a CPTP map $\mathcal T$ of the other.
Under the stated invertibility assumptions, these maps preserve
$I/d^n$.  Relative-entropy data processing yields
\begin{align*}
 n\log_2d-S(\mathcal T(\omega))
 &=D(\mathcal T(\omega)\Vert\mathcal T(I/d^n))\\
 &\le D(\omega\Vert I/d^n)
 =n\log_2d-S(\omega).
\end{align*}
Applying this to the two maps proves Eq.~\eqref{eq:entropy-increase}.
For several interactions, regroup their direct sum into the two
input ports.  The same proof applies to the joint input states,
without a product assumption within a port.
\end{proof}

Weyl covariance holds for every interaction here.  Transport of a
general environmental Clifford requires a further condition on $S$,
which is determined in Sec.~\ref{sec:general-transport}.

\subsection{Stabilizer states and MRM}
\label{sec:stabilizer-notation}

A subspace of $V$ is isotropic if the symplectic form vanishes on it,
and Lagrangian if it is isotropic of dimension $n$; then $L^\perp=L$.
The stabilizer subspace of a state is
\[
 S_\rho=\{z\in V:|\Xi_\rho(z)|=1\}.
\]
It is isotropic, and the phases of $\Xi_\rho$ on it form a character
specifying a joint stabilizer eigenspace
\cite{BuGuJaffeDVG2023,BuGuJaffeTesting2025}.
A minimal stabilizer-projection state (MSPS) is the normalized
projector onto a joint eigenspace of a commuting Weyl subgroup,
including $I/d^n$ for the trivial subgroup
\cite[Definition~6]{BuGuJaffeDVG2023}.  Pure stabilizer states are
the rank-one members of this set.

For a Lagrangian $L$, the pure stabilizer state with trivial character is
\[
 \psi_L^0=d^{-n}\sum_{t\in L}w(t),~
 \Xi_{\psi_L^0}(z)=\mathbf1_L(z),
\]
where $\mathbf1_L$ is the indicator of $L$.  Every pure stabilizer
state $\psi_L$ with this subspace has the form
$\psi_L=\Ad_{w(u)}(\psi_L^0)$ for some $u\in V$:
nondegeneracy makes $u\mapsto[u,\cdot]|_L$ onto $L^*$, so Weyl
displacements realize all characters of $L$.

The mean state $\cM(\rho)$ retains precisely the unit-modulus coefficients:
\begin{equation}
 \Xi_{\cM(\rho)}(z)=
 \begin{cases}
  \Xi_\rho(z),&z\in S_\rho,\\
  0,&z\notin S_\rho.
 \end{cases}
 \label{eq:mean-state}
\end{equation}
The modified relative entropy of magic (MRM) used here is the
relative-entropy distance to MSPS introduced in
Refs.~\cite{BuGuJaffePNAS2023,BuGuJaffeDVG2023}:
\[
 \MRM(\rho):=\min_{\omega\in\mathrm{MSPS}}D(\rho\Vert\omega).
\]
The mean state is an MSPS and the unique minimizer, giving
\[
 \MRM(\rho)=D(\rho\Vert\cM(\rho))
 =S(\cM(\rho))-S(\rho).
\]
These facts are proved in Ref.~\cite[Theorem~22]{BuGuJaffeDVG2023};
the support argument is also given in Appendix~\ref{app:normal-form}.
The MRM also decomposes into entropy increments along stabilizer
dephasings and a terminal classical entropy deficit
\cite{XiongKimZhangFeiWu2026}.
Every MSPS is a convex mixture of pure stabilizer states, but such a
mixture need not be an MSPS.  Thus positive MRM does not imply that
a state lies outside the convex hull of pure stabilizer states.
Its dependence on exact stabilizer constraints also makes it
discontinuous in general.  Clifford conjugation preserves MRM.
The tensor-product characteristic function gives
$\cM(\rho\otimes\omega)=\cM(\rho)\otimes\cM(\omega)$,
so MRM is additive.

\subsection{Communication quantities}

Capacities are measured in bits or qubits per channel use.
For a channel $\cN$, a complementary channel $\widehat\cN$,
and an ensemble $\mathsf E=\{p_u,\rho_u\}$, let
\[
 \chi(\mathsf E;\cN)
 =S\!\left(\sum_up_u\cN(\rho_u)\right)
  -\sum_up_uS(\cN(\rho_u)).
\]
The optimized private and coherent information are
\begin{align*}
 P^{(1)}(\cN)&=\sup_{\mathsf E}
 [\chi(\mathsf E;\cN)-\chi(\mathsf E;\widehat\cN)],\\
 Q^{(1)}(\cN)&=\sup_\rho I_c(\rho,\cN),\\
 I_c(\rho,\cN)&=S(\cN(\rho))-S(\widehat\cN(\rho)).
\end{align*}
Their regularizations give the private and quantum capacities
\cite{CaiWinterYeung2004,Devetak2005}:
\begin{equation}
 P(\cN)=\sup_{r\ge1}\frac{P^{(1)}(\cN^{\otimes r})}{r},
 ~
 Q(\cN)=\sup_{r\ge1}\frac{Q^{(1)}(\cN^{\otimes r})}{r}.
 \label{eq:PQ-capacity}
\end{equation}
For comparison, write $\chi^*(\cN)=\sup_{\mathsf E}\chi(\mathsf E;\cN)$
and $C(\cN)=\sup_{r\ge1}\chi^*(\cN^{\otimes r})/r$ for the
classical capacity~\cite{Holevo1998,SchumacherWestmoreland1997}.
We use the standard inequalities $Q\le P\le C$.
A channel is entanglement breaking (EB) if its output is separable
from any reference for every joint input state
\cite{HorodeckiShorRuskai2003}.

\section{Clifford transport and port-local equivalence}
\label{sec:general-transport}
\label{sec:transport}

Throughout this section, $S$ is an arbitrary regular bipartite
Clifford interaction as defined in Sec.~\ref{sec:interaction-definitions}.
No componentwise BGJ form is assumed.  We first determine which
environmental changes can be compensated at the other ports, then
classify the interactions and compute their channels for pure
stabilizer environments.

An environmental Clifford is \emph{transportable} if its insertion
at $B$ can be replaced by a Clifford at $A$ and separate Cliffords
at $A',B'$, with the interaction fixed.  Here \emph{port-local}
permits arbitrary Clifford operations within each port, including
operations entangling its qudits.

\subsection{The transport criterion}

Define the cross-ratio and its symplectic centralizer by
\begin{align*}
 \mathscr R(S)&=B^{-1}AC^{-1}D,\\
 \Cent_{\Sp(V)}(R)&=\{F\in\Sp(V):FR=RF\}.
\end{align*}

\begin{theorem}[Centralizer criterion for homogeneous symplectic transport]
\label{thm:centralizer-transport}
Let \(S\) be regular and let \(F\in\Sp(V)\).  There exist
\(X_F,Y_F,Z_F\in\Sp(V)\) satisfying
\begin{equation}
 S(I\oplus F)=(X_F\oplus Y_F)S(Z_F\oplus I)
 \label{eq:general-transport}
\end{equation}
if and only if
\begin{equation*}
 F\mathscr R(S)=\mathscr R(S)F.
\end{equation*}
When they exist, the compensating transformations are unique:
\begin{align}
 X_F&=BFB^{-1},~
 Y_F=DFD^{-1},
 \nonumber\\
 Z_F&=C^{-1}DF^{-1}D^{-1}C.
 \label{eq:transport-compensators}
\end{align}
Consequently, the homogeneous symplectic parts that can be transported form
\(\Cent_{\Sp(V)}(\mathscr R(S))\).
\end{theorem}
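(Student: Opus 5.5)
Multiplying out the block matrices in Eq.~\eqref{eq:general-transport} reduces the claim to four block equations. The left side is
\[
S(I\oplus F)=\begin{pmatrix}A&BF\\C&DF\end{pmatrix},
\]
and the right side is
\[
(X_F\oplus Y_F)S(Z_F\oplus I)=\begin{pmatrix}X_FAZ_F&X_FB\\Y_FCZ_F&Y_FD\end{pmatrix}.
\]
Comparing blocks gives the system
\[
 BF=X_FB,~ DF=Y_FD,~ A=X_FAZ_F,~ C=Y_FCZ_F.
\]
Since $S$ is regular, all four blocks are invertible.

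For necessity, the second-column equations force $X_F=BFB^{-1}$ and $Y_F=DFD^{-1}$, which proves uniqueness of these two. The fourth equation then forces $Z_F=C^{-1}Y_F^{-1}C=C^{-1}DF^{-1}D^{-1}C$, so $Z_F$ is also unique. This yields Eq.~\eqref{eq:transport-compensators}. Substituting into the remaining equation $A=X_FAZ_F$ gives
\[
A=BFB^{-1}AC^{-1}DF^{-1}D^{-1}C.
\]
Left-multiplying by $B^{-1}$ and right-multiplying by $C^{-1}D$ turns this into $\mathscr R(S)=F\mathscr R(S)F^{-1}$, i.e.\ $F\mathscr R(S)=\mathscr R(S)F$. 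All steps are reversible.

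For sufficiency, assume $F$ commutes with $\mathscr R(S)$ and define $X_F,Y_F,Z_F$ by Eq.~\eqref{eq:transport-compensators}. Reversing the computation above shows that all four block equations hold, so the matrix identity \eqref{eq:general-transport} holds. What remains, and what I expect to be the only nontrivial point, is that these three matrices are symplectic rather than merely invertible. I would not check each conjugate directly, since $BFB^{-1}$ need not be symplectic for arbitrary $B$. Instead I would derive symplecticity from the matrix identity itself. Rearranging gives
\[
X_F\oplus Y_F=S(I\oplus F)(Z_F^{-1}\oplus I)S^{-1}=S(Z_F^{-1}\oplus F)S^{-1}.
\]
Write $Z':=Z_F^{-1}$ and $M:=Z'\oplus F$.

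The plan is then to show that $Z'$ is symplectic using the fact that $S(Z'\oplus F)S^{-1}$ is block diagonal. Take $x,x'$ in the $A$-input. Then $Sx=(Ax,Cx)$, and the symplectic form on $V\oplus V$ gives $[Ax,Ax']+[Cx,Cx']=[x,x']$. Block diagonality gives $SMx=(X_FAx,Y_FCx)$, so $[SMx,SMx']=[X_FAx,X_FAx']+[Y_FCx,Y_FCx']$. I would try to show this equals $[x,x']$, which amounts to showing that $M$ is symplectic on the first summand. The more direct route is to use $X_F=BFB^{-1}$ together with the symplectic relations among the blocks.

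Symplecticity of $S$ gives
\[
A^\sharp A+C^\sharp C=I,~ B^\sharp B+D^\sharp D=I,~ A^\sharp B+C^\sharp D=0,
\]
and the analogous relations from $S^{-1}$, namely $AA^\sharp+BB^\sharp=I$, $CC^\sharp+DD^\sharp=I$, and $AC^\sharp+BD^\sharp=0$. From $AC^\sharp=-BD^\sharp$ one gets $B^{-1}A=-D^\sharp(C^\sharp)^{-1}$. This should give
\[
\mathscr R(S)=-D^\sharp(C^\sharp)^{-1}C^{-1}D=-D^\sharp(CC^\sharp)^{-1}D,
\]
so $\mathscr R(S)$ is self-adjoint with respect to $[\cdot,\cdot]$. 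The goal is to verify $(BFB^{-1})^\sharp BFB^{-1}=I$, which is equivalent to $F^\sharp(B^\sharp B)F=B^\sharp B$. So it suffices to show that $F$ commutes with $B^\sharp B$ in the twisted sense $F^\sharp B^\sharp BF=B^\sharp B$.

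Expressing $B^\sharp B$ through $\mathscr R(S)$ is the main obstacle. I would use $BB^\sharp=I-AA^\sharp$ and $B^{-1}A$, and expect an identity of the form
\[
(B^\sharp B)^{-1}=I-\mathscr R(S)\quad\text{or}\quad (B^\sharp B)^{-1}=I-\mathscr R(S)^{-1}.
\]
Granted such an identity, $F^\sharp=F^{-1}$ and $F\mathscr R(S)=\mathscr R(S)F$ together give the needed relation. The same argument handles $D$ via $D^\sharp D=I-B^\sharp B$. Symplecticity of $Z_F$ then follows because $Z_F$ is a product of conjugates that can be rewritten using $C^{-1}D=B^{-1}A\,\mathscr R(S)^{-1}\cdots$. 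Alternatively, it follows from the block identity once $X_F$ and $Y_F$ are symplectic, since $Z_F=S^{-1}$-conjugate data in the first column.
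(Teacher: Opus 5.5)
Your block equations, the forced candidates, and the reduction of $A=X_FAZ_F$ to $F\mathscr R(S)=\mathscr R(S)F$ match the paper's argument exactly. Your plan for sufficiency, showing that $F$ preserves $B^\sharp B$ (i.e.\ the form $B^{\mathsf T}JB$), is also the paper's route. The gap is the only nontrivial step: that $B^\sharp B$ is a function of $\mathscr R(S)$. You leave it as a guess between two candidates, \emph{granted} rather than proved. Without it nothing shows the candidates are symplectic, which is all sufficiency requires; the paper proves exactly this in its regular-block lemma.

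It follows in two lines from relations you listed. Taking $\sharp$ of $A^\sharp B+C^\sharp D=0$ gives $B^\sharp A+D^\sharp C=0$. With $H=B^\sharp B$ and $\mathscr R=\mathscr R(S)$,
\[
 H\mathscr R=B^\sharp AC^{-1}D=-D^\sharp D=H-I,
\]
hence $B^\sharp B=(I-\mathscr R)^{-1}$. This is your first guess. The second, $I-\mathscr R^{-1}$, is $(D^\sharp D)^{-1}$, because $D^\sharp D=I-H=-\mathscr R(I-\mathscr R)^{-1}$. Since $F$ commutes with $H$ and $F^\sharp=F^{-1}$, you get $X_F^\sharp X_F=(B^\sharp)^{-1}F^{-1}HFB^{-1}=I$, and likewise for $Y_F$.

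For $Z_F$, your second suggestion is valid once stated precisely: $Z_F\oplus I=S^{-1}(X_F\oplus Y_F)^{-1}S(I\oplus F)\in\Sp(V\oplus V)$, so $Z_F^{\mathsf T}JZ_F=J$. This avoids the paper's third identity $M^{\mathsf T}JM=-J\mathscr R$ for $M=C^{-1}D$.

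In fact you already had a shorter complete proof. Since $M^\sharp=D^\sharp(C^\sharp)^{-1}$, your (correct) formula $\mathscr R=-D^\sharp(CC^\sharp)^{-1}D$ is literally $M^\sharp M=-\mathscr R$. As $Z_F=MF^{-1}M^{-1}$, commutation gives $Z_F^\sharp Z_F=(M^\sharp)^{-1}F(M^\sharp M)F^{-1}M^{-1}=I$. Your rearrangement $X_F\oplus Y_F=S(Z_F^{-1}\oplus F)S^{-1}$ then makes $X_F$ and $Y_F$ symplectic at once. One form identity thus replaces the paper's three. The unfinished bilinear-form computation and the remark about $C^{-1}D=B^{-1}A\mathscr R^{-1}\cdots$ can be dropped.
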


The proof is in Appendix~\ref{app:centralizer-proof}.
We next lift the matrix identity to the Stinespring isometry defined in Sec. II B.  This keeps the receiver and the
full complementary output $B'R$ under the same input transformation.

\begin{corollary}[Transport of the receiver--complementary pair]
\label{cor:general-pair}
Let \(S\) be regular.  An environmental Clifford \(K\), with homogeneous
part \(F\), has a factorized transport through \(U_S\) if and only if
\(F\in\Cent_{\Sp(V)}(\mathscr R(S))\).
In this case there are Cliffords \(K_A,K_{A'},K_{B'}\) with
\begin{equation}
 U_S(I\otimes K)\doteqph
 (K_{A'}\otimes K_{B'})U_S(K_A\otimes I).
 \label{eq:general-unitary-transport}
\end{equation}
For \(\sigma'=K\sigma K^\dagger\), the corresponding isometries satisfy
\begin{equation*}
 V_{S,\sigma'}\doteqph
 (K_{A'}\otimes K_{B'}\otimes I_R)V_{S,\sigma}K_A,
\end{equation*}
when \(|\phi_{\sigma'}\rangle=(K\otimes I_R)|\phi_\sigma\rangle\).
\end{corollary}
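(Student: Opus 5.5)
The proof reduces the unitary statement to the homogeneous criterion of Theorem~\ref{thm:centralizer-transport}, using that a Clifford is determined by its homogeneous action up to Weyl operators and a phase (Appendix~\ref{app:clifford-lifts}). We read ``factorized transport'' as the existence of Cliffords $K_A,K_{A'},K_{B'}$ satisfying Eq.~\eqref{eq:general-unitary-transport}.

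\emph{Necessity.} Suppose such Cliffords exist, with homogeneous parts $Z,X,Y$. Conjugating $w_{AB}(z_A,z_B)$ through both sides of Eq.~\eqref{eq:general-unitary-transport} and comparing the Weyl labels, with phases discarded, gives
\[
S(I\oplus F)=(X\oplus Y)S(Z\oplus I)
\]
in $\Sp(V\oplus V)$. Here we use that a Clifford conjugation sends $w(z)$ to a multiple of $w(\text{label})$, and that distinct labels give linearly independent Weyl operators. Theorem~\ref{thm:centralizer-transport} then forces $F\in\Cent_{\Sp(V)}(\mathscr R(S))$.

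\emph{Sufficiency.} Take $X_F,Y_F,Z_F$ from Eq.~\eqref{eq:transport-compensators} and their exact lifts. Both $U_S(I\otimes K)$ and $(U_{X_F}\otimes U_{Y_F})U_S(U_{Z_F}\otimes I)$ are Cliffords on $AB$ with the same homogeneous action. Hence their quotient $W$ is a Clifford with trivial homogeneous action. Such a $W$ satisfies $Ww(z)W^\dagger=\chi(z)w(z)$ for a phase function $\chi$. The Weyl multiplication law makes $\chi$ a character of $V\oplus V$, so $\chi(z)=\zeta^{[u,z]}$ for some $u$. The argument of Appendix~\ref{app:clifford-lifts} then shows that $W$ equals a Weyl operator $w(u)$ up to a phase. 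Since $w(u)=w(u_{A'})\otimes w(u_{B'})$ factorizes over the output ports, it can be absorbed on the left. We set
\[
K_{A'}=w(u_{A'})U_{X_F},\qquad K_{B'}=w(u_{B'})U_{Y_F},\qquad K_A=U_{Z_F},
\]
which gives Eq.~\eqref{eq:general-unitary-transport}.

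\emph{Isometries.} Tensor Eq.~\eqref{eq:general-unitary-transport} with $I_R$ and apply both sides to $|\psi\rangle_A\otimes|\phi_\sigma\rangle_{BR}$. On the left, $(I\otimes K\otimes I_R)$ acting on $|\phi_\sigma\rangle$ produces $|\phi_{\sigma'}\rangle$, so the left side is $V_{S,\sigma'}|\psi\rangle$. On the right, $K_A$ acts on $|\psi\rangle$ and the output Cliffords act on $A'B'$, so the right side is $(K_{A'}\otimes K_{B'}\otimes I_R)V_{S,\sigma}K_A|\psi\rangle$. The global phase carries over unchanged.

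\emph{Main obstacle.} The delicate step is showing that a Clifford with trivial homogeneous action is a Weyl operator up to phase, so that the correction splits across the ports $A'$ and $B'$ rather than remaining a joint operation. This rests on the character argument above and on the nondegeneracy of the symplectic form on $V\oplus V$.
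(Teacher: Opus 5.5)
Your proposal is correct and follows essentially the paper's route: necessity by comparing homogeneous actions and invoking Theorem~\ref{thm:centralizer-transport}; sufficiency via exact lifts of $X_F,Y_F,Z_F$ together with the fact that a Clifford with trivial homogeneous action is a Weyl operator up to phase; and the isometry relation by applying the unitary identity to $|\psi\rangle\otimes|\phi_\sigma\rangle$. The only difference is bookkeeping. The paper first writes $K\doteqph w(v)U_F$ and pushes $w(v)$ through by Weyl covariance, obtaining the explicit output displacements $w(Bv)\otimes w(Dv)$. You instead apply the displacement lemma once on $V\oplus V$ to the discrepancy between the two sides, which yields the same correction (indeed $u=(Bv,Dv)$) without computing it.
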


\begin{proof}
Necessity follows from Theorem~\ref{thm:centralizer-transport}
by taking homogeneous actions.  For sufficiency, write
$K\doteqph w(v)U_F$.  Exact lifts implement the transport identity
for $F$, while Weyl covariance sends the displacement to
$w(Bv)\otimes w(Dv)$ at the outputs.  Absorbing these two Weyl
operators into the output Cliffords gives
Eq.~\eqref{eq:general-unitary-transport}.
Appendix~\ref{app:clifford-lifts} proves the decomposition of $K$
and this construction.  Applying the identity to an input vector
and the chosen purification gives the isometry relation.  Taking the
two partial traces yields
\begin{align*}
 \cN_{S,\sigma'}
 &=\Ad_{K_{A'}}\circ\cN_{S,\sigma}\circ\Ad_{K_A},\\
 \widehat\cN_{S,\sigma'}
 &=\Ad_{K_{B'}\otimes I_R}\circ
 \widehat\cN_{S,\sigma}\circ\Ad_{K_A}.
\end{align*}
\end{proof}

The input and receiver unitaries can be absorbed into encoding and
decoding, while complementary output unitaries preserve privacy leakage.
Thus transport preserves the finite-block optimizations and capacities.
Figure~\ref{fig:transport} shows the identity.

\begin{figure*}[t]
\centering
\begin{tikzpicture}[
  >=Latex,
  every node/.style={font=\small},
  gate/.style={draw,minimum width=0.85cm,minimum height=0.48cm,
               fill=blue!5,inner sep=3pt},
  interaction/.style={draw,minimum width=1cm,minimum height=1.25cm,
                     fill=blue!5,inner sep=3pt}]
 \node (lA) at (0,0) {$A$};
 \node (lB) at (0,-0.7) {$B$};
 \node (lR) at (0,-1.45) {$R$};
 \node[gate] (lK) at (1.3,-0.7) {$K$};
 \node[interaction] (lU) at (3,-0.35) {$U_S$};
 \node (lAp) at (5.1,0) {$A'$};
 \node (lBp) at (5.1,-0.7) {$B'$};
 \node (lRp) at (5.1,-1.45) {$R$};
 \draw[->] (lA)--(2.5,0);
 \draw[->] (lB)--(lK);
 \draw[->] (lK)--(2.5,-0.7);
 \draw[->] (3.5,0)--(lAp);
 \draw[->] (3.5,-0.7)--(lBp);
 \draw[->] (lR)--node[above,font=\scriptsize]{$I_R$}(lRp);
 \node[font=\large] at (6.1,-0.35) {$\doteqph$};
 \node (rA) at (7.1,0) {$A$};
 \node (rB) at (7.1,-0.7) {$B$};
 \node (rR) at (7.1,-1.45) {$R$};
 \node[gate] (rKA) at (8.3,0) {$K_A$};
 \node[interaction] (rU) at (10,-0.35) {$U_S$};
 \node[gate] (rKAp) at (11.65,0) {$K_{A'}$};
 \node[gate] (rKBp) at (11.65,-0.7) {$K_{B'}$};
 \node (rAp) at (13.2,0) {$A'$};
 \node (rBp) at (13.2,-0.7) {$B'$};
 \node (rRp) at (13.2,-1.45) {$R$};
 \draw[->] (rA)--(rKA);
 \draw[->] (rKA)--(9.5,0);
 \draw[->] (rB)--(9.5,-0.7);
 \draw[->] (10.5,0)--(rKAp);
 \draw[->] (10.5,-0.7)--(rKBp);
 \draw[->] (rKAp)--(rAp);
 \draw[->] (rKBp)--(rBp);
 \draw[->] (rR)--node[above,font=\scriptsize]{$I_R$}(rRp);
\end{tikzpicture}
\caption{Clifford transport at the four ports.  When the homogeneous action
of $K$ commutes with $\mathscr R(S)$, the two circuits agree up to an
overall phase.  The environmental operation is replaced by one signal-input
and two output Cliffords.  The purification reference $R$ is unchanged.
The operation on $B'$ expresses unitary equivalence of the complementary
output $B'R$ and preserves its information leakage.}
\label{fig:transport}
\end{figure*}

\subsection{A complete invariant for regular interactions}

The transport criterion concerns environmental changes at a fixed
interaction.  We now compare interactions under independent Clifford
operations at all four ports.  Their homogeneous actions define the group action
\begin{equation*}
 (P,Q;U,T)\boldsymbol{\cdot}S
 =(P\oplus Q)S(U\oplus T)^{-1}.
\end{equation*}
The corresponding cross-ratio changes by conjugation with \(T\).
Its possible values are constrained by the symplectic block identities:
they are self-adjoint with respect to the symplectic form, and neither
zero nor one is an eigenvalue.  We therefore write
\begin{equation*}
 \mathfrak X(V)=\{R\in\operatorname{GL}(V):
 I-R\in\operatorname{GL}(V),\ R^{\mathsf T}J=JR\}.
\end{equation*}

\begin{theorem}[Cross-ratio completeness for regular local orbits]
\label{thm:local-orbit-classification}
Two regular actions \(S_1,S_2\) are related by
\[
 S_2=(P\oplus Q)S_1(U\oplus T)^{-1},
 ~ P,Q,U,T\in\Sp(V),
\]
if and only if their cross-ratios are symplectically conjugate:
\[
 \mathscr R(S_2)=T\mathscr R(S_1)T^{-1}
 \quad\text{for some }T\in\Sp(V).
\]
Every \(R\in\mathfrak X(V)\) is realized by a regular interaction.
Consequently, \(S\mapsto\mathscr R(S)\) induces the bijection
\[
 \Sp_{\rm reg}(V\oplus V)/\Sp(V)^4
 \ \longrightarrow\
 \mathfrak X(V)/\Sp(V),
\]
where the quotient on the right is by symplectic conjugation.

\end{theorem}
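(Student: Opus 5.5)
The plan is to reduce the whole classification to a single identity expressing the cross-ratio through one block:
\begin{equation*}
 B^\sharp B=(I-\mathscr R(S))^{-1}.
\end{equation*}
From this identity we get necessity, sufficiency and realizability in turn.

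\emph{Step 1: block relations and necessity.} From $S^{-1}=\begin{pmatrix}A^\sharp&C^\sharp\\B^\sharp&D^\sharp\end{pmatrix}$ (computed in the proof of Proposition~\ref{prop:basic-identities}) we read off $S^{-1}S=I$ and $SS^{-1}=I$. These give
\begin{gather*}
 A^\sharp A+C^\sharp C=I,\quad B^\sharp B+D^\sharp D=I,\quad A^\sharp B+C^\sharp D=0,\\
 AA^\sharp+BB^\sharp=I,\quad CC^\sharp+DD^\sharp=I.
\end{gather*}
Under the action, the blocks transform as
\begin{equation*}
 A\mapsto PAU^{-1},\quad B\mapsto PBT^{-1},\quad C\mapsto QCU^{-1},\quad D\mapsto QDT^{-1}.
\end{equation*}
Multiplying out, the factors $P$, $U$, $Q$ cancel in pairs, so $\mathscr R\mapsto T\mathscr R T^{-1}$. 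This proves necessity.

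To see $\mathscr R(S)\in\mathfrak X(V)$ and to obtain the identity above, I would substitute $D=-C^{-\sharp}A^\sharp B$ and $(C^\sharp C)^{-1}=(I-A^\sharp A)^{-1}$. This gives
\begin{equation*}
 \mathscr R=-B^{-1}A(I-A^\sharp A)^{-1}A^\sharp B=-B^{-1}AA^\sharp(I-AA^\sharp)^{-1}B .
\end{equation*}
Then using $AA^\sharp=I-BB^\sharp$ yields $\mathscr R=I-(B^\sharp B)^{-1}$. Hence $\mathscr R^\sharp=\mathscr R$, which is equivalent to $\mathscr R^{\mathsf T}J=J\mathscr R$. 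Moreover $I-\mathscr R$ is invertible, and $\mathscr R$ is invertible as a product of invertible blocks.

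\emph{Step 2: sufficiency.} If $\mathscr R(S_2)=T\mathscr R(S_1)T^{-1}$, replace $S_1$ by $S_1(I\oplus T)^{-1}$, so we may assume the two cross-ratios are equal and look for $P,Q,U$ with $T=I$. The block equations force
\begin{equation*}
 P=B_2B_1^{-1},\qquad Q=D_2D_1^{-1},\qquad U=A_2^{-1}PA_1 .
\end{equation*}
The remaining equation $C_2=QC_1U^{-1}$ is equivalent to $D_2D_1^{-1}C_1A_1^{-1}B_1=C_2A_2^{-1}B_2$. This is exactly the inverse of the equality of cross-ratios, multiplied on the left by $D_2$.

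It remains to show that $P,Q,U$ are symplectic. By the identity, $B_2^\sharp B_2=B_1^\sharp B_1$, which gives $P^\sharp P=I$, so $P\in\Sp(V)$. Now
\begin{equation*}
 P\oplus Q=S_2(U^{-1}\oplus I)S_1^{-1}
\end{equation*}
holds by construction. Since $S_1,S_2$ are symplectic, $P\oplus Q$ is symplectic if and only if $U\oplus I$ is, that is, if and only if $U$ is. To close the argument I would check $Q$ directly: $D^\sharp D=I-B^\sharp B=-\mathscr R(I-\mathscr R)^{-1}$ is also determined by $\mathscr R$. So $Q^\sharp Q=I$, hence $P\oplus Q$ is symplectic, and therefore $U$ is symplectic too.

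\emph{Step 3: realization, the main obstacle.} Given $R\in\mathfrak X(V)$, put $H=(I-R)^{-1}$. The condition $H^\sharp=H$ says that $JH$ is alternating, and it is nondegenerate. Since all nondegenerate alternating forms over $\F$ are congruent, there is an invertible $B$ with $B^{\mathsf T}JB=JH$, that is, $B^\sharp B=H$.

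Next, $I-BB^\sharp=B(I-H)B^{-1}$ is symplectic self-adjoint. It is invertible because $I-H=-R(I-R)^{-1}$ and $R$ is invertible. By the same congruence argument there is an invertible $A$ with $AA^\sharp=I-BB^\sharp$. Then the top block row $(A\ B)$ satisfies the row relation of Step 1, so it is a surjection $V\oplus V\to V$ whose transpose is an isometric embedding of $(V,J)$. By Witt's theorem, or equivalently by completing along the orthogonal complement of its image, it extends to some $S\in\Sp(V\oplus V)$.

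Regularity of this $S$ follows from the relations $C^\sharp C=I-A^\sharp A=A^{-1}BB^\sharp A$ and $D^\sharp D=I-H$, both invertible. Step 1 then gives $\mathscr R(S)=I-H^{-1}=R$. Combining the three steps yields the stated bijection. The delicate points are the congruence and Witt-extension argument and the algebra behind the identity $B^\sharp B=(I-\mathscr R)^{-1}$; everything else is bookkeeping.
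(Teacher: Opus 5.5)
Your proof is correct. It shares the paper's necessity computation and its key identity $B^\sharp B=(I-\mathscr R)^{-1}$, which is Eq.~\eqref{eq:HB-R}, but it differs in how that identity is derived, in the sufficiency argument, and in the realization.

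\emph{The key identity.} You derive it by $\sharp$-calculus and the push-through relation $A(I-A^\sharp A)^{-1}=(I-AA^\sharp)^{-1}A$. The paper instead substitutes $A=B\mathscr RM^{-1}$ and $C=DM^{-1}$ with $M=C^{-1}D$, and then uses that $B^{\mathsf T}JB$ is alternating.

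\emph{Sufficiency.} After absorbing $T$, your $P,Q$ agree with the paper's. You fix $U$ from the $A$-block and check the $C$-block; the paper does the reverse, with $U=M'TM^{-1}$. You then get $U\in\Sp(V)$ for free from $U\oplus I=S_2^{-1}(P\oplus Q)S_1$. You wrote $U^{-1}\oplus I$ in that display, which is a harmless slip. This lets you avoid the third form identity $M^{\mathsf T}JM=-J\mathscr R$ of Eq.~\eqref{eq:HL-R} entirely.

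\emph{Realization.} You construct only the top row $(A\ B)$ by two congruences and complete it by Witt extension. You then read off invertibility of $C,D$ from $C^\sharp C=I-A^\sharp A$ and $D^\sharp D=I-B^\sharp B$. For the equation $AA^\sharp=I-BB^\sharp$, ``the same congruence argument'' needs a one-line twist: the congruence gives $Y$ with $Y^\sharp Y=I-BB^\sharp$, and then $A=Y^\sharp$ works. Your construction is the row version of the column completion the paper sketches in its remark on the Goldstein--Guralnick double cosets. In the proof itself, the paper builds $B,D,M$ from the three forms in Eq.~\eqref{eq:forms-from-R}, sets $A=BRM^{-1}$ and $C=DM^{-1}$, and verifies the three block conditions directly.

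\emph{Trade-off.} Your route is shorter and needs only two of the three form identities. The paper's route is fully explicit, giving closed formulas for all blocks and compensators. The $M$-identity it establishes is also reused for $Z_F$ in Theorem~\ref{thm:centralizer-transport} and for condition~3 of Corollary~\ref{cor:universal-transport}.
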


Exact Clifford lifts convert this classification to ordered local
Clifford equivalence.  Appendix~\ref{app:local-orbit-proof}
proves both equivalence and realization of every
\(R\in\mathfrak X(V)\).

For individual channels, the environmental coordinate change must
also be included.  If
\(S_2=(P\oplus Q)S_1(U\oplus T)^{-1}\), suitable Clifford lifts give
\begin{equation}
 \cN_{S_2,\sigma}
 =\Ad_{K_P}\circ
 \cN_{S_1,K_T^\dagger\sigma K_T}\circ\Ad_{K_U^\dagger}.
 \label{eq:family-equivalence}
\end{equation}
Choose the purifications consistently as
\[
 |\phi_{K_T^\dagger\sigma K_T}\rangle
 =(K_T^\dagger\otimes I_R)|\phi_\sigma\rangle.
\]
The full complementary channels then obey the same relation with
\(\Ad_{K_Q}\otimes\id_R\) at the output.  Thus the channel families agree after a fixed Clifford relabeling of
the environment.  Holding the environment fixed need not preserve this
channel equivalence.
Different interaction invariants need not give different channels
at a fixed environment.  A maximally mixed environment, for example,
gives the completely depolarizing channel for every regular interaction.

Through the Choi state, four-port Clifford equivalence is a case of
the stabilizer-state equivalence problem in
Ref.~\cite[Theorem~15]{EnglbrechtKraftKraus2022}.
For regular interactions, Appendix~\ref{app:local-orbit-proof}
identifies it with the symplectic double-coset classification of
Ref.~\cite[Proposition~3.1]{GoldsteinGuralnick2007}.

For one qudit per port the invariant is scalar: a \(2\times2\) matrix
satisfying \(R^{\mathsf T}J=JR\) is a scalar in odd characteristic.
There are then \(d-2\) regular ordered classes.
Their representatives and stabilizers under the local
symplectic action are given in
Ref.~\cite[Theorems~3.2 and~3.3]{Pahari2026}.
For \(n\ge2\), nonscalar cross-ratios occur and cannot be converted
to a BGJ componentwise coupling by port-local Clifford operations.

\subsection{Universal transport and convolution}

Within the classification above, positive convolution belongs to the
class where every environmental Clifford can be absorbed at the
other ports.  We characterize this class before studying its induced
channels.

Transport of every \(F\in\Sp(V)\) requires the conjugations in
Eq.~\eqref{eq:transport-compensators} to preserve \(\Sp(V)\).
Define the conformal symplectic group by
\begin{align*}
 \CSp(V):=\{M\in\operatorname{GL}(V):{}&
 M^{\mathsf T}JM=\mu(M)J,
 \\[-2pt]
 &\mu(M)\in\F^\times\}.
\end{align*}
These matrices need not be symplectic; their conjugations preserve
the symplectic group.

\begin{corollary}[Universal transport]
\label{cor:universal-transport}
For a regular action \(S\), the following are equivalent:
\begin{enumerate}
 \item every \(F\in\Sp(V)\) is transportable through \(S\);
 \item \(\mathscr R(S)=rI_V\) for some \(r\in\F\setminus\{0,1\}\);
 \item \(B,D,C^{-1}D\in\CSp(V)\).
\end{enumerate}
Every regular interaction with scalar cross-ratio is ordered
port-locally symplectically equivalent to a positive componentwise
convolution interaction.
\end{corollary}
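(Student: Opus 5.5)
The plan is to prove (1)$\Leftrightarrow$(2) with Theorem~\ref{thm:centralizer-transport}, then (2)$\Leftrightarrow$(3) from the symplectic block identities, and finally the normal-form claim with Theorem~\ref{thm:local-orbit-classification}.

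\emph{(1)$\Leftrightarrow$(2).} By Theorem~\ref{thm:centralizer-transport}, condition (1) says exactly that $\Cent_{\Sp(V)}(\mathscr R(S))=\Sp(V)$, i.e.\ $\mathscr R(S)$ commutes with every symplectic matrix. The natural module $V$ of $\Sp(V)$ is absolutely irreducible. One way to see this: the transvections $z\mapsto z+c[v,z]v$ lie in $\Sp(V)$, and a matrix commuting with all of them must preserve every line $\F v$. A linear map preserving every line is a scalar $rI_V$. The membership $\mathscr R(S)\in\mathfrak X(V)$ forces $r\neq0,1$; I will check this from the block identities below. Conversely, a scalar commutes with everything, so Theorem~\ref{thm:centralizer-transport} gives transport of every $F$.

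\emph{(2)$\Leftrightarrow$(3).} I first extract the block identities from $S^{\mathsf T}(J\oplus J)S=J\oplus J$:
\[
A^{\mathsf T}JA+C^{\mathsf T}JC=J,\quad B^{\mathsf T}JB+D^{\mathsf T}JD=J,\quad A^{\mathsf T}JB+C^{\mathsf T}JD=0.
\]
In sharp notation these read $A^\sharp A+C^\sharp C=I$, $B^\sharp B+D^\sharp D=I$ and $A^\sharp B=-C^\sharp D$.

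For (3)$\Rightarrow$(2), suppose $B^\sharp B=\beta I$, $D^\sharp D=\delta I$ and $(C^{-1}D)^\sharp(C^{-1}D)=\gamma I$. The third identity gives $A=-(C^\sharp D)^{\sharp\,-1}\ldots$; more directly, $A=-B^{\sharp-1}D^\sharp C$. Hence
\[
\mathscr R=B^{-1}AC^{-1}D=-B^{-1}B^{\sharp-1}D^\sharp D=-(B^\sharp B)^{-1}D^\sharp D,
\]
which is a scalar $-\delta/\beta$.

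This formula $\mathscr R=-(B^\sharp B)^{-1}D^\sharp D$ holds in general. Combined with $B^\sharp B+D^\sharp D=I$, it gives $\mathscr R=-(B^\sharp B)^{-1}(I-B^\sharp B)$, so $I-\mathscr R=(B^\sharp B)^{-1}$. Therefore $\mathscr R$ is symplectically self-adjoint and has neither $0$ nor $1$ as an eigenvalue. This also supplies the check that $r\neq0,1$.

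For (2)$\Rightarrow$(3), suppose $\mathscr R=rI$. Then $B^\sharp B=(1-r)^{-1}I$, so $B\in\CSp(V)$. Next, $D^\sharp D=I-B^\sharp B$ is a nonzero scalar because $D$ is invertible, so $D\in\CSp(V)$. For $C^{-1}D$, I use $S^{-1}$ from Proposition~\ref{prop:basic-identities}: the identity $SS^{-1}=I$ yields $CC^\sharp+DD^\sharp=I$. Since $D$ is conformal, $DD^\sharp$ is the same scalar as $D^\sharp D$, so $CC^\sharp$ is a scalar. It follows that $C^\sharp C$ is a scalar as well, because $C^\sharp C=C^{-1}(CC^\sharp)C$. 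Products and inverses of $\CSp$ elements lie in $\CSp$, so $C^{-1}D\in\CSp(V)$. The main bookkeeping risk in this step is the sign convention for $\sharp$ and $A$, which I will verify against the formula for $S^{-1}$ in Proposition~\ref{prop:basic-identities}.

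\emph{Normal form.} By Theorem~\ref{thm:local-orbit-classification}, I only need a positive componentwise $G$ with $\mathscr R(S_G)=rI$, since scalars are fixed by conjugation. From Eq.~\eqref{eq:convolution-blocks}, the first diagonal block of $\mathscr R(S_G)$ is $g_{00}g_{11}/(g_{01}g_{10})$. The second diagonal block is $(Ng_{11})(Ng_{00})/\bigl((-Ng_{10})(-Ng_{01})\bigr)$, which is the same number. Hence $\mathscr R(S_G)=\frac{g_{00}g_{11}}{g_{01}g_{10}}I$.

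Given $r\ne0,1$, take $g_{01}=g_{10}=1$, $g_{11}=1$ and $g_{00}=r$. Then $\det G=r-1\neq0$ and all entries are nonzero, so $G$ is positive and $S_G$ realizes the cross-ratio $rI$. The hardest part of the whole argument is the irreducibility step in (1)$\Rightarrow$(2), and the transvection argument above handles it cleanly.
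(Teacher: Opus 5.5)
Your proof is correct and follows the paper's route: the centralizer criterion with the transvection argument for (1)$\Leftrightarrow$(2), the symplectic block identities for (2)$\Leftrightarrow$(3), and the representative with $g_{00}=r$ and the other three entries equal to $1$, combined with Theorem~\ref{thm:local-orbit-classification} at $T=I_V$, for the convolution claim. The differences are minor. You rederive $I-\mathscr R=(B^\sharp B)^{-1}$ in $\sharp$ notation, you get $C^{-1}D\in\CSp(V)$ from $CC^\sharp+DD^\sharp=I$ and the group property of $\CSp(V)$, and you use $B,D$ rather than $C^{-1}D$ for (3)$\Rightarrow$(2); the paper instead reads all three multipliers directly off Lemma~\ref{lem:regular-block-identities}.
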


The proof is given in Appendix~\ref{app:universal-proof}.

For the BGJ blocks in Eq.~\eqref{eq:convolution-blocks},
the cross-ratio is
\begin{equation}
 \mathscr R(S_G)=r(G)I,~
 r(G)=\frac{g_{00}g_{11}}{g_{01}g_{10}}\in\F\setminus\{0,1\}.
 \label{eq:scalar-rG}
\end{equation}

The scalar parameter also classifies positive componentwise couplings
under role-preserving multiplicative basis relabelings with the
environment fixed~\cite{XiongCoupling2026}.  Theorem~\ref{thm:local-orbit-classification}
allows independent Clifford operations at all four ports; the resulting
channel relation includes the environmental change in
Eq.~\eqref{eq:family-equivalence}.

\begin{corollary}[Environmental Clifford covariance for convolution]
\label{prop:environment-covariance}
Let \(G\) be positive and let \(K\) be an \(n\)-qudit Clifford.
There are Cliffords \(K_A,K_{A'},K_{B'}\) such that
\begin{equation*}
 U_G(I\otimes K)\doteqph
 (K_{A'}\otimes K_{B'})U_G(K_A\otimes I).
\end{equation*}
For \(\sigma'=K\sigma K^\dagger\), with the purifications chosen as in
Corollary~\ref{cor:general-pair},
\begin{align*}
 \Lambda_{G,\sigma'}&=\Ad_{K_{A'}}\circ
 \Lambda_{G,\sigma}\circ\Ad_{K_A},
 \\
 \widehat\Lambda_{G,\sigma'}&=(\Ad_{K_{B'}}\otimes\id_R)\circ
 \widehat\Lambda_{G,\sigma}\circ\Ad_{K_A}.
\end{align*}
\end{corollary}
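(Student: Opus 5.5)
This is a direct specialization of Corollary~\ref{cor:general-pair} to $S=S_G$. The plan is to check its hypotheses for positive $G$ and then transfer the general conclusions into the BGJ notation.

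\textbf{Regularity.} Positivity means $g_{01},g_{10}\neq0$ and $g_{00},g_{11}\neq0$; also $N=(\det G)^{-1}\neq0$. By Eq.~\eqref{eq:convolution-blocks}, each of the four blocks of $S_G$ is then a diagonal matrix of nonzero scalars times identities, so all four are invertible. Hence $S_G\in\Sp_{\rm reg}(V\oplus V)$, and $U_G$ is an exact lift of $S_G$, as already noted after Eq.~\eqref{eq:convolution-blocks}.

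\textbf{Centralizer condition.} By Eq.~\eqref{eq:scalar-rG}, $\mathscr R(S_G)=r(G)I_V$. For completeness I would verify this by multiplying the diagonal blocks coordinatewise. On the first block, $g_{01}^{-1}g_{00}\,g_{10}^{-1}g_{11}$. On the second block, $(-Ng_{10})^{-1}Ng_{11}\,(-Ng_{01})^{-1}Ng_{00}$; the two minus signs cancel and the factors of $N$ cancel, so both blocks give $g_{00}g_{11}/(g_{01}g_{10})$. A scalar commutes with every matrix, so $\Cent_{\Sp(V)}(\mathscr R(S_G))=\Sp(V)$. Equivalently, this is condition (2) of Corollary~\ref{cor:universal-transport}.

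\textbf{Applying the general corollary.} The homogeneous part $F$ of any Clifford $K$ therefore lies in the centralizer. Corollary~\ref{cor:general-pair} then supplies Cliffords $K_A,K_{A'},K_{B'}$ with
\[
U_{S_G}(I\otimes K)\doteqph(K_{A'}\otimes K_{B'})U_{S_G}(K_A\otimes I).
\]
Since $U_{S_G}=U_G$, this is the stated unitary identity. Choosing $|\phi_{\sigma'}\rangle=(K\otimes I_R)|\phi_\sigma\rangle$ gives the isometry relation, and taking $\Tr_{B'R}$ and $\Tr_{A'}$ gives the two channel identities for $\Lambda_{G,\sigma'}=\cN_{S_G,\sigma'}$ and $\widehat\Lambda_{G,\sigma'}$. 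The overall phase cancels under conjugation.

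\textbf{Where care is needed.} The step most worth checking is the lift issue: the corollary is stated for the exact lift $U_S$, and here the chosen lift $U_G$ must coincide with it. This is guaranteed because the paper fixes $U_{S_G}=U_G$. Any other lift differs from $U_G$ only by output Weyl operators (Appendix~\ref{app:clifford-lifts}), and these would simply be absorbed into $K_{A'}$ and $K_{B'}$. Apart from this, the argument consists of routine verification.
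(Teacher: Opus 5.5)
Your proposal is correct and follows the paper's proof: both note that $\mathscr R(S_G)=r(G)I$ is scalar, so it commutes with the homogeneous action of every Clifford $K$, and both then invoke Corollary~\ref{cor:general-pair} with $U_{S_G}=U_G$ and the purification $(K\otimes I_R)|\phi_\sigma\rangle$. Your explicit regularity check and your blockwise computation of the cross-ratio only spell out what the paper takes directly from Eqs.~\eqref{eq:convolution-blocks} and~\eqref{eq:scalar-rG}.
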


\begin{proof}
The scalar cross-ratio \(\mathscr R(S_G)=r(G)I\) in
Eq.~\eqref{eq:scalar-rG} commutes with the homogeneous action of
every Clifford \(K\).  Corollary~\ref{cor:general-pair} therefore
gives the unitary transport identity and both channel relations.
\end{proof}

\subsection{Stabilizer environments and noiseless subsystems}
\label{sec:stabilizer-general}
\label{sec:heterogeneous}

The preceding classification concerns the interaction.  For a specified
pure stabilizer environment, transmission also depends on the position
of its stabilizer subspace relative to the cross-ratio.
Pure stabilizer environments give entanglement-breaking channels for
positive convolution~\cite[Theorem~1]{XiongKimLongWu2026}.
For a general regular interaction, the induced isometry is a
stabilizer-code isometry.  The existence of its noiseless, dephasing,
and depolarizing factors follows from the normal form of Looi and
Griffiths~\cite[Theorem~9]{LooiGriffiths2011}.
The next theorem computes their multiplicities, and hence the
capacities, from the cross-ratio and the environmental stabilizer
subspace.  The distinction is whether the Weyl directions retained
by the channel commute: noncommuting pairs give quantum degrees of
freedom that survive the interaction.

On one qudit, let \(\id_d\) be the identity channel,
\(\Delta_Z(X)=\sum_{j\in\F}\langle j|X|j\rangle\ketbra{j}{j}\)
complete computational-basis dephasing, and
\(\Omega_d(X)=\Tr(X)I/d\) the completely depolarizing channel.

\begin{theorem}[Cross-ratio rank and stabilizer-channel capacities]
\label{thm:stabilizer-regular}
Let \(S\) be regular, and let \(\psi_L\) be a pure stabilizer
environment with Lagrangian stabilizer subspace \(L\subset V\) and
arbitrary stabilizer character.  Let \(E\) be a \(2n\times n\)
matrix whose columns form a basis of \(L\), and set
\[
 q=\frac12\operatorname{rank}(E^{\mathsf T}J\mathscr R(S)E).
\]
Then \(q\) is an integer with \(0\le q\le\lfloor n/2\rfloor\),
and input and output Clifford conjugations transform the induced
receiver channel into
\begin{equation}
 \id_d^{\otimes q}\otimes
 \Delta_Z^{\otimes(n-2q)}\otimes\Omega_d^{\otimes q}.
 \label{eq:stabilizer-channel-normal}
\end{equation}
Its capacities are
\[
 Q=P=q\log_2d,~ C=(n-q)\log_2d.
\]
Equivalently, the integer $q$ is determined by
\[
 2q=n-\dim\bigl(L\cap\mathscr R(S)^{-1}L\bigr).
\]
\end{theorem}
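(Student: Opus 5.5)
We would work entirely in the Heisenberg picture through the characteristic-function identity, Eq.~\eqref{eq:characteristic-product}. With a pure stabilizer environment $\psi_L$, the factor $\Xi_{\psi_L}(B^\sharp z)$ has unit modulus when $B^\sharp z\in L$ and vanishes otherwise. The receiver channel therefore transmits exactly the output Weyl directions in
\[
 W:=(B^\sharp)^{-1}L\subset V,
\]
an $n$-dimensional subspace. Each such direction is pulled back to the input direction $A^\sharp z$ and multiplied by a character. All other output directions are erased.

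So the channel is determined, up to Weyl covariance (Proposition~\ref{prop:basic-identities}(2)), by the linear map $A^\sharp|_W$ and by the restriction of the symplectic form to $W$. The character of $\psi_L$ only introduces a Weyl displacement, which can be absorbed into an output Weyl correction. It therefore does not affect the normal form.

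First, we compute the radical of $[\cdot,\cdot]|_W$. For $z=(B^\sharp)^{-1}Ex$ and $z'=(B^\sharp)^{-1}Ey$ we have
\[
 [z,z']=x^{\mathsf T}E^{\mathsf T}(B^{-1})^{\mathsf T}J B^{-1}... ,
\]
and we then rewrite $(B^\sharp)^{-1}$ as $JBJ^{-1}$-type expressions. The symplectic block identities of $S\in\Sp$ include $A^\sharp C+B^\sharp D$-type relations, equivalently $S^{-1}$ from the proof of Proposition~\ref{prop:basic-identities}. We would use them to convert the Gram matrix of $[\cdot,\cdot]$ on $W$ into $E^{\mathsf T}J\,M\,E$, where $M$ is built from $B^{-1}A$ and $C^{-1}D$. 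Since $L$ is Lagrangian, $E^{\mathsf T}JE=0$. This should let us replace $M$ by $\mathscr R(S)$ (or by $I-\mathscr R(S)$, which differs by a term vanishing on $L$).

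This block manipulation is the step I expect to be the main obstacle: getting the exact combination right so that the Gram matrix becomes $E^{\mathsf T}J\mathscr R(S)E$ up to invertible factors. I would first try the symmetric pairing $[Bx,By]$ rewritten through $D^\sharp B=\ldots$ identities from $S^{\mathsf T}(J\oplus J)S=J\oplus J$.

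Once the Gram matrix is in this form, its rank $2q$ is even because the form is alternating, and $q\le\lfloor n/2\rfloor$ because $\dim W=n$. The kernel of $E^{\mathsf T}J\mathscr R E$ consists of those $x$ with $\mathscr R Ex\in L^\perp=L$. This gives
\[
 2q=n-\dim(L\cap\mathscr R^{-1}L),
\]
using the invertibility of $\mathscr R$, which holds since $\mathscr R\in\mathfrak X(V)$.

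Next, we decompose $W=W_0\oplus^\perp N$, where $W_0=W\cap W^\perp$ is the radical of dimension $n-2q$ and $N$ is nondegenerate of dimension $2q$. We then choose a symplectic basis of $V$ at the output adapted to this decomposition. The vectors of $N$ span $q$ hyperbolic pairs. $W_0$ is isotropic and is paired with $n-2q$ dual vectors in $N^\perp$, and $q$ further hyperbolic pairs complete the basis. In these coordinates, the transmitted set looks like full $(p,q)$ on $q$ qudits, only $p$ on $n-2q$ qudits, and nothing on $q$ qudits.

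On the input side, $A^\sharp$ is invertible and $[A^\sharp z,A^\sharp z']=[z,z']$ on $W$. This holds because $\Xi_\rho(A^\sharp z)\Xi_\sigma(B^\sharp z)$ must define a valid channel; alternatively it follows directly from the block identities, since $[B^\sharp z,B^\sharp z']=0$ on $W$ as $L$ is isotropic. By Witt's theorem, the isometry $A^\sharp|_W$ extends to an element of $\Sp(V)$. An input Clifford lift then aligns the input coordinates with the same normal form.

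The resulting channel has characteristic-function multiplier equal to $\mathbf 1$ on the transmitted set up to phases, which are removed by Weyl corrections. This is exactly the multiplier of $\id_d^{\otimes q}\otimes\Delta_Z^{\otimes(n-2q)}\otimes\Omega_d^{\otimes q}$, giving Eq.~\eqref{eq:stabilizer-channel-normal}. This reproduces Looi--Griffiths with explicit counts.

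Finally, the capacities follow from this tensor decomposition. $\Delta_Z$ and $\Omega_d$ are entanglement breaking with $Q=P=0$, while $\Delta_Z$ has classical capacity $\log_2 d$ and $\Omega_d$ has zero classical capacity. The upper bounds use the fact that the whole channel is degradable: the complementary channel of this normal form is a product of $\Delta_Z$-complements and full erasure, so coherent and private information are additive. Alternatively, $P$ and $Q$ are bounded by the $\log_2 d^q$ dimension of the noiseless part after composing with entanglement-breaking parts. Additivity of $C$ for identity$\otimes$EB gives $C=(n-q)\log_2 d$.
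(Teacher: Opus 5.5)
The gap is in the capacity upper bound. The normal-form channel is \emph{not} degradable once $q\ge1$, because your description of its complement leaves out the depolarizing factors. The complement of $\id_d$ is indeed the erasure. The complement of $\Omega_d$, however, hands Eve the entire input, together with half of a maximally entangled pair. So Eve holds the inputs of the $q$ depolarized factors, while Bob's output does not depend on them, and no degrading map can exist. Additivity of coherent or private information therefore does not follow from degradability. Your fallback sentence about ``composing with entanglement-breaking parts'' is only an assertion. Private information is not additive in general, so $P\le q\log_2d$ needs an argument at every block length.

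The paper supplies it with the Stinespring isometry $|i,x,j\rangle\mapsto|i,x\rangle_{B_QB_C}\otimes|\Phi_{d^q}\rangle_{B_DE_0}\otimes|x,j\rangle_{E_CE_D}$. Here Bob and Eve share the dephased label $X$, and Bob's $B_D$ register is a fixed state. Take any $r$-use ensemble with classical index $Y$. Data processing on Eve's copy of $X$ and the chain rule give
\[
 \chi(Y;B)-\chi(Y;E)\le\chi(Y;B)-I(Y;X)=\sum_xp_x\,\chi(Y;B_Q\mid X=x)\le rq\log_2d,
\]
since $B_Q$ has dimension $d^{rq}$. Combined with the $q$ noiseless factors, this gives $Q=P=q\log_2d$ with no additivity assumption. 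Your classical-capacity step is fine, whether via Shor's additivity for EB channels or via the dimension $d^{r(n-q)}$ of Bob's nonconstant registers.

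The rest of your argument follows the paper's proof. You use the same retained subspace $W=(B^\sharp)^{-1}L$, the same form-preserving map $A^\sharp|_W$ from isotropy of $B^\sharp W=L$, and the same kernel argument via $L^\perp=L$. Witt's theorem replaces the paper's explicit basis completion, which is fine.

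The step you flagged as the main obstacle closes in one line. The $(2,1)$ and $(2,2)$ blocks of $S^{-1}S=I$ give $B^\sharp A=-D^\sharp C$ and $B^\sharp B+D^\sharp D=I$. Hence $B^\sharp(B-AC^{-1}D)=I$, that is, $(B^\sharp)^{-1}=B(I-\mathscr R)$. For $z=(B^\sharp)^{-1}Ex$ and $z'=(B^\sharp)^{-1}Ey$ this gives
\[
 [z,z']=[B(I-\mathscr R)Ex,(B^\sharp)^{-1}Ey]=[(I-\mathscr R)Ex,Ey]=-[\mathscr REx,Ey].
\]
So the Gram matrix on $W$ is exactly $(E^{\mathsf T}J\mathscr RE)^{\mathsf T}=-E^{\mathsf T}J\mathscr RE$, with no extra invertible factors. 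The paper reaches the same formula through the identity $B^{\mathsf T}JB=J(I-\mathscr R)^{-1}$.
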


The proof is given in Appendix~\ref{app:stabilizer-normal-proof}.
The relative position of \(L\) and its image under the
cross-ratio determines the number of noiseless qudits after Clifford
encoding and decoding.  In the normal form, \(q\) qudits are
transmitted unchanged, \(n-2q\) retain only a classical basis label,
and \(q\) are completely depolarized.  The interaction
orbit depends on the conjugacy class of \(\mathscr R\), whereas the
induced pure-stabilizer channel, up to input and output Cliffords,
depends on \((\mathscr R,L)\) through \(q\).

The normal form also identifies when the channel destroys all
input entanglement.  We first give the criterion for a specified
stabilizer environment, then ask when it holds for every such
environment.  The latter property characterizes universal transport.

\begin{corollary}[Entanglement breaking and universal transport]
\label{cor:universal-stabilizer-eb}
Let $S$ be regular and write $\mathscr R=\mathscr R(S)$.

\emph{(a) Fixed environment.}  For a pure stabilizer environment
$\psi_L$, with $q$ as in Theorem~\ref{thm:stabilizer-regular},
\begin{align*}
 \cN_{S,\psi_L}\text{ is EB}
 &\iff q=0\iff \mathscr R L=L\\
 &\iff Q(\cN_{S,\psi_L})=0.
\end{align*}

\emph{(b) All pure stabilizer environments.}  The following are equivalent:
\begin{enumerate}
 \item every environmental Clifford is transportable;
 \item \(\mathscr R(S)\) is scalar;
 \item \(\cN_{S,\psi_L}\) is entanglement breaking for every pure
       stabilizer environment \(\psi_L\);
 \item \(Q(\cN_{S,\psi_L})=0\) for every pure stabilizer
       environment \(\psi_L\).
\end{enumerate}
\end{corollary}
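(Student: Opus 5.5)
Part (a) follows from Theorem~\ref{thm:stabilizer-regular}. Conjugation by Cliffords at the input and receiver preserves both the EB property and the quantum capacity, so each property can be read off the normal form \eqref{eq:stabilizer-channel-normal}. If $q=0$, the normal form is $\Delta_Z^{\otimes n}$, which is a measure-and-prepare channel and hence EB. An EB channel has $Q=0$. If $q\ge1$, the normal form contains a factor $\id_d$, so the channel is not EB and $Q=q\log_2 d>0$. This gives $\text{EB}\iff q=0\iff Q=0$.

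For the geometric condition we use $2q=n-\dim(L\cap\mathscr R^{-1}L)$. Thus $q=0$ holds exactly when $L\subseteq\mathscr R^{-1}L$. Since $\mathscr R$ is invertible and $\dim L=n$, this inclusion is equivalent to $\mathscr R^{-1}L=L$, i.e.\ to $\mathscr R L=L$.

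For part (b), the equivalence (1)$\iff$(2) is Corollary~\ref{cor:universal-transport}. Part (a) applied to each $L$ gives (3)$\iff$(4), and it also shows that both are equivalent to
\begin{center}
(5): $\mathscr R L=L$ for every Lagrangian $L\subset V$.
\end{center}
The implication (2)$\Rightarrow$(5) is immediate, since a scalar preserves every subspace. The remaining work is (5)$\Rightarrow$(2): an operator $R\in\mathfrak X(V)$ that preserves every Lagrangian subspace must be scalar.

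This last step is the main obstacle, and we prove it in two moves.

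\emph{Every nonzero vector is an eigenvector.} Take $v\ne0$. Its orthogonal complement $v^\perp$ is a hyperplane containing $v$, and $v^\perp/\langle v\rangle$ is symplectic of dimension $2n-2$. Lagrangians of $V$ containing $v$ correspond to Lagrangians of this quotient, and the intersection of all such Lagrangians is $\langle v\rangle$. The reason is that in a symplectic space of positive dimension the Lagrangians have trivial common intersection: for a nonzero $x$, choose $y$ with $[x,y]\ne0$ and extend $\langle y\rangle$ to a Lagrangian, which then cannot contain $x$. For $n=1$ the quotient is zero and $\langle v\rangle$ is itself the unique Lagrangian through $v$. Since $R$ preserves each of these Lagrangians, it preserves their intersection, so $Rv\in\langle v\rangle$.

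\emph{All eigenvalues coincide.} A linear map for which every nonzero vector is an eigenvector is scalar. Indeed, if $Ru=\alpha u$, $Rw=\beta w$ with $u,w$ independent, then $u+w$ is also an eigenvector, and comparing coefficients forces $\alpha=\beta$.

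Hence $R=rI$. The membership $R\in\mathfrak X(V)$ then gives $r\ne0,1$, which is exactly condition (2).
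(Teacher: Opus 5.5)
Your proposal is correct and follows essentially the paper's argument. Part (a) is read off the normal form of Theorem~\ref{thm:stabilizer-regular}, and (4)$\Rightarrow$(2) shows that every nonzero vector is an eigenvector of $\mathscr R$ using Lagrangians through it, then concludes scalarity; the only differences are cosmetic. You obtain $\mathscr RL=L$ from the dimension formula rather than from $E^{\mathsf T}J\mathscr RE=0$, and you locate $\mathscr Rv$ via the intersection of Lagrangians in $v^\perp/\langle v\rangle$, whereas the paper extends $\operatorname{Span}\{x,y\}$ for each $y\in x^\perp$ to get $\mathscr Rx\in(x^\perp)^\perp$.
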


Appendix~\ref{app:universal-stabilizer-proof} first proves part~(a).
It then uses the fact that a linear map preserving every Lagrangian
subspace is scalar.  Thus every regular interaction with nonscalar
cross-ratio admits a pure stabilizer environment with positive
quantum capacity.

For two distinct scalar sectors, the rank formula of
Theorem~\ref{thm:stabilizer-regular} becomes an entanglement-entropy
formula: the quantum and private capacities equal the pure stabilizer
environment's entanglement between the sectors.

\begin{corollary}[Entanglement between two cross-ratio sectors]
\label{cor:two-sector-entanglement}
Let $S$ be regular and suppose
\[
 V=V_1\oplus^\perp V_2,~ \dim V_i=2n_i,
\]
with
\[
 \mathscr R(S)=r_1I_{V_1}\oplus r_2I_{V_2},~ r_1\ne r_2.
\]
Choose symplectic isomorphisms $T_i:\F^{2n_i}\to V_i$ and the
Clifford $U_T$ associated with
$T(z_1,z_2)=T_1z_1+T_2z_2$, as in
Eq.~\eqref{eq:sector-weyl-factorization}.
For a pure stabilizer state $\psi$, define
\[
 \widetilde\psi=U_T^\dagger\psi U_T,~
 \psi_{B_i}=\Tr_{B_{3-i}}\widetilde\psi.
\]
Then
\[
 Q(\cN_{S,\psi})=P(\cN_{S,\psi})
 =S(\psi_{B_1})=S(\psi_{B_2}).
\]
Thus each unit $\log_2d$ of entanglement across these environmental
subsystems gives one noiseless qudit in the channel normal form.
\end{corollary}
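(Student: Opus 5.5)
We reduce everything to the rank formula of Theorem~\ref{thm:stabilizer-regular}. That theorem gives $Q=P=q\log_2 d$ with $2q=n-\dim(L\cap\mathscr R^{-1}L)$, so it suffices to prove
\[
 \dim(L\cap\mathscr R^{-1}L)=n-2\,S(\psi_{B_1})/\log_2 d,
\]
together with $S(\psi_{B_1})=S(\psi_{B_2})$. The latter is immediate because $\widetilde\psi$ is pure. Here $L$ is the Lagrangian stabilizer subspace of $\psi$. Since $\mathscr R$ is invertible, $L\cap\mathscr R^{-1}L=\mathscr R^{-1}(\mathscr RL\cap L)$, so we only need $\dim(L\cap\mathscr RL)$.

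\textbf{Step 1: identify $L\cap\mathscr R L$.} Let $\pi_i$ be the projections of $V=V_1\oplus^\perp V_2$ onto the sectors. Then $\mathscr R=r_1\pi_1+r_2\pi_2$. We claim $L\cap\mathscr RL=(L\cap V_1)\oplus(L\cap V_2)$.

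The inclusion $\supseteq$ is clear, since $\mathscr R$ acts on $L\cap V_i$ as the nonzero scalar $r_i$.

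For $\subseteq$, take $x\in L$ with $\mathscr Rx\in L$. Then $\mathscr Rx-r_2x=(r_1-r_2)\pi_1x\in L$. Because $r_1\ne r_2$, this gives $\pi_1x\in L\cap V_1$, and then $\pi_2x=x-\pi_1x\in L\cap V_2$. Hence
\[
 \dim(L\cap\mathscr R^{-1}L)=\dim(L\cap V_1)+\dim(L\cap V_2).
\]

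\textbf{Step 2: relate to entanglement entropy.} Conjugating by $U_T$ via Eq.~\eqref{eq:sector-weyl-factorization}, $\widetilde\psi$ is a pure stabilizer state on $\mathcal H_{n_1}\otimes\mathcal H_{n_2}$ with Lagrangian $\widetilde L=T^{-1}L$, and $T$ carries $L\cap V_i$ to $\widetilde L\cap(\text{sector }i)$. We then use the standard stabilizer entanglement formula. From $\rho=d^{-n}\sum_z\Xi(z)w(z)$, the reduced state $\psi_{B_1}$ has characteristic function equal to $\Xi_{\widetilde\psi}$ restricted to sector-1 vectors. It is therefore the normalized projector onto a joint eigenspace of the Weyl group on $\widetilde L\cap V_{n_1}$, an isotropic subspace of dimension $k_1$. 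Such a projector has rank $d^{n_1-k_1}$, so
\[
 S(\psi_{B_1})=(n_1-k_1)\log_2 d .
\]
Likewise $S(\psi_{B_2})=(n_2-k_2)\log_2 d$, so purity gives $n_1-k_1=n_2-k_2$.

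Combining, $\dim(L\cap\mathscr R^{-1}L)=k_1+k_2=n-2(n_1-k_1)$. Hence $q=n_1-k_1=S(\psi_{B_1})/\log_2 d$, which is the claim.

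The main obstacle is Step~2's identification of the reduced state as a maximally mixed state on a stabilizer code space. One must check that only sector-1 Weyl coefficients survive the partial trace, via trace orthogonality of $w(z_2)$. One must also verify that the unit-modulus set of the restriction is exactly $\widetilde L\cap V_{n_1}$, which follows because $\Xi_{\widetilde\psi}=\chi\cdot\mathbf 1_{\widetilde L}$. Obtaining $n_1-k_1=n_2-k_2$ from purity, rather than by a direct symplectic count, avoids extra work.
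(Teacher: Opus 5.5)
Your proposal is correct and follows the paper's proof essentially step for step. The paper also shows $L\cap\mathscr R^{-1}L=L_1\oplus L_2$ by extracting sector components with $(\mathscr Rx-r_2x)/(r_1-r_2)$. It computes $\psi_{B_i}$ by partial-trace orthogonality as a normalized stabilizer projector of rank $d^{n_i-a_i}$, uses purity to get $n_1-a_1=n_2-a_2$, and then substitutes into the rank formula of Theorem~\ref{thm:stabilizer-regular}.

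One cosmetic point: your Step~1 announces $L\cap\mathscr RL$, but your $\subseteq$ argument (taking $x\in L$ with $\mathscr Rx\in L$) is really about $L\cap\mathscr R^{-1}L$. Work with $L\cap\mathscr R^{-1}L$ directly. There $\supseteq$ is immediate from $\mathscr R(x_1+x_2)=r_1x_1+r_2x_2$, so the detour through $\mathscr R^{-1}(\mathscr RL\cap L)$ is unnecessary.
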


The proof is given in Appendix~\ref{app:two-sector-entanglement}.

The distinction $r_1\ne r_2$ is essential.  If $r_1=r_2$, then
$\mathscr R$ is scalar and Corollary~\ref{cor:universal-stabilizer-eb}
gives $Q=P=0$ for every pure stabilizer environment, even when it
is entangled across the chosen subsystems.

\begin{example}[A noiseless qudit from distinct cross-ratios]
\label{ex:two-sector-code}
Let $d=5$, with two qudits in each input register.  Apply the BGJ
coupling $G_i$ to the pair $A_iB_i$, where
\begin{equation}
 G_1=\begin{pmatrix}1&1\\1&2\end{pmatrix},
 ~ G_2=\begin{pmatrix}1&1\\1&3\end{pmatrix}.
 \label{eq:inhomogeneous-example}
\end{equation}
In the order $(p_1,p_2,q_1,q_2)$, the cross-ratio is
$\operatorname{diag}(2,3,2,3)$.  For the Bell stabilizer environment
\[
 |\Phi_5\rangle=5^{-1/2}\sum_{j\in\mathbb F_5}|j,j\rangle,
\]
Corollary~\ref{cor:two-sector-entanglement} gives $Q=P=\log_2 5$.
Theorem~\ref{thm:stabilizer-regular} identifies the whole channel,
up to input and output Cliffords, as $\id_5\otimes\Omega_5$.
For the same interaction with product environment $|0,0\rangle$,
each coordinate instead gives complete dephasing, up to basis
relabeling, and $Q=P=0$.

The noiseless qudit has a simple explicit code.  Encode
$|i\rangle\mapsto|i,i\rangle$ and decode the receiver's output by
\[
 |x,y\rangle\longmapsto|4x+2y,\,3x+2y\rangle.
\]
All arithmetic is in $\mathbb F_5$.  The resulting joint isometry is
\begin{align*}
 |i\rangle\longmapsto|i\rangle_{A'_1}\otimes|\chi_0\rangle,
~|\chi_0\rangle=\frac1{\sqrt5}\sum_k
 |k\rangle_{A'_2}|k,3k\rangle_{B'}.
\end{align*}
The second factor is independent of $i$, so the code preserves
arbitrary superpositions and leaks no logical information.
Appendix~\ref{app:heterogeneous} gives the coordinate calculation
and the full channel reduction.
\end{example}

Both environments in Example~\ref{ex:two-sector-code} have zero MRM.  Their different capacities show
why an environmental MRM bound requires a restriction on the
interaction; the next section treats the scalar-cross-ratio class.

\section{Environmental reduction and capacity bounds}
\label{sec:factorization}
\label{sec:environmental-bound}

We now restrict to regular interactions with scalar cross-ratio.
Corollary~\ref{cor:universal-transport} supplies a positive BGJ
representative in each such local orbit.  We work in these coordinates,
writing the interaction as \(U_G\); the corresponding environmental
Clifford change is understood as in Eq.~\eqref{eq:family-equivalence}.
Every environmental Clifford is then transportable.  This allows us
to separate the environment's stabilizer constraints from its remaining
state: the constrained coordinates give both outputs the same classical
label, while the remaining coordinates determine private and quantum
communication.

\subsection{Environmental normal form and capacity reduction}
\label{sec:capacities}

\begin{lemma}[Environmental Clifford normal form]
\label{lem:environment-normal-form}
If \(\dim S_\rho=\ell\), there is a Clifford unitary \(K\) such that
\begin{equation*}
 K\rho K^\dagger
 =|0\rangle\!\langle0|^{\otimes\ell}\otimes\tau,
 ~ m=n-\ell,
\end{equation*}
where \(S_\tau=\{0\}\).  In the same coordinates,
\begin{align*}
 K\cM(\rho)K^\dagger
 &=|0\rangle\!\langle0|^{\otimes\ell}\otimes I/d^m,
 \\
 \MRM(\rho)&=m\log_2d-S(\tau).
\end{align*}
\end{lemma}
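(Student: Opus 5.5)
The plan is to put the stabilizer subspace $S_\rho$ into standard position by a symplectic map, and then read off the product structure from the characteristic function.

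\emph{Step 1: standard position.} Since $S_\rho$ is isotropic of dimension $\ell$, Witt's theorem (extension of an isotropic basis to a symplectic basis) gives $F\in\Sp(V)$ with
\[
F S_\rho=\operatorname{span}\{(e_1,0),\dots,(e_\ell,0)\},
\]
that is, the span of the pure $Z$-directions on the first $\ell$ qudits. Let $U_F$ be the exact lift. Then $\rho_1=U_F\rho U_F^\dagger$ has $\Xi_{\rho_1}(z)=\Xi_\rho(F^{-1}z)$, so its stabilizer subspace is $FS_\rho$.

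\emph{Step 2: remove the character.} On $FS_\rho$ the phases of $\Xi_{\rho_1}$ form a character, by the fact recalled in Sec.~\ref{sec:stabilizer-notation}. By nondegeneracy there is a Weyl displacement $w(u)$ that changes this character to the trivial one, exactly as in the argument for $\psi_L=\Ad_{w(u)}\psi_L^0$. Set $K=w(u)U_F$ and $\rho_2=K\rho K^\dagger$. Then $\Xi_{\rho_2}(z)=1$ for every $z\in FS_\rho$, so $\Tr[\rho_2 Z_j]=1$ for $j\le\ell$. Since $Z_j$ is unitary, $\rho_2$ is supported on its $+1$ eigenspace, namely the span of the states with $j$-th digit $0$. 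Hence $\rho_2=|0\rangle\!\langle0|^{\otimes\ell}\otimes\tau$, where $\tau$ is the reduced state on the last $m$ qudits. The factorization follows because any state supported on $|0\rangle^{\otimes\ell}\otimes\mathcal H_m$ is of this product form.

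\emph{Step 3: $S_\tau=\{0\}$.} This is the step that needs care. The product form gives
\[
\Xi_{\rho_2}(z_1,z_2)=\Xi_{|0\rangle\langle0|^{\otimes\ell}}(z_1)\,\Xi_\tau(z_2),
\]
and $|\Xi_{|0\rangle\langle0|}(z_1)|=1$ exactly when $z_1$ is a pure $Z$-direction. So $S_{\rho_2}=\operatorname{span}\{(e_i,0)\}_{i\le\ell}\oplus S_\tau$. Clifford conjugation preserves dimensions of stabilizer subspaces, so $\dim S_{\rho_2}=\ell$, which forces $S_\tau=\{0\}$.

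\emph{Step 4: mean state and MRM.} By Eq.~\eqref{eq:mean-state}, $\cM$ keeps only the coefficients on $S_{\rho_2}$. The product form of the characteristic function then gives
\[
\cM(\rho_2)=|0\rangle\!\langle0|^{\otimes\ell}\otimes\cM(\tau)=|0\rangle\!\langle0|^{\otimes\ell}\otimes I/d^m,
\]
since $S_\tau=\{0\}$ makes $\Xi_{\cM(\tau)}=\delta_{z,0}$. Because $\cM$ is determined by the characteristic function, it is Clifford covariant: $\cM(K\rho K^\dagger)=K\cM(\rho)K^\dagger$. Indeed, a Clifford permutes Weyl labels and multiplies coefficients by phases, which preserves unit modulus. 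Finally,
\[
\MRM(\rho)=S(\cM(\rho))-S(\rho)=m\log_2 d-S(\tau),
\]
using $S(\rho)=S(\tau)$ and the invariance of entropy under unitary conjugation.
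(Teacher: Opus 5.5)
Your proof is correct and follows the same route as the paper's Appendix~C. A symplectic basis extension moves $S_\rho$ to the first $\ell$ $Z$-directions, a Weyl displacement (the paper's $X(-a)$) trivializes the character, and the unit-modulus support argument gives the product form. A dimension count then yields $S_\tau=\{0\}$. The only difference is that you take the isotropy/character property and the identity $\MRM(\rho)=S(\cM(\rho))-S(\rho)$ as recalled facts from Sec.~\ref{sec:stabilizer-notation}. The appendix re-derives both, including the uniqueness of $\cM(\rho)$ as the relative-entropy minimizer over MSPS.
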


This is the Clifford normal form for an isotropic Weyl subgroup and
its character~\cite{HostensDehaeneDeMoor2005,Gross2006}; a proof is given
in Appendix~\ref{app:normal-form}.
The following theorem applies it through transport to decompose the
channel and remove the shared label from the capacity optimization.
The shared-label entropy cancellation appears in the beam-splitter
analysis of Ref.~\cite[arXiv:2401.12105v3, Appendix C, Eqs. (C23)–(C30)]{BuJaffePRL2025}.

Write \(A=A_LA_E\) and \(B=B_LB_E\), where the \(L\) registers
contain the \(\ell\) constrained coordinates and the \(E\) registers
contain the \(m\) residual coordinates; use the same split at the outputs.
Superscripts \((s)\) display the number of qudits.  For
\(b\in\F^\times\), define the relabeled dephasing channel
\[
 \cD_b^{(s)}(X)
 =\sum_{x\in\F^s}\langle x|X|x\rangle\ketbra{bx}{bx}.
\]

\begin{theorem}[Stabilizer-sector and capacity reduction]
\label{thm:stabilizer-factorization}
\label{thm:capacity-reduction}
Let \(G\) be positive, and choose the normal form
\[
 K\sigma K^\dagger=\sigma_0
 =\ketbra{0^\ell}{0^\ell}\otimes\tau,
~ m=n-\ell.
\]
Put \(a=Ng_{11}\) and \(c=-Ng_{01}\), where \(N=(\det G)^{-1}\).
Then:

\emph{(i) Channel decomposition.}
Using the purification
\( |0^\ell\rangle_{B_L}\otimes|\phi_\tau\rangle_{B_ER}\),
\begin{align}
 \Lambda_{G,\sigma_0}^{(n)}
 =\cD_a^{(\ell)}\otimes\Lambda_{G,\tau}^{(m)},~
 \widehat\Lambda_{G,\sigma_0}^{(n)}
 =\cD_c^{(\ell)}\otimes\widehat\Lambda_{G,\tau}^{(m)}.
 \label{eq:complement-factorization}
\end{align}
The residual complementary output is \(B'_ER\).
The receiver--complementary pair for \(\sigma\) is equivalent to this
pair under a common input unitary and separate output unitaries.

\emph{(ii) Capacity reduction.}
Consequently,
\begin{align}
 Q(\Lambda_{G,\sigma}^{(n)})
 =Q(\Lambda_{G,\tau}^{(m)}),~P(\Lambda_{G,\sigma}^{(n)})
 =P(\Lambda_{G,\tau}^{(m)}).
 \label{eq:Q-reduction}
\end{align}
For \(m=0\), the residual system is one dimensional and its capacities
are zero.
\end{theorem}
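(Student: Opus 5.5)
The plan is to exploit the fact that $U_G$ in Eq.~\eqref{eq:UG} is a computational-basis permutation acting coordinatewise: the same $G$ acts on each pair $(A_j,B_j)$. After splitting the coordinates into the $\ell$ constrained pairs and the $m$ residual pairs, we have $U_G^{(n)}=U_G^{(\ell)}\otimes U_G^{(m)}$ with respect to $A_LB_L\otimes A_EB_E$. On the constrained pairs the environment is $|0^\ell\rangle$, and Eq.~\eqref{eq:UG} gives
\[
 U_G|i,0\rangle=|Ng_{11}i,\,-Ng_{01}i\rangle=|ai\rangle_{A'_L}|ci\rangle_{B'_L}.
\]
Hence, with the stated purification, the Stinespring isometry factorizes as
\[
 V_{G,\sigma_0}=W_L\otimes V_{G,\tau},\qquad W_L|x\rangle=|ax\rangle|cx\rangle .
\]
Tracing out $B'_LB'_ER$ gives the receiver channel. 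Tracing out $A'_LA'_E$ gives the complement. Since $W_L$ copies the basis label, its two marginals are $\cD_a^{(\ell)}$ and $\cD_c^{(\ell)}$. This yields Eq.~\eqref{eq:complement-factorization}. The equivalence with the pair for $\sigma=K^\dagger\sigma_0K$ then follows directly from Corollary~\ref{prop:environment-covariance}, applied with the Clifford $K^\dagger$ and with the purifications related as there.

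For (ii), input and output unitaries preserve $Q^{(1)}$ and $P^{(1)}$ at every block length, so it suffices to treat $\sigma_0$. Reordering tensor factors gives $(\Lambda_{G,\sigma_0})^{\otimes r}\cong\cD_a^{(\ell r)}\otimes\Lambda_{G,\tau}^{\otimes r}$, with the complement handled in the same way. It therefore suffices to prove the single-shot identities
\[
 Q^{(1)}(\cD_a\otimes\Lambda)=Q^{(1)}(\Lambda),\qquad P^{(1)}(\cD_a\otimes\Lambda)=P^{(1)}(\Lambda)
\]
for an arbitrary channel $\Lambda$ with a fixed complement $\widehat\Lambda$, where the complement of the product is $\cD_c\otimes\widehat\Lambda$. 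The inequality ``$\ge$'' is immediate: feed $|0\rangle\!\langle0|$ into the label register.

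For ``$\le$'', write an input as $\rho$ on $XE$ and set $p_x\rho_x=\langle x|\rho|x\rangle_X$. Both outputs are classical--quantum:
\[
 \sum_x p_x\ketbra{ax}{ax}\otimes\Lambda(\rho_x),\qquad
 \sum_x p_x\ketbra{cx}{cx}\otimes\widehat\Lambda(\rho_x).
\]
Since $x\mapsto ax$ and $x\mapsto cx$ are injective, both entropies contain the same term $H(p)$. It cancels, leaving
\[
 I_c=\sum_x p_xI_c(\rho_x,\Lambda)\le Q^{(1)}(\Lambda).
\]
For private information, take an ensemble $\{p_u,\rho_u\}$ and regard $(U,X)$ as jointly distributed. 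The chain rule gives $\chi(\mathsf E;\cD_a\otimes\Lambda)=I(U;X)+I(U;Y\mid X)$ and the analogous expression $I(U;X)+I(U;Z\mid X)$ for the complement. The term $I(U;X)$ is common to both and cancels. The difference is then
\[
 \sum_x p(x)\bigl[\chi(\mathsf E_x;\Lambda)-\chi(\mathsf E_x;\widehat\Lambda)\bigr]\le P^{(1)}(\Lambda),
\]
where $\mathsf E_x=\{p(u|x),\rho_{u,x}\}$ is the conditional ensemble. Regularizing via Eq.~\eqref{eq:PQ-capacity} gives Eq.~\eqref{eq:Q-reduction}. For $m=0$ the residual channel has a one-dimensional input, so its capacities vanish.

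The step I expect to need the most care is the private-information cancellation. One must check that the label $x$ appearing on $B'_L$ is exactly the same classical variable as on $A'_L$. This holds because $c\neq0$, which follows from the positivity of $G$, and this is what makes the $I(U;X)$ terms identical on both sides. One must also check that the conditional ensembles $\mathsf E_x$ are legitimate inputs to $\Lambda_{G,\tau}^{\otimes r}$, including the case of inputs that are entangled across uses.
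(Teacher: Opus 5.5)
Your proposal is correct and follows essentially the same route as the paper. Both arguments use the factorized Stinespring isometry $W_L\otimes V_{G,\tau}$, with marginals $\cD_a^{(\ell)}$ and $\cD_c^{(\ell)}$; transport reduces $\sigma$ to $\sigma_0$; the shared classical label cancels between receiver and complement at every block length (your common term $I(U;X)$ is the paper's $H(q)-\sum_u p_uH(q_u)$); and the reverse inequality comes from fixing the label. Your packaging of the $r$-use step as a single-shot statement for $\cD_a\otimes\Lambda$ with complement $\cD_c\otimes\widehat\Lambda$ is only a presentational difference.
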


Figure~\ref{fig:mechanism} illustrates the decomposition in
part~(i). The stabilized coordinates give the receiver and
the complementary output the same classical label, up to
reversible relabeling. Part~(ii) shows that removing this
shared label leaves the quantum and private capacities
unchanged.

\begin{figure}[!tbp]
\centering
\begin{tikzpicture}[
  >=Latex,
  every node/.style={font=\small},
  box/.style={draw,rounded corners,minimum width=1.65cm,
              minimum height=0.75cm,fill=blue!5}]
 \node (x) at (0,0.72) {$\ket{x}_{A_L}$};
 \node (zero) at (0,-0.05) {$|0^{\ell}\rangle_{B_L}$};
 \node[box] (copy) at (2.2,0.34) {$U_G^{(\ell)}$};
 \node (ax) at (4.65,0.72) {$\ket{ax}_{A'_L}$ };
 \node (cx) at (4.65,-0.05) {$\ket{cx}_{B'_L}$ };
 \draw[->] (x)--(copy);
 \draw[->] (zero)--(copy);
 \draw[->] (copy)--(ax);
 \draw[->] (copy)--(cx);
\node (ae) at (0,-1.55) {$A_E$};
\node (be) at (0,-2.25) {$B_E$};

\node[box,minimum width=1.65cm,minimum height=1.2cm]
  (res) at (2.2,-1.90) {$U_G^{(m)}$};

\node (aout) at (4.65,-1.55) {$A'_E$};
\node (bout) at (4.65,-2.25) {$B'_E$};

\draw[->] (ae.east) -- (res.west |- ae);
\draw[->] (be.east) -- (res.west |- be);
\draw[->] (res.east |- aout) -- (aout.west);
\draw[->] (res.east |- bout) -- (bout.west);

\node (rin)  at (0,-3.05) {$R$};
\node (rout) at (4.65,-3.05) {$R$};
\draw[->] (rin.east) -- (rout.west);

\draw[decorate,decoration={brace,amplitude=3pt}]
  (-0.48,-3.05) -- (-0.48,-2.25)
  node[midway,left=5pt] {$|\phi_\tau\rangle$};
\end{tikzpicture}
\caption{Stabilizer-sector reduction for a positive BGJ coupling
$G$, in environmental Clifford normal coordinates.
On the stabilized coordinates,
$U_G^{(\ell)}|x,0^\ell\rangle=|ax,cx\rangle$,
with $x\in\mathbb F_d^\ell$, $a=(\det G)^{-1}g_{11}$,
and $c=-(\det G)^{-1}g_{01}$.
The two marginal channels are $\mathcal D_a^{(\ell)}$
and $\mathcal D_c^{(\ell)}$.
The remaining coordinates interact through $U_G^{(m)}$
with environment $\tau$.
The brace denotes a purification $|\phi_\tau\rangle_{B_E R}$;
$R$ bypasses the interaction.
The residual receiver and complementary outputs are $A'_E$
and $B'_E R$, respectively.}
\label{fig:mechanism}
\end{figure}

\begin{proof}
We first obtain the common dilation from transport and the normal form.
We then cancel the shared label in private and coherent information.

By Corollary~\ref{prop:environment-covariance}, it suffices to consider
\(\sigma_0=\ketbra{0^\ell}{0^\ell}\otimes\tau.\) Write \(A=A_LA_E\) and \(B'=B'_LB'_E\). An input state has the block decomposition
$\rho=\sum_{xy}\ket{x}\bra{y}\otimes R_{xy}$ with $\ket{x},\ket{y}\in\mathcal{H}_\ell$ and
\begin{align*}
\Lambda_{G,\sigma_0}^{(n)}(\rho)&=\sum_{x,y}\textmd{Tr}_{B'}[U^{(n)}_G\ket{x}\bra{y}\otimes R_{xy}\otimes\ket{0^{\ell}}\bra{0^{\ell}}\otimes\tau U^{(n)\dagger}_G]\\
&=\sum_x\ket{ax}\bra{ax}\otimes \textmd{Tr}_{B'_E}[U^{(m)}_G(R_{xx}\otimes\tau) U^{(m)\dagger}_G]\\
&=(\cD_a^{(\ell)}\otimes\Lambda_{G,\tau}^{(m)})(\rho).
\end{align*}


Let $|\phi_{\tau}\rangle$ be the purification of $\tau$ and 
\begin{align*}
W^{(\ell)}_{G,0}\ket{x}:=U^{(\ell)}_G(\ket{x}\otimes\ket{0^{\ell}})=\ket{ax}_{A'_L}\otimes\ket{cx}_{B'_L}.
\end{align*}
After regrouping the output registers, a Stinespring isometry is
\[
 V_{G,\sigma_0}=W_{G,0}^{(\ell)}\otimes V_{G,\tau},
\]
where $V_{G,\tau}|\varphi\rangle=(U_G^{(m)}\otimes I_R) (|\varphi\rangle\otimes|\phi_\tau\rangle)$ is a Stinespring isometry of $\Lambda_{G,\tau}^{(m)}$. Then,
\begin{align*}
 \widehat\Lambda_{G,\sigma_0}^{(n)}(\rho)=&\sum_x\ket{cx}\bra{cx}\otimes \textmd{Tr}_{A'_E}[V_{G,\tau}(R_{xx})V_{G,\tau}^{\dagger}]\\
 =&(\cD_c^{(\ell)}\otimes\widehat\Lambda_{G,\tau}^{(m)})(\rho),
\end{align*}
where the last line follows from the definition of $\widehat\Lambda_{G,\tau}^{(m)}$. This proves
\eqref{eq:complement-factorization}.


Fix \(r\ge1\) and write
\[
\cT=\Lambda_{G,\sigma_0}^{(n)},~\widehat\cT=\widehat\Lambda_{G,\sigma_0}^{(n)},~
 \cN=\Lambda_{G,\tau}^{(m)},~
 \widehat\cN=\widehat\Lambda_{G,\tau}^{(m)}.
\]
Write an arbitrary state \(\rho\) on \(\mathcal H_{rn}\) as $\rho=\sum_{xy}\ket{x}\bra{y}\otimes R_{xy}$, with $\ket{x},\ket{y}\in\mathcal{H}_{r\ell}$. Equation \eqref{eq:complement-factorization} gives
\begin{align*}
I_c(\rho,\cT^{\otimes r})=&S[\sum_xq_x\ket{ax}\bra{ax}\otimes\cN^{\otimes r}(\rho_x)]\\
&-S[\sum_xq_x\ket{cx}\bra{cx}\otimes\widehat{\cN}^{\otimes r}(\rho_x)]\\
=&\sum_x q_xI_c(\rho_x,\cN^{\otimes r}),
\end{align*}
where $q_x=\textmd{Tr}R_{xx}$ and $\rho_x=q^{-1}_xR_{xx}$. Zero-probability terms are omitted throughout the proof. 
The second equality uses the entropy formula for
block-diagonal states. Since $a,c\ne0$, the label
distributions at both outputs have the same Shannon
entropy, which cancels. Taking the maximum over $\rho$ implies that for each $r\ge1$,
\begin{align*}
 Q^{(1)}(\cT^{\otimes r})\le  Q^{(1)}(\cN^{\otimes r}). 
\end{align*}

Conversely, fix any \(x_0\in\mathbb F_d^{r\ell}\). Optimizing over states \(\rho_0\) on \(\mathcal H_{rm}\) gives
\begin{align*}
Q^{(1)}(\cT^{\otimes r})&=\max_{\rho}I_c(\rho,\cT^{\otimes r})\\
&\ge \max_{\rho_0}I_c(\ket{x_0}\bra{x_0}\otimes\rho_0,\cT^{\otimes r})\\
&= \max_{\rho_0}I_c(\rho_0,\cN^{\otimes r})\\
&=Q^{(1)}(\cN^{\otimes r}).
\end{align*}
Consequently, one has $Q^{(1)}(\cT^{\otimes r})=Q^{(1)}(\cN^{\otimes r})$ and the quantum capacity satisfies
\begin{align*}
 Q(\cT)&=\sup_{r\ge1}\frac{Q^{(1)}(\cT^{\otimes r})}{r}=\sup_{r\ge1}\frac{Q^{(1)}(\cN^{\otimes r})}{r}=Q(\cN),
\end{align*}
and Corollary \ref{prop:environment-covariance} implies
\begin{align*}
 Q(\Lambda_{G,\sigma}^{(n)})=Q(\Lambda_{G,\sigma_0}^{(n)})=Q(\Lambda_{G,\tau}^{(m)}).   
\end{align*}

We next consider an arbitrary ensemble \(\mathsf E=\{p_u,\rho_u\}\) on \(\mathcal H_{rn}\). Write \(\rho_u=\sum_{x,y}|x\rangle\langle y|\otimes R_{xy}^u\). The marginal probability of label \(x\) is \(q_x=\sum_u p_u\operatorname{Tr}R_{xx}^u\). For \(q_x>0\), the conditional ensemble is
 $\mathsf E_x=\{p^x_u,\rho^u_x\}$, where $p^x_u=\frac{p_uq^u_x}{q_x}$, $\rho^u_x=\frac{R^u_{xx}}{q^u_x}$ with $q^u_x=\Tr(R^u_{xx})$. 
 The conditional average state is $\bar\rho_x=\sum_u p_u^x\rho_x^u$. Hence
\begin{align}\label{eq:receiver-chain-rule}
\chi(\mathsf E;\cT^{\otimes r})=&S\!\left(\sum_up_u\cT^{\otimes r}(\rho_u)\right)
  -\sum_up_uS(\cT^{\otimes r}(\rho_u)) \nonumber\\
=&H(q)+\sum_xq_xS(\cN^{\otimes r}(\bar{\rho}_x))\nonumber\\
&-\sum_up_u\big[H(q_u)+\sum_xq^u_xS[\cN^{\otimes r}(\rho^u_x)]\big]\nonumber\\
=&H(q)-\sum_up_uH(q_u)+\sum_xq_x\Big[\chi(\mathsf E_x;\cN^{\otimes r})\big],
\end{align}
where $H(q)$ is the Shannon entropy of vector $q=(q_x)_x$ and $H(q_u)$ is the Shannon entropy of $(q^u_x)_x$. Similarly, it also holds that
\begin{align*}
\chi(\mathsf E;\widehat{\cT}^{\otimes r})=H(q)-\sum_up_uH(q_u)+\sum_xq_x\Big[\chi(\mathsf E_x;\widehat{\cN}^{\otimes r})\big],
\end{align*}
therefore,
\begin{align*}
 &\chi(\mathsf E;\cT^{\otimes r})-\chi(\mathsf E;\widehat{\cT}^{\otimes r})\\
 =&\sum_xq_x[\chi(\mathsf E_x;\cN^{\otimes r})-\chi(\mathsf E_x;\widehat{\cN}^{\otimes r})]\\
 \le& \sum_xq_x\sup_{\mathsf F}[\chi(\mathsf F;\cN^{\otimes r})-\chi(\mathsf F;\widehat{\cN}^{\otimes r})]\\
 =& P^{(1)}(\cN^{\otimes r}).
\end{align*}
Taking the supremum over all ensembles gives $P^{(1)}(\cT^{\otimes r})\le P^{(1)}(\cN^{\otimes r})$.

Conversely, for an arbitrary ensemble on the residual input $\mathsf F=\{p_u,\omega_u\}$, define the ensemble $\mathsf E_{x_0}=\{p_u,\ket{x_0}\bra{x_0}\otimes\omega_u\}$. Equation \eqref{eq:complement-factorization} gives
\begin{align*}
\chi(\mathsf E_{x_0};\cT^{\otimes r})=\chi(\mathsf F;\cN^{\otimes r}),~\chi(\mathsf E_{x_0};\widehat{\cT}^{\otimes r})=\chi(\mathsf F;\widehat{\cN}^{\otimes r}),
\end{align*}
and
\begin{align*}
P^{(1)}(\cT^{\otimes r})&=\sup_{\mathsf E}[\chi(\mathsf E;\cT^{\otimes r})-\chi(\mathsf E;\widehat{\cT}^{\otimes r})]\\
&\ge\sup_{\mathsf E_{x_0}}[\chi(\mathsf E_{x_0};\cT^{\otimes r})-\chi(\mathsf E_{x_0};\widehat{\cT}^{\otimes r})]\\
&=\sup_{\mathsf F}[\chi(\mathsf F;\cN^{\otimes r})-\chi(\mathsf F;\widehat{\cN}^{\otimes r})]=P^{(1)}(\cN^{\otimes r}).
\end{align*}
Consequently, it holds that
\begin{align*}
 P^{(1)}(\cT^{\otimes r})=P^{(1)}(\cN^{\otimes r}),
\end{align*}
and the private capacity satisfies
\begin{align*}
 P(\cT)&=\sup_{r\ge1}\frac{P^{(1)}(\cT^{\otimes r})}{r}=\sup_{r\ge1}\frac{P^{(1)}(\cN^{\otimes r})}{r}=P(\cN).
\end{align*}
In conclusion, Corollary \ref{prop:environment-covariance} implies
\begin{align*}
 P(\Lambda_{G,\sigma}^{(n)})=P(\Lambda_{G,\sigma_0}^{(n)})=P(\Lambda_{G,\tau}^{(m)}). 
\end{align*}
\end{proof}

The removed coordinates can still carry classical information.
Equation~\eqref{eq:receiver-chain-rule} bounds the \(r\)-use Holevo
information by \(r\ell\log_2d+\chi^*(\cN^{\otimes r})\). Conversely, adjoining an independent uniform label and optimizing over residual ensembles gives the reverse bound. Thus
\begin{equation}
 C(\Lambda_{G,\sigma}^{(n)})
 =\ell\log_2d+C(\Lambda_{G,\tau}^{(m)}).
 \label{eq:classical-reduction}
\end{equation}
It also follows from the Holevo additivity theorem for EB
channels~\cite{Shor2002}.

\subsection{The environmental MRM bound}

The capacity reduction removes the shared label.  To bound the
remaining channel, we use the output entropy estimate in
Proposition~\ref{prop:basic-identities}.  It applies to the full block
of uses and therefore includes inputs entangled across uses.

\begin{theorem}[Environmental MRM bound]
\label{thm:environmental-bound}
Let \(S\) be a regular Clifford interaction with scalar cross-ratio.
For every pure or mixed environment state \(\sigma\),
\begin{equation*}
 Q(\cN_{S,\sigma})\le P(\cN_{S,\sigma})\le\MRM(\sigma).
\end{equation*}
In particular, every positive BGJ coupling \(G\) satisfies
\begin{equation*}
 Q(\Lambda_{G,\sigma})\le P(\Lambda_{G,\sigma})\le\MRM(\sigma).
\end{equation*}
\end{theorem}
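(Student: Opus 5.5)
Since $Q\le P$ is standard, only $P(\cN_{S,\sigma})\le\MRM(\sigma)$ needs proof. First I reduce to a positive BGJ coupling. Corollary~\ref{cor:universal-transport} gives a positive $G$ with $S=(P\oplus Q)S_G(U\oplus T)^{-1}$. By Eq.~\eqref{eq:family-equivalence}, together with the matching relation for the full complementary channels, $\cN_{S,\sigma}$ and $\Lambda_{G,K_T^\dagger\sigma K_T}$ have the same private capacity. Clifford conjugation preserves MRM, so it suffices to prove $P(\Lambda_{G,\sigma})\le\MRM(\sigma)$. Next I apply Lemma~\ref{lem:environment-normal-form} and Theorem~\ref{thm:capacity-reduction}. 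These give $P(\Lambda_{G,\sigma})=P(\Lambda^{(m)}_{G,\tau})$ and $\MRM(\sigma)=m\log_2 d-S(\tau)$, where $S_\tau=\{0\}$. The claim therefore becomes $P(\Lambda^{(m)}_{G,\tau})\le m\log_2d-S(\tau)$. When $m=0$ both sides vanish.

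Fix $r$, write $\cN=\Lambda^{(m)}_{G,\tau}$, and take an ensemble $\{p_u,\rho_u\}$ on $\mathcal H_{rm}$. I bound the private information by the receiver Holevo quantity:
\[
\chi(\mathsf E;\cN^{\otimes r})-\chi(\mathsf E;\widehat\cN^{\otimes r})\le \chi(\mathsf E;\cN^{\otimes r})\le rm\log_2 d-\sum_u p_u S(\cN^{\otimes r}(\rho_u)).
\]
The crucial point is $\cN^{\otimes r}(\rho_u)=\rho_u\boxtimes_{S_G^{\oplus r}}\tau^{\otimes r}$. This is a single convolution with a direct-sum interaction whose blocks are invertible because $G$ is positive. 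By the direct-sum clause of Proposition~\ref{prop:basic-identities}, Eq.~\eqref{eq:entropy-increase} applies even though $\rho_u$ may be entangled across uses. It gives $S(\cN^{\otimes r}(\rho_u))\ge S(\tau^{\otimes r})=rS(\tau)$. Hence $P^{(1)}(\cN^{\otimes r})\le r(m\log_2d-S(\tau))$, and dividing by $r$ and taking the supremum yields $P(\cN)\le\MRM(\sigma)$.

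The main obstacle I anticipate is the validity of the block-entropy step. Proposition~\ref{prop:basic-identities} must apply to the $r$-fold tensor power as one regular interaction on $rm$ qudits with an arbitrary joint input and a product environment. The direct-sum remark in that proposition supplies exactly this, since the relevant block $B$ is $B_G^{\oplus r}$ and is invertible. A second point needs care: the residual complementary output includes $R$, but the argument discards the complementary term, so this causes no problem. Finally, the reduction through Eq.~\eqref{eq:family-equivalence} must keep the pair (receiver, full complement) unitarily equivalent so that $P$ is preserved; the paragraph after Theorem~\ref{thm:local-orbit-classification} asserts this.
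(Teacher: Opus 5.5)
Your proposal is correct and follows the paper's proof essentially step for step. You pass to a positive BGJ representative via Corollary~\ref{cor:universal-transport} and Eq.~\eqref{eq:family-equivalence}, then reduce to $\Lambda^{(m)}_{G,\tau}$ with Lemma~\ref{lem:environment-normal-form} and Theorem~\ref{thm:capacity-reduction}. You then bound the private information by the receiver Holevo quantity, applying Eq.~\eqref{eq:entropy-increase} to the full $r$-use direct-sum interaction. You also correctly identify that the needed invertibility is that of the environmental block $B$, so that $\omega\mapsto\rho_u\boxtimes\omega$ fixes the maximally mixed state; the paper uses this implicitly.
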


\begin{proof}
By Corollary~\ref{cor:universal-transport} and
Eq.~\eqref{eq:family-equivalence}, we may pass to a positive
BGJ representative with a Clifford-transformed environment.
This preserves both channel capacities and environmental MRM.
Theorem~\ref{thm:capacity-reduction} and
Lemma~\ref{lem:environment-normal-form} then give a residual
channel \(\cN=\Lambda_{G,\tau}^{(m)}\) satisfying
\[
 P(\cN_{S,\sigma})=P(\cN),
~
 \MRM(\sigma)=m\log_2d-S(\tau).
\]
If \(m=0\), the residual channel is one dimensional and the
claim is immediate. Assume \(m\ge1\).

Fix \(r\ge1\), put \(\cN_r=\cN^{\otimes r}\), and let
\(\mathsf E=\{p_u,\rho_u\}\) be an arbitrary ensemble on its
\(rm\)-qudit input, with
\(\overline\rho=\sum_up_u\rho_u\).
Applying Eq.~\eqref{eq:entropy-increase} to the full
\(r\)-use interaction gives
\[
 S(\cN_r(\rho_u))
 \ge S(\tau^{\otimes r})
 =rS(\tau),
\]
including when \(\rho_u\) is entangled across uses.
For a complementary channel \(\widehat\cN_r\), we therefore have
\begin{align*}
 &\chi(\mathsf E;\cN_r)
   -\chi(\mathsf E;\widehat\cN_r)\\
 &\quad\le \chi(\mathsf E;\cN_r)\\
 &\quad=S(\cN_r(\overline\rho))
       -\sum_up_uS(\cN_r(\rho_u))\\
 &\quad\le rm\log_2d-\sum_up_u\,rS(\tau)\\
 &\quad=r\bigl[m\log_2d-S(\tau)\bigr].
\end{align*}
The first inequality uses nonnegativity of the complementary
Holevo information; the second uses the output dimension
\(d^{rm}\) and the entropy bound above.

Taking the supremum over input ensembles, dividing by \(r\),
and regularizing yields
\begin{align*}
 Q(\cN_{S,\sigma})
 &\le P(\cN_{S,\sigma})\\
 &=P(\cN)
   =\sup_{r\ge1}\frac1rP^{(1)}(\cN^{\otimes r})\\
 &\le m\log_2d-S(\tau)
   =\MRM(\sigma).
\end{align*}
\end{proof}

The scalar condition is also necessary if the bound is to hold
uniformly over environmental states.  This gives a capacity
characterization of the interactions identified in
Corollary~\ref{cor:universal-transport}.

\begin{corollary}[Universal environmental MRM bounds]
\label{cor:universal-mrm-bound}
Fix a regular interaction \(S\), and let \(\sigma\) range over all
environment states.  The following are equivalent:
\begin{enumerate}
 \item \(\mathscr R(S)=rI_V\) for some
       \(r\in\F\setminus\{0,1\}\);
 \item \(P(\cN_{S,\sigma})\le\MRM(\sigma)\) for all \(\sigma\);
 \item \(Q(\cN_{S,\sigma})\le\MRM(\sigma)\) for all \(\sigma\).
\end{enumerate}
\end{corollary}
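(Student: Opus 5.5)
The plan is to prove the cycle (1)$\Rightarrow$(2)$\Rightarrow$(3)$\Rightarrow$(1). The first two implications come from earlier results; the third is proved by contraposition using a pure stabilizer environment.

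\emph{(1)$\Rightarrow$(2).} This is exactly Theorem~\ref{thm:environmental-bound}. A regular interaction with scalar cross-ratio satisfies $P(\cN_{S,\sigma})\le\MRM(\sigma)$ for every pure or mixed environment.

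\emph{(2)$\Rightarrow$(3).} This follows from the standard inequality $Q\le P$, applied to each channel $\cN_{S,\sigma}$.

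\emph{(3)$\Rightarrow$(1).} I will argue by contraposition. Suppose $\mathscr R(S)$ is not scalar. By Corollary~\ref{cor:universal-stabilizer-eb}(b), the equivalence of items~2 and~4 there gives a pure stabilizer environment $\psi_L$ with $Q(\cN_{S,\psi_L})>0$. More concretely, since $\mathscr R$ is nonscalar, some Lagrangian $L$ satisfies $\mathscr R L\ne L$. Part~(a) of that corollary then gives $q\ge1$, and Theorem~\ref{thm:stabilizer-regular} yields $Q(\cN_{S,\psi_L})=q\log_2 d\ge\log_2 d>0$. On the other hand, $\psi_L$ is a pure stabilizer state, so its stabilizer subspace is $L$ with $|\Xi|=1$ on $L$. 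Its mean state is therefore $\psi_L$ itself, and
\[
\MRM(\psi_L)=S(\cM(\psi_L))-S(\psi_L)=0.
\]
The environment $\psi_L$ thus violates item~3, which proves the contrapositive.

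The only point needing care is the existence of a Lagrangian subspace not preserved by a nonscalar $\mathscr R\in\mathfrak X(V)$. Corollary~\ref{cor:universal-stabilizer-eb} already asserts this, via the fact quoted after it that a linear map preserving every Lagrangian subspace is scalar, so I will simply invoke that corollary. If a direct argument were wanted, I would take a vector $v$ that is not an eigenvector of $\mathscr R$. The remaining task is then to build a Lagrangian $L$ containing $v$ that excludes $\mathscr R v$. When $[v,\mathscr R v]\ne0$, any Lagrangian containing $v$ works, since an isotropic $L$ cannot contain both $v$ and $\mathscr R v$. Otherwise, I would extend $\operatorname{span}\{v\}$ to a Lagrangian avoiding $\mathscr R v$ inside $v^\perp/\operatorname{span}\{v\}$. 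This is possible because that quotient space is nondegenerate and $\mathscr R v\notin\operatorname{span}\{v\}$.
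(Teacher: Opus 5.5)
Your proposal is correct and follows the paper's proof: (1)$\Rightarrow$(2) comes from Theorem~\ref{thm:environmental-bound}, (2)$\Rightarrow$(3) from $Q\le P$, and (3)$\Rightarrow$(1) from $\MRM(\psi_L)=0$ for pure stabilizer environments combined with Corollary~\ref{cor:universal-stabilizer-eb}(b), which you only rephrase as a contrapositive. Your optional direct construction of a Lagrangian not preserved by a nonscalar $\mathscr R$ is also sound, since the case $[v,\mathscr Rv]=0$ with $\mathscr Rv\notin\operatorname{span}\{v\}$ can only occur for $n\ge2$, where $v^\perp/\operatorname{span}\{v\}$ is nonzero; the paper instead shows that a map preserving every Lagrangian must be scalar.
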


\begin{proof}
Theorem~\ref{thm:environmental-bound} gives (1)$\Rightarrow$(2),
and \(Q\le P\) gives (2)$\Rightarrow$(3).
For (3)$\Rightarrow$(1), take any pure stabilizer environment
\(\psi_L\).  Since \(\MRM(\psi_L)=0\), condition (3) implies
\(Q(\cN_{S,\psi_L})=0\) for every such environment.
Corollary~\ref{cor:universal-stabilizer-eb} then forces
\(\mathscr R(S)\) to be scalar.
\end{proof}

For scalar cross-ratios and a pure environment, the residual state is pure and
$\MRM(\sigma)=m\log_2d$.  The bound then coincides with the input-dimension
bound for the residual channel; identifying that channel is the role of
Theorem~\ref{thm:capacity-reduction}.  If there are no nontrivial exact
stabilizers, $m=n$ and this is the original input-dimension bound.
For mixed environments, subtracting $S(\tau)$ gives an additional
restriction.

When MRM vanishes, the residual environment is maximally mixed and
the factorization determines the channel explicitly.  We display this
normal form in the BGJ coordinates used for the factorization; the
same form holds, up to input and output Cliffords, throughout the
scalar-cross-ratio class.

\begin{corollary}[MSPS channel normal form]
\label{cor:msps-channel}
Let \(G\) be positive, and set \(\ell=\dim S_\sigma\),
\(m=n-\ell\), and \(a=(\det G)^{-1}g_{11}\).
If \(\MRM(\sigma)=0\), then \(\sigma=\cM(\sigma)\) and the induced channel is,
up to input and output Clifford conjugations,
\begin{equation}
 \cD_a^{(\ell)}\otimes\Omega_d^{\otimes m}.
 \label{eq:msps-channel-form}
\end{equation}
It is entanglement breaking and
\begin{align*}
 C(\Lambda_{G,\sigma})&=\ell\log_2d
 =n\log_2d-S(\sigma),
 \\
 P(\Lambda_{G,\sigma})&=Q(\Lambda_{G,\sigma})=0.
\end{align*}
\end{corollary}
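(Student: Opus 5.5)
The plan is to reduce to the normal form of Lemma~\ref{lem:environment-normal-form} and then read everything off from Theorem~\ref{thm:stabilizer-factorization}(i). If $\MRM(\sigma)=0$, then $D(\sigma\Vert\cM(\sigma))=0$, so $\sigma=\cM(\sigma)$. In the Clifford coordinates $K$ of the lemma, $m\log_2d-S(\tau)=0$ forces $\tau=I/d^m$, so
\[
 K\sigma K^\dagger=\sigma_0=\ketbra{0^\ell}{0^\ell}\otimes I/d^m .
\]
Corollary~\ref{prop:environment-covariance} gives $\Lambda_{G,\sigma}=\Ad_{K_{A'}}\circ\Lambda_{G,\sigma_0}\circ\Ad_{K_A}$, so it suffices to identify $\Lambda_{G,\sigma_0}$.

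By Eq.~\eqref{eq:complement-factorization}, $\Lambda_{G,\sigma_0}=\cD_a^{(\ell)}\otimes\Lambda_{G,I/d^m}^{(m)}$. Since $G$ is positive, $S_G$ is regular, so the block $B_G$ is invertible. Part 3 of Proposition~\ref{prop:basic-identities} (Eq.~\eqref{eq:depolarizing-example}) then gives $\Lambda^{(m)}_{G,I/d^m}=\Omega_d^{\otimes m}$, which yields the normal form \eqref{eq:msps-channel-form}.

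\emph{Entanglement breaking.} The factor $\cD_a^{(\ell)}$ is a measure-and-prepare channel, since it is a basis measurement followed by preparation of $\ket{ax}$. The factor $\Omega_d^{\otimes m}$ is a replacement channel. Their tensor product is therefore measure-and-prepare, hence EB, and conjugating by unitaries preserves this.

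\emph{Capacities.} Because the channel is EB, $Q=0$. For $P$, one route is Theorem~\ref{thm:environmental-bound}, which gives $P\le\MRM(\sigma)=0$. Equivalently, Theorem~\ref{thm:capacity-reduction}(ii) gives $P(\Lambda_{G,\sigma})=P(\Omega_d^{\otimes m})=0$.

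For $C$, Eq.~\eqref{eq:classical-reduction} gives $C=\ell\log_2d+C(\Omega_d^{\otimes m})=\ell\log_2 d$. As a direct check, $\cD_a^{(\ell)}$ transmits $\ell\log_2d$ bits noiselessly, while Shor's additivity for EB channels together with the output entropy bound gives the matching upper bound. Finally, $S(\sigma)=S(\sigma_0)=m\log_2 d$ by unitary invariance, so $\ell\log_2d=n\log_2d-S(\sigma)$.

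The only step needing care is the first one: getting from $\MRM(\sigma)=0$ to $\tau$ being exactly maximally mixed. This follows from the MRM formula in the lemma, because $S(\tau)\le m\log_2 d$ with equality only for $I/d^m$, and from the uniqueness of the minimizer, which gives $\sigma=\cM(\sigma)$. Everything after that is assembly of earlier results.
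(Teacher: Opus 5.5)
Your proof is correct and matches the paper's argument: you reduce to $\tau=I/d^m$ via the environmental normal form, combine the stabilizer-sector factorization with the maximally mixed-environment identity, argue entanglement breaking by measure-and-prepare, and take $P,Q$ from the capacity reduction and $C$ from Eq.~\eqref{eq:classical-reduction}. One notational slip: with $\sigma_0=K\sigma K^\dagger$, the covariance corollary gives $\Lambda_{G,\sigma_0}=\Ad_{K_{A'}}\circ\Lambda_{G,\sigma}\circ\Ad_{K_A}$, so your displayed relation for $\Lambda_{G,\sigma}$ needs the inverse Cliffords (or transport of $K^\dagger$), which is harmless.
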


\begin{proof}
The condition \(\MRM(\sigma)=0\) is equivalent to
\(\sigma=\cM(\sigma)\).  In the normal coordinates of
Lemma~\ref{lem:environment-normal-form}, this means
\(\tau=I/d^m\).  Equations~\eqref{eq:complement-factorization} and
\eqref{eq:depolarizing-example} then give
Eq.~\eqref{eq:msps-channel-form}.  For an input state $\rho$, let
$p_x=\Tr[(|x\rangle\langle x|\otimes I)\rho]$.
The normal form acts as
\[
 \rho\longmapsto\sum_xp_x|ax\rangle\langle ax|\otimes I/d^m.
\]
It measures a basis label and prepares an output state, so it is
entanglement breaking.
The private and quantum formulas follow from
Theorem~\ref{thm:capacity-reduction}, and the classical formula follows
from Eq.~\eqref{eq:classical-reduction}.
\end{proof}

\section{Discussion}
\label{sec:discussion}

For the regular Clifford interactions studied here, the cross-ratio
describes both the environmental operations compatible with the coupling
and its equivalence under changes of port coordinates.  Together with
the stabilizer subspace of a pure stabilizer environment, this invariant determines
the channel normal form.  Conjugate pairs of retained Weyl directions
support noiseless quantum transmission; directions commuting with the
whole retained subspace carry only classical labels.  The remaining
factors are completely depolarized.  The rank formula
and the associated Clifford encoding make this decomposition explicit
from the interaction and environment, without optimizing over input
states.

The scalar-cross-ratio class is precisely the class locally equivalent
to positive BGJ convolution.  Universal environmental transport in
this class explains both entanglement breaking for every pure
stabilizer environment and capacity reduction for arbitrary mixed
environments.  
This reduction removes a shared classical label while
preserving the quantum and private capacities exactly,
and yields the environmental MRM bound.
For non-scalar cross-ratios, the pure
stabilizer examples show how correlations within the environment
can instead supply noiseless degrees of freedom.

For mixed environments at non-scalar cross-ratio, the corresponding
reduction remains open.  A Clifford that puts the environment into
stabilizer normal form need not be transportable.  The centralizer
criterion specifies the available transformations; determining which
constraints they can separate from the transmitted quantum degrees
of freedom is a direct next problem.

\acknowledgments

This work is supported by the National Natural Science Foundation of China (Grant No. 12201555), the Natural Science Foundation of Hunan Province (Grant No. 2025JJ50050), and the Hunan Basic Science Research Center for Mathematical Analysis (2024JC2002). 

During manuscript preparation, the authors used \mbox{OpenAI} ChatGPT (GPT-6 Astra) to assist with language editing,
checking, and reformulation of some derivations. All AI-assisted mathematical content was independently verified and
revised by the authors, who take full responsibility for the results and the final manuscript.


\section*{Data availability}
The derivations and explicit examples supporting the conclusions are
contained in the article.  No external dataset was used.

\appendix

\section{Proofs for transport and local-orbit results}
\label{app:transport-proofs}

All matrices act on \(V=\F^{2n}\), except for the displayed two-port
block matrices; transposes and inverses are taken over \(\F\).

\subsection{Regular-block identities}

\begin{lemma}[Regular-block identities]
\label{lem:regular-block-identities}
Let
\(
S=\begin{psmallmatrix}A&B\\C&D\end{psmallmatrix}
\in\Sp(V\oplus V)
\)
and assume that \(B,C,D\) are invertible.  Put
\begin{equation*}
 M=C^{-1}D,
~
 \mathscr R=B^{-1}AC^{-1}D.
\end{equation*}
Then \(A\), \(\mathscr R\), and \(I-\mathscr R\) are invertible,
\begin{equation}
 \mathscr R^{\mathsf T}J=J\mathscr R,
 \label{eq:R-self-adjoint}
\end{equation}
and
\begin{align}
 B^{\mathsf T}JB&=J(I-\mathscr R)^{-1},
 \label{eq:HB-R}\\
 D^{\mathsf T}JD&=-J\mathscr R(I-\mathscr R)^{-1},
 \label{eq:HD-R}\\
 M^{\mathsf T}JM&=-J\mathscr R.
 \label{eq:HL-R}
\end{align}
Moreover,
\begin{equation}
 A=B\mathscr RM^{-1},
~ C=DM^{-1}.
 \label{eq:A-C-from-R}
\end{equation}
\end{lemma}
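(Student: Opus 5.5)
The plan is to write out the symplectic conditions on $S$ blockwise and then solve for the quantities of interest. The condition $S^{\mathsf T}(J\oplus J)S=J\oplus J$ splits into three block equations:
\begin{align*}
 A^{\mathsf T}JA+C^{\mathsf T}JC&=J,\\
 A^{\mathsf T}JB+C^{\mathsf T}JD&=0,\\
 B^{\mathsf T}JB+D^{\mathsf T}JD&=J.
\end{align*}
We also have the inverse form $S^{-1}=\begin{psmallmatrix}A^\sharp&C^\sharp\\B^\sharp&D^\sharp\end{psmallmatrix}$ from the proof of Proposition~\ref{prop:basic-identities}. From $SS^{-1}=I$ we read off the companion identities
\[
 AA^\sharp+BB^\sharp=I,\quad AC^\sharp+BD^\sharp=0,\quad CC^\sharp+DD^\sharp=I.
\]
The relations $C=DM^{-1}$ and $A=B\mathscr RM^{-1}$ in Eq.~\eqref{eq:A-C-from-R} hold by definition, since $M=C^{-1}D$ and $B\mathscr RM^{-1}=BB^{-1}AC^{-1}DD^{-1}C=A$. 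I will use these to write everything in terms of $B$, $D$, $M$ and $\mathscr R$.

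\textbf{Step 1: Eq.~\eqref{eq:HL-R} and self-adjointness.} Substitute $A=B\mathscr RM^{-1}$ and $C=DM^{-1}$ into the off-diagonal equation. This gives $M^{-\mathsf T}(\mathscr R^{\mathsf T}B^{\mathsf T}JB+D^{\mathsf T}JD)=0$, that is,
\[
 \mathscr R^{\mathsf T}B^{\mathsf T}JB=-D^{\mathsf T}JD .
\]
Write $H_B=B^{\mathsf T}JB$ and $H_D=D^{\mathsf T}JD$. Both are antisymmetric and invertible. The third block equation gives $H_B+H_D=J$, so $H_D=J-H_B$ and therefore $\mathscr R^{\mathsf T}H_B=H_B-J$.

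Next compute $M^{\mathsf T}JM=D^{\mathsf T}C^{-\mathsf T}JC^{-1}D$. To get a handle on $C^{-\mathsf T}JC^{-1}$, I will use the adjoint identities: $C^\sharp=J^{-1}C^{\mathsf T}J$, so $CC^\sharp=CJ^{-1}C^{\mathsf T}J$. The identity $CC^\sharp+DD^\sharp=I$ then becomes
\[
 CJ^{-1}C^{\mathsf T}+DJ^{-1}D^{\mathsf T}=J^{-1}.
\]
Multiplying on the left by $D^{-1}$ and on the right by $D^{-\mathsf T}$ gives
\[
 M^{-1}J^{-1}M^{-\mathsf T}=D^{-1}J^{-1}D^{-\mathsf T}-J^{-1}=H_D^{-1}-J^{-1}.
\]
Inverting, $M^{\mathsf T}JM=(H_D^{-1}-J^{-1})^{-1}$.

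Combining this with $H_D=-\mathscr R^{\mathsf T}H_B$ and $H_B+H_D=J$ should reduce the whole system to a single unknown $H_B$, and the remaining algebra then yields $H_B=J(I-\mathscr R)^{-1}$. Once that is established, self-adjointness follows by transposition: $H_B$ is antisymmetric, so
\[
 J(I-\mathscr R)^{-1}=-(I-\mathscr R)^{-\mathsf T}J^{\mathsf T}=(I-\mathscr R^{\mathsf T})^{-1}J,
\]
which is equivalent to $J(I-\mathscr R^{\mathsf T})\cdots$, and rearranging gives $\mathscr R^{\mathsf T}J=J\mathscr R$.

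A cleaner route to $H_B$ may be to solve $\mathscr R^{\mathsf T}H_B=H_B-J$ directly as $(I-\mathscr R^{\mathsf T})H_B=J$. This requires invertibility of $I-\mathscr R^{\mathsf T}$, which comes from $J=(I-\mathscr R^{\mathsf T})H_B$ with $J$ invertible. So $I-\mathscr R$ is invertible and $H_B=(I-\mathscr R^{\mathsf T})^{-1}J$. Antisymmetry of $H_B$ then forces $(I-\mathscr R^{\mathsf T})^{-1}J=J(I-\mathscr R)^{-1}$, which gives Eq.~\eqref{eq:R-self-adjoint} and Eq.~\eqref{eq:HB-R} together. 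I will use this direct route.

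\textbf{Step 2: the remaining identities.} With $H_B$ known, $H_D=J-H_B=J(I-\mathscr R)^{-1}(I-\mathscr R-I)=-J\mathscr R(I-\mathscr R)^{-1}$, which is Eq.~\eqref{eq:HD-R}. Since $H_D$ is invertible (because $D$ is), $\mathscr R$ is invertible. Invertibility of $A$ then follows from $A=B\mathscr RM^{-1}$.

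For Eq.~\eqref{eq:HL-R}, use $H_D^{-1}-J^{-1}$. First, $H_D^{-1}=-(I-\mathscr R)\mathscr R^{-1}J^{-1}=(I-\mathscr R^{-1})J^{-1}$. Then
\[
 H_D^{-1}-J^{-1}=-\mathscr R^{-1}J^{-1},
\]
and inverting gives $M^{\mathsf T}JM=-J\mathscr R$.

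The main obstacle is keeping the bookkeeping of transposes and inverses consistent, in particular obtaining an independent relation for $M^{\mathsf T}JM$. I plan to get that relation from $SS^{-1}=I$ rather than from $S^{\mathsf T}(J\oplus J)S=J\oplus J$, since the latter only involves $H_C$ and $A$. If sign issues appear, a good check is the BGJ case of Eq.~\eqref{eq:convolution-blocks}, where all the blocks are diagonal scalars.
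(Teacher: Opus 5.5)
Your proof is correct. Up through Eq.~\eqref{eq:HD-R} it follows the paper's argument exactly: the same three block equations, the substitution $A=B\mathscr RM^{-1}$, $C=DM^{-1}$ into the cross-block equation, the relation $(I-\mathscr R^{\mathsf T})H_B=J$, antisymmetry of $H_B$, and then $H_D=J-H_B$, which gives invertibility of $\mathscr R$ and then of $A$. The exploratory sentences before you settle on the ``direct route'' should be deleted from a write-up.

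The only real difference is Eq.~\eqref{eq:HL-R}.

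\emph{Your route.} You use the row identity $CC^\sharp+DD^\sharp=I$ from $SS^{-1}=I$. You rewrite it as $M^{-1}J^{-1}M^{-\mathsf T}=H_D^{-1}-J^{-1}=-\mathscr R^{-1}J^{-1}$ and invert; this is correct.

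\emph{The paper's route.} The paper substitutes the same expressions for $A$ and $C$ into the first-column equation $A^{\mathsf T}JA+C^{\mathsf T}JC=J$. This gives $M^{\mathsf T}JM=\mathscr R^{\mathsf T}H_B\mathscr R+H_D=J\mathscr R^2(I-\mathscr R)^{-1}-J\mathscr R(I-\mathscr R)^{-1}=-J\mathscr R$, where self-adjointness is used to move the transpose.

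So your stated reason for avoiding $S^{\mathsf T}(J\oplus J)S=J\oplus J$ is mistaken. Once $A$ and $C$ are written through $B,D,M,\mathscr R$, that equation yields $M^{\mathsf T}JM$ directly.

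Your route avoids self-adjointness at this step, at the cost of an extra inversion. The paper's route stays inside the three column equations, which are also sufficient for symplecticity. That is exactly what gets reversed in the realization part of the proof of Theorem~\ref{thm:local-orbit-classification}: there, $R^{\mathsf T}H_BR+H_D=H_M$ is checked in order to build a symplectic $S$ from a prescribed $R$.
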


\begin{proof}
We use the symplectic block equations to determine the forms associated
with \(B,D,M\).  Their nonsingularity will also give the claimed
invertibility of \(A\) and \(\mathscr R\).

\emph{1. Determine the form associated with \(B\).}
Expanding \(S^{\mathsf T}(J\oplus J)S=J\oplus J\) gives
\begin{align*}
 A^{\mathsf T}JA+C^{\mathsf T}JC&=J,\\
 A^{\mathsf T}JB+C^{\mathsf T}JD&=0,\\
 B^{\mathsf T}JB+D^{\mathsf T}JD&=J.
\end{align*}
The other off-diagonal block is the negative transpose of the second
equation, so these three equations are also sufficient for
symplecticity.  Write
\(H_B=B^{\mathsf T}JB\) and \(H_D=D^{\mathsf T}JD\).
Both are nonsingular alternating forms.  The definitions of \(M\) and
\(\mathscr R\) give Eq.~\eqref{eq:A-C-from-R} without yet assuming
that \(A\) or \(\mathscr R\) is invertible.  Substituting these
expressions into the cross-block equation and multiplying on the left
by \(M^{\mathsf T}\) gives
\begin{equation*}
 H_B+H_D=J,
~
 \mathscr R^{\mathsf T}H_B+H_D=0.
\end{equation*}
Subtracting the two identities yields
\begin{equation*}
 (I-\mathscr R^{\mathsf T})H_B=J.
\end{equation*}
The matrices \(H_B\) and \(J\) are invertible.  Thus
\(I-\mathscr R\) is invertible and
\begin{equation}
 H_B=(I-\mathscr R^{\mathsf T})^{-1}J.
 \label{eq:HB-first-form}
\end{equation}

\emph{2. Derive symplectic self-adjointness of \(\mathscr R\).}
The expression for \(H_B\) must be alternating.  Transposing
Eq.~\eqref{eq:HB-first-form} and using \(J^{\mathsf T}=-J\) gives
\[
 H_B^{\mathsf T}=-J(I-\mathscr R)^{-1}.
\]
Since \(H_B^{\mathsf T}=-H_B\), we therefore have
\[
 (I-\mathscr R^{\mathsf T})^{-1}J
 =J(I-\mathscr R)^{-1}.
\]
Multiplication on the left by \(I-\mathscr R^{\mathsf T}\) and on
the right by \(I-\mathscr R\) gives
\[
 J(I-\mathscr R)=(I-\mathscr R^{\mathsf T})J,
\]
which is Eq.~\eqref{eq:R-self-adjoint}.  The preceding equality for
\(H_B\) proves Eq.~\eqref{eq:HB-R}.

\emph{3. Recover the remaining forms and invertibility.}
The identity \(H_D=J-H_B\) gives
\[
 H_D=J\bigl[I-(I-\mathscr R)^{-1}\bigr]
     =-J\mathscr R(I-\mathscr R)^{-1},
\]
proving Eq.~\eqref{eq:HD-R}.  Its left-hand side is nonsingular, so
\(\mathscr R\) is nonsingular.  Equation~\eqref{eq:A-C-from-R}
now also proves that \(A\) is invertible.

Finally, substitute Eq.~\eqref{eq:A-C-from-R} into the first-column
identity and multiply by \(M^{\mathsf T}\) and \(M\).  This gives
\begin{align*}
 M^{\mathsf T}JM
 &=\mathscr R^{\mathsf T}H_B\mathscr R+H_D\\
 &=J\mathscr R^2(I-\mathscr R)^{-1}
   -J\mathscr R(I-\mathscr R)^{-1}\\
 &=-J\mathscr R.
\end{align*}
Here \(\mathscr R^{\mathsf T}J=J\mathscr R\) was used to move
the transpose, and \(\mathscr R\) commutes with
\((I-\mathscr R)^{-1}\).  This proves Eq.~\eqref{eq:HL-R}.
\end{proof}

\subsection{The centralizer criterion}
\label{app:centralizer-proof}

\begin{proof}[Proof of Theorem~\ref{thm:centralizer-transport}]
Fix regular \(S\) and \(F\in\Sp(V)\), and write
\(\mathscr R=\mathscr R(S)\).
We first solve the block equations for the only possible compensators.
We then show that the commutation condition makes these candidates
symplectic, establishing sufficiency.

\emph{1. Necessary candidates and the commutation condition.}
Block multiplication rewrites Eq.~\eqref{eq:general-transport} as
\[
 \begin{pmatrix}A&BF\\C&DF\end{pmatrix}
 =\begin{pmatrix}
 X_FAZ_F&X_FB\\Y_FCZ_F&Y_FD
 \end{pmatrix},
\]
so
\begin{align*}
 A&=X_FAZ_F,~BF=X_FB,\\
 C&=Y_FCZ_F,~DF=Y_FD.
\end{align*}
The second, fourth, and third equations successively give
\begin{align*}
 X_F&=BFB^{-1},&Y_F&=DFD^{-1},\\
 Z_F&=C^{-1}Y_F^{-1}C=MF^{-1}M^{-1},&M&=C^{-1}D.
\end{align*}
Using \(A=B\mathscr RM^{-1}\), the remaining equation becomes
\begin{align*}
 X_FAZ_F&=BF\mathscr RF^{-1}M^{-1},\\
 A=X_FAZ_F
 &\quad\Longleftrightarrow\quad
 \mathscr R=F\mathscr RF^{-1}\\
 &\quad\Longleftrightarrow\quad
 F\mathscr R=\mathscr RF.
\end{align*}
This proves necessity and uniqueness of the candidate compensators.

\emph{2. Symplecticity of the candidates.}
Assume \(F\mathscr R=\mathscr RF\) and use the three candidates
just derived.  The block equations already reduce to this commutation
condition; the remaining requirement is
\(X_F,Y_F,Z_F\in\Sp(V)\).
By their conjugation formulas, it suffices to show that \(F\) preserves
the forms associated with \(B,D,M\).
Commutation with an invertible
matrix implies commutation with its inverse; hence
\begin{gather*}
 F(I-\mathscr R)=(I-\mathscr R)F\\
 \Longrightarrow\quad
 F(I-\mathscr R)^{-1}=(I-\mathscr R)^{-1}F.
\end{gather*}
For each of the three expressions
\[
 g(\mathscr R)\in
 \bigl\{(I-\mathscr R)^{-1},
 -\mathscr R(I-\mathscr R)^{-1},-\mathscr R\bigr\},
\]
we therefore have
\[
 F^{\mathsf T}Jg(\mathscr R)F
 =F^{\mathsf T}JFg(\mathscr R)=Jg(\mathscr R).
\]
By Eqs.~\eqref{eq:HB-R}--\eqref{eq:HL-R}, this says
\begin{gather*}
 F^{\mathsf T}HF=H,\\
 H\in\{B^{\mathsf T}JB,D^{\mathsf T}JD,M^{\mathsf T}JM\}.
\end{gather*}
Multiplication by \(F^{-\mathsf T}\) and \(F^{-1}\) also gives
\(F^{-\mathsf T}HF^{-1}=H\).

For invertible \(N\), if \(F\) preserves \(N^{\mathsf T}JN\), meaning
\(F^{\mathsf T}(N^{\mathsf T}JN)F=N^{\mathsf T}JN\), then its
conjugate by \(N\) is symplectic:
\begin{align*}
 (NFN^{-1})^{\mathsf T}J(NFN^{-1})
 &=N^{-\mathsf T}F^{\mathsf T}(N^{\mathsf T}JN)FN^{-1}\\
 &=N^{-\mathsf T}(N^{\mathsf T}JN)N^{-1}=J.
\end{align*}
Taking \(N=B\) and \(N=D\) proves symplecticity of \(X_F\) and
\(Y_F\).  Taking \(N=M\) and replacing \(F\) by \(F^{-1}\)
proves it for \(Z_F\).  Finally,
\begin{align*}
 X_FB&=BF,\qquad Y_FD=DF,\\
 Y_FCZ_F&=DF(M^{-1}M)F^{-1}M^{-1}=C,\\
 X_FAZ_F&=BF\mathscr RF^{-1}M^{-1}=A,
\end{align*}
which proves the transport identity.
\end{proof}

\subsection{Proof of Corollary~\ref{cor:general-pair}}
\label{app:clifford-lifts}

\begin{proof}[Proof of Corollary~\ref{cor:general-pair}]
The matrix criterion determines the homogeneous actions.  To lift it
for an arbitrary environmental Clifford, we first separate its Weyl
displacement and then track the same identity through both partial traces.

\emph{1. Separate the displacement.}
Let $K$ have homogeneous action $F$, and let $U_F$ be the exact lift
from Sec.~\ref{sec:weyl-clifford}.  Then $Q_0=KU_F^\dagger$ satisfies
\[
 Q_0w(z)Q_0^\dagger=\chi(z)w(z).
\]
The Weyl multiplication law implies
$\chi(z+z')=\chi(z)\chi(z')$.  Every such character has the form
$\chi(z)=\zeta^{[v,z]}$ for a unique $v\in V$, because the symplectic
form is nondegenerate.  Conjugation by $w(v)$ has exactly this action,
so $w(v)^\dagger Q_0$ commutes with the full Weyl basis and is a scalar.
Thus
\[
 K\doteqph w(v)U_F.
\]
The argument also applies on $V\oplus V$: any other Clifford lift of
$S$ differs from $U_S$ by a product of output Weyl operators and a phase.

\emph{2. Lift the transport identity.}
Necessity follows by taking homogeneous actions in
Eq.~\eqref{eq:general-unitary-transport} and applying
Theorem~\ref{thm:centralizer-transport}.
Conversely, assume $F\mathscr R(S)=\mathscr R(S)F$.
The exact lifts of the compensators in
Eq.~\eqref{eq:transport-compensators} satisfy
\[
 U_S(I\otimes U_F)\doteqph
 (U_{X_F}\otimes U_{Y_F})U_S(U_{Z_F}\otimes I),
\]
because both sides induce the same exact action on every joint Weyl
operator.  The environmental displacement is transported explicitly:
\[
 U_S(I\otimes w(v))=(w(Bv)\otimes w(Dv))U_S.
\]
Combining these identities proves
Eq.~\eqref{eq:general-unitary-transport} with
\begin{align*}
 K_A=U_{Z_F},~
 K_{A'}=w(Bv)U_{X_F},~
 K_{B'}=w(Dv)U_{Y_F}.
\end{align*}

\emph{3. Obtain the receiver and complementary channels.}
For $\sigma'=K\sigma K^\dagger$, choose the purification
$|\phi_{\sigma'}\rangle=(K\otimes I_R)|\phi_\sigma\rangle$.
Applying the unitary identity to an arbitrary input vector gives
\begin{align*}
 V_{S,\sigma'}|\psi\rangle
 &=(U_S(I\otimes K)\otimes I_R)
 (|\psi\rangle\otimes|\phi_\sigma\rangle)\\
 &\doteqph(K_{A'}\otimes K_{B'}\otimes I_R)
 V_{S,\sigma}K_A|\psi\rangle.
\end{align*}
The phase cancels on density operators.  Tracing out $B'R$ and $A'$,
respectively, therefore yields
\begin{align*}
 \cN_{S,\sigma'}
 &=\Ad_{K_{A'}}\circ\cN_{S,\sigma}\circ\Ad_{K_A},\\
 \widehat\cN_{S,\sigma'}
 &=\Ad_{K_{B'}\otimes I_R}\circ
 \widehat\cN_{S,\sigma}\circ\Ad_{K_A}.
\end{align*}
Both relations use the same input unitary and leave the purification
reference unchanged.
\end{proof}

\subsection{Completeness of the local-orbit invariant}
\label{app:local-orbit-proof}

\begin{proof}[Proof of Theorem~\ref{thm:local-orbit-classification}]
Lemma~\ref{lem:regular-block-identities} gives
\(\mathscr R(S)\in\mathfrak X(V)\) for every regular \(S\).
We prove that two regular interactions are locally equivalent exactly
when their cross-ratios are symplectically conjugate, and then realize
every \(R\in\mathfrak X(V)\).
Write
\[
 S=\begin{pmatrix}A&B\\C&D\end{pmatrix},~
 S'=\begin{pmatrix}A'&B'\\C'&D'\end{pmatrix}.
\]

\emph{1. Local equivalence implies conjugate cross-ratios.}
Suppose \(S'=(P\oplus Q)S(U\oplus T)^{-1}\), with
\(P,Q,U,T\in\Sp(V)\).  In block form this means
\[
 \begin{pmatrix}A'&B'\\C'&D'\end{pmatrix}
 =\begin{pmatrix}
 PAU^{-1}&PBT^{-1}\\QCU^{-1}&QDT^{-1}
 \end{pmatrix}.
\]
Comparing the four blocks and substituting into the cross-ratio gives
\begin{align*}
 \mathscr R(S')
 &=(TB^{-1}P^{-1})(PAU^{-1})\\
 &\quad\times(UC^{-1}Q^{-1})(QDT^{-1})\\
 &=TB^{-1}AC^{-1}DT^{-1}
  =T\mathscr R(S)T^{-1}.
\end{align*}
Thus the map on local orbits is well defined.

\emph{2. A conjugating matrix determines a local equivalence.}
Now assume \(\mathscr R'=T\mathscr RT^{-1}\), where
\(\mathscr R=\mathscr R(S)\), \(\mathscr R'=\mathscr R(S')\),
and \(T\in\Sp(V)\) is given.
We must construct \(P,Q,U\in\Sp(V)\) satisfying the displayed
block identity.  Put \(M=C^{-1}D\) and \(M'=C'^{-1}D'\).
The \(B'\) and \(D'\) blocks force
\[
 P=B'TB^{-1},~ Q=D'TD^{-1}.
\]
The \(C'\) block then forces
\[
 U=C'^{-1}QC=C'^{-1}D'TD^{-1}C=M'TM^{-1}.
\]
These choices satisfy three block equations.  It remains to prove that
\(P,Q,U\) are symplectic and that the \(A'\) block also agrees.

To verify symplecticity, we compare the alternating forms associated
with \(B,D,M\) and \(B',D',M'\).  The three expressions in
Eqs.~\eqref{eq:HB-R}--\eqref{eq:HL-R} are
\[
 g(X)\in\{(I-X)^{-1},-X(I-X)^{-1},-X\}.
\]
For each of them, conjugacy of \(\mathscr R,\mathscr R'\) and
symplecticity of \(T\) give
\begin{gather*}
 g(\mathscr R')=Tg(\mathscr R)T^{-1},~
 JT=T^{-\mathsf T}J,\\
 Jg(\mathscr R')=T^{-\mathsf T}Jg(\mathscr R)T^{-1}.
\end{gather*}
Lemma~\ref{lem:regular-block-identities} therefore yields
\begin{align*}
 B'^{\mathsf T}JB'&=T^{-\mathsf T}B^{\mathsf T}JBT^{-1},\\
 D'^{\mathsf T}JD'&=T^{-\mathsf T}D^{\mathsf T}JDT^{-1},\\
 M'^{\mathsf T}JM'&=T^{-\mathsf T}M^{\mathsf T}JMT^{-1}.
\end{align*}
Using the first identity, we obtain
\begin{align*}
 P^{\mathsf T}JP
 &=B^{-\mathsf T}T^{\mathsf T}(B'^{\mathsf T}JB')TB^{-1}\\
 &=B^{-\mathsf T}(B^{\mathsf T}JB)B^{-1}=J.
\end{align*}
The second and third identities give, in the same way,
\(Q^{\mathsf T}JQ=J\) and \(U^{\mathsf T}JU=J\).
Thus all three candidates are symplectic.

For the remaining block, Eq.~\eqref{eq:A-C-from-R} gives
\(A=B\mathscr RM^{-1}\) and
\(A'=B'\mathscr R'M'^{-1}\).  Hence
\begin{align*}
 PAU^{-1}
 &=B'TB^{-1}(B\mathscr RM^{-1})(MT^{-1}M'^{-1})\\
 &=B'T\mathscr RT^{-1}M'^{-1}
  =B'\mathscr R'M'^{-1}=A'.
\end{align*}
All four block equations now hold, proving
\(S'=(P\oplus Q)S(U\oplus T)^{-1}\).
This establishes injectivity on local orbits.

\emph{3. Every admissible cross-ratio is realized.}
Let \(R\in\mathfrak X(V)\) be prescribed.  We seek a regular
symplectic matrix \(S\) with \(\mathscr R(S)=R\).
We reverse the identities of Lemma~\ref{lem:regular-block-identities}:
first realize the required alternating forms, then assemble and verify
the four blocks.  The required forms are
\begin{align}
 H_B=J(I-R)^{-1},~
 H_D=-JR(I-R)^{-1},~
 H_M=-JR.
 \label{eq:forms-from-R}
\end{align}
To realize them as \(N^{\mathsf T}JN\), we must check that they
are alternating and nonsingular.  The hypothesis
\(R^{\mathsf T}J=JR\) gives
\begin{align*}
 (I-R^{\mathsf T})J&=J(I-R),\\
 (I-R^{\mathsf T})^{-1}J&=J(I-R)^{-1}.
\end{align*}
Consequently each of the three expressions \(g(R)\) above satisfies
\[
 g(R)^{\mathsf T}J=Jg(R),~
 [Jg(R)]^{\mathsf T}=-Jg(R).
\]
Since the characteristic is odd and \(J,R,I-R\) are invertible,
the forms in Eq.~\eqref{eq:forms-from-R} are alternating and
nonsingular.  A nonsingular alternating form $H$ admits a symplectic
basis: choose $e,f$ with $e^{\mathsf T}Hf=1$, then repeat on their
$H$-orthogonal complement.  With the paired basis vectors as the
columns of $W_H$, ordered with all $e$ vectors before the $f$ vectors,
we obtain
\[
 W_H^{\mathsf T}HW_H=J
 \quad\Longrightarrow\quad
 (W_H^{-1})^{\mathsf T}J W_H^{-1}=H.
\]
Applying this to $H_B,H_D,H_M$ supplies invertible
$B=W_{H_B}^{-1}$, $D=W_{H_D}^{-1}$, and $M=W_{H_M}^{-1}$ satisfying
\begin{gather*}
 B^{\mathsf T}JB=H_B,~ D^{\mathsf T}JD=H_D,~
 M^{\mathsf T}JM=H_M.
\end{gather*}
Following Eq.~\eqref{eq:A-C-from-R}, define
\[
 C=DM^{-1},~ A=BRM^{-1},~
 S=\begin{pmatrix}A&B\\C&D\end{pmatrix}.
\]
All four blocks are invertible.  To check that \(S\) is symplectic,
we use the three block conditions from
Lemma~\ref{lem:regular-block-identities}.  The prescribed forms obey
\begin{align*}
 H_B+H_D&=J,\\
 R^{\mathsf T}H_B+H_D&=0,\\
 R^{\mathsf T}H_BR+H_D&=-JR=H_M.
\end{align*}
Thus
\begin{align*}
 B^{\mathsf T}JB+D^{\mathsf T}JD&=J,\\
 A^{\mathsf T}JB+C^{\mathsf T}JD
 &=M^{-\mathsf T}(R^{\mathsf T}H_B+H_D)=0,\\
 A^{\mathsf T}JA+C^{\mathsf T}JC
 &=M^{-\mathsf T}(R^{\mathsf T}H_BR+H_D)M^{-1}\\
 &=M^{-\mathsf T}H_M M^{-1}=J.
\end{align*}
Therefore \(S\in\Sp_{\rm reg}(V\oplus V)\).
Finally, its cross-ratio is the prescribed matrix:
\[
 B^{-1}AC^{-1}D=RM^{-1}(MD^{-1})D=R.
\]
This proves surjectivity and completes the orbit-space bijection.
\end{proof}

The relation to the classical double-coset
classification~\cite[Proposition~3.1]{GoldsteinGuralnick2007} is
$B\mapsto PBT^{-1}$, with $H=B^{\mathsf T}JB$ and $J-H$ nonsingular.
Conversely, choose $D$ with $D^{\mathsf T}JD=J-H$.  The column
$\binom{B}{D}$ is then a symplectic embedding.  A symplectic basis of
its orthogonal complement supplies $\binom{A}{C}$.  If $Cx=0$,
orthogonality gives $B^{\mathsf T}JAx=0$, hence $Ax=0$ and $x=0$;
the same argument with $D$ proves that $A$ is invertible.  Thus this
completion is regular.  At fixed $B$, two choices of $D$ differ by a
symplectic operation at the second output, and two complement bases
differ by one at the first input.  The four-port orbits are therefore
the admissible double cosets of $B$, with the change of invariant
\[
 J^{-1}H=(I-\mathscr R)^{-1}.
\]
The proof above gives the corresponding port operations explicitly;
these operations give the channel relation in
Eq.~\eqref{eq:family-equivalence}.
For one qudit per port, the parameter \(\delta\) of the single-site
classification in Ref.~\cite{Pahari2026} corresponds to
\(\mathscr R=(\delta-1)I/\delta\).

\subsection{Universal transport and convolution}
\label{app:universal-proof}

\begin{proof}[Proof of Corollary~\ref{cor:universal-transport}]
Put \(\mathscr R=\mathscr R(S)\) and \(M=C^{-1}D\).
We prove \(1\Longleftrightarrow2\), then
\(2\Longleftrightarrow3\), and finally construct a convolution
representative for each scalar value of the cross-ratio.

\emph{1. Universal transport and a scalar cross-ratio.}
By Theorem~\ref{thm:centralizer-transport}, condition~1 is equivalent
to \(\mathscr R\) commuting with every element of \(\Sp(V)\).
We first show that an endomorphism \(N\) with this property must be
scalar.  For any nonzero \(v\in V\), consider the transvection
\[
 T_v(x)=x+[v,x]v.
\]
Since \([v,v]=0\), its inverse is
\(T_v^{-1}(x)=x-[v,x]v\).  Moreover,
\begin{align*}
 [T_vx,T_vy]=[x,y]+[v,y][x,v]+[v,x][v,y]=[x,y],
\end{align*}
where the last equality uses \([x,v]=-[v,x]\).
Thus \(T_v\in\Sp(V)\), so \(NT_v=T_vN\).  Expanding this
identity and cancelling \(Nx\) gives
\begin{equation*}
 [v,x]Nv=[v,Nx]v.
\end{equation*}
For each nonzero \(v\), nondegeneracy supplies \(x\) with
\([v,x]=1\).  Hence \(Nv=\lambda_v v\) for some \(\lambda_v\in\F\).
If \(u,v\) are linearly independent, linearity gives
\[
 \lambda_{u+v}(u+v)
 =N(u+v)=\lambda_u u+\lambda_v v.
\]
Thus \(\lambda_u=\lambda_v=\lambda_{u+v}\).  Proportional nonzero
vectors also have the same eigenvalue, so \(N\) is scalar.

Applying this conclusion to \(N=\mathscr R\) shows that condition~1
implies \(\mathscr R=rI\).  Conversely, a scalar \(\mathscr R\)
commutes with every \(F\in\Sp(V)\), so Theorem~\ref{thm:centralizer-transport}
gives condition~1.  Lemma~\ref{lem:regular-block-identities} makes
both \(\mathscr R\) and \(I-\mathscr R\) invertible, excluding
\(r=0,1\).  This proves the equivalence of conditions~1 and~2.

\emph{2. The conformal symplectic condition.}
To obtain condition~3 from condition~2, substitute \(\mathscr R=rI\)
into Eqs.~\eqref{eq:HB-R}--\eqref{eq:HL-R}:
\begin{align*}
 B^{\mathsf T}JB&=(1-r)^{-1}J,\\
 D^{\mathsf T}JD&=-r(1-r)^{-1}J,\\
 (C^{-1}D)^{\mathsf T}J(C^{-1}D)&=-rJ.
\end{align*}
All three multipliers are nonzero because \(r\ne0,1\).  By the
definition of \(\CSp(V)\), these identities give condition~3.
For BGJ representatives, the corresponding similitude identities
for the characteristic-function maps are recorded in
Ref.~\cite[Appendix~C]{XiongKimZhangFeiWu2026}.
Conversely, condition~3 gives, in particular,
\(M^{\mathsf T}JM=\mu(M)J\), with \(\mu(M)\ne0\).
Equation~\eqref{eq:HL-R} therefore yields
\[
 -J\mathscr R=\mu(M)J,
~ \mathscr R=-\mu(M)I.
\]
Regularity excludes \(r=0,1\), proving condition~2.

\emph{3. A convolution representative.}
It remains to realize each scalar cross-ratio by a convolution interaction.
For \(r\in\F\setminus\{0,1\}\), choose
\[
 G_r=\begin{pmatrix}r&1\\1&1\end{pmatrix}.
\]
All four entries are nonzero and \(\det G_r=r-1\ne0\), so \(G_r\)
is positive and \(S_{G_r}\) is regular.  Multiplying the blocks in
Eq.~\eqref{eq:convolution-blocks} gives the ratio
\(g_{00}g_{11}/(g_{01}g_{10})=r\) on both momentum and position
coordinates.  Consequently,
\[
 \mathscr R(S)=rI=\mathscr R(S_{G_r}).
\]
Theorem~\ref{thm:local-orbit-classification}, with \(T=I_V\), gives
the claimed ordered local equivalence.
\end{proof}

The positive stabilizer convolutions of
Ref.~\cite[Lemmas~3.1--3.2 and Theorem~3.5]{SunJinJin2026} have
\[
 F_t=\begin{pmatrix}\alpha&t\alpha\\-t\beta&\beta\end{pmatrix},
~ (1+t^2)\alpha\beta=1,
\]
with \(t,\alpha,\beta\ne0\).  Hence
\(r(F_t)=-t^{-2}\), and \(r(F_t)=r(F_{t'})\) exactly when
\(t'=\pm t\).  Their sign-orbit identification is therefore the
scalar cross-ratio identification on this family.  The
fixed-environment channel equivalences preserve optimized information
at every block length.

\section{Pure stabilizer environments}
\label{app:pure-stabilizer}

\subsection{Channel normal form and capacities}
\label{app:stabilizer-normal-proof}

\begin{proof}[Proof of Theorem~\ref{thm:stabilizer-regular}]
Fix a regular $S$ and a pure stabilizer environment $\psi_L$, and
write $\cN=\cN_{S,\psi_L}$.
We first identify the output Weyl labels and compute their restricted
symplectic rank.  An input Clifford then removes the change of labels,
and a symplectic basis exhibits the three channel factors.  Finally,
we derive their capacities without assuming capacity additivity.

\emph{1. Stabilizer character and output support.}
Write $\psi_L=\Ad_{w(u)}(\psi_L^0)$, where
$\Xi_{\psi_L^0}=\mathbf1_L$, and set
$\cN_0=\cN_{S,\psi_L^0}$.
The Weyl covariance in Eq.~\eqref{eq:weyl-covariance} gives
\[
 \cN=\Ad_{w(Bu)}\circ\cN_0.
\]
Thus the character changes only an output displacement.
Set $K=(B^\sharp)^{-1}L$.  The characteristic-function identity
\eqref{eq:characteristic-product} gives
\begin{align*}
 \Xi_{\cN_0(\rho)}(z)=\Xi_\rho(A^\sharp z)\mathbf1_L(B^\sharp z)=\mathbf1_K(z)\Xi_\rho(A^\sharp z).
\end{align*}
Regularity implies $\dim K=\dim L=n$.  The retained labels form a
linear subspace $K$, which need not be isotropic.

\emph{2. Restricted rank and the cross-ratio.}
Put $\mathscr R=\mathscr R(S)$ and $T_B=(B^\sharp)^{-1}$.
The columns of $T_BE$ form a basis of $K$.
Equation~\eqref{eq:HB-R} yields
\begin{align*}
 T_B^\sharp T_B
 &=B^{-1}(B^\sharp)^{-1}=(B^\sharp B)^{-1}=I-\mathscr R,\\
 T_B^{\mathsf T}JT_B&=J(I-\mathscr R),\\
 (T_BE)^{\mathsf T}J(T_BE)&=-E^{\mathsf T}J\mathscr R E,
\end{align*}
where the last equality uses $E^{\mathsf T}JE=0$.
Consequently $2q=\operatorname{rank}(E^{\mathsf T}J\mathscr R E)$
is the rank of the alternating form restricted to $K$; it is even
and independent of the basis $E$.
The Lagrangian identity $L^\perp=L$ also identifies its kernel:
for $c\in\F^n$,
\begin{align*}
 E^{\mathsf T}J\mathscr R Ec=0
 &\iff [v,\mathscr R Ec]=0\quad\text{for all }v\in L\\
 &\iff \mathscr R Ec\in L^\perp=L\\
 &\iff Ec\in L\cap\mathscr R^{-1}L.
\end{align*}
Since $E:\F^n\to L$ is an isomorphism, rank--nullity gives
\[
 2q=n-\dim(L\cap\mathscr R^{-1}L).
\]
In particular, $0\le q\le\lfloor n/2\rfloor$.
The radical $R_0=K\cap K^\perp$ has dimension $a=n-2q$.

\emph{3. Symplectic extension and channel normal form.}
We must remove the input-label map $A^\sharp|_K$ by extending it to
a symplectic map on $V$.
For $z,z'\in K$, symplecticity of $S^{-1}$ and isotropy of
$B^\sharp K=L$ give
\begin{align*}
 [z,z']
 &=[S^{-1}(z,0),S^{-1}(z',0)]_{V\oplus V}\\
 &=[A^\sharp z,A^\sharp z']+[B^\sharp z,B^\sharp z']\\
 &=[A^\sharp z,A^\sharp z'].
\end{align*}
Regularity makes $\phi=A^\sharp|_K$ injective; the displayed
identity shows that it preserves the restricted form.  The following basis construction gives its extension,
including when $R_0\ne\{0\}$.

Choose a linear complement $H$ with $K=H\oplus R_0$.
If $h\in H$ is orthogonal to $H$, it is orthogonal to all of $K$,
and hence belongs to $H\cap R_0=\{0\}$.
Therefore $H$ is nondegenerate and has dimension $2q$.
Choose a symplectic basis $h_i,k_i$ of $H$, with
$[h_i,k_j]=\delta_{ij}$ and the other pairings zero, and a basis
$r_1,\ldots,r_a$ of $R_0$.
Nondegeneracy of $H^\perp$ gives $t_j\in H^\perp$ satisfying
$[r_i,t_j]=\delta_{ij}$: the independent linear functionals
$[r_i,\cdot]$ define a surjection $H^\perp\to\F^a$.
To make the $t_j$ mutually orthogonal, put
\[
 b_{ij}=[t_i,t_j],~
 s_i=t_i-\frac12\sum_jb_{ij}r_j.
\]
Then
\[
 [r_i,s_j]=\delta_{ij},~
 [s_i,s_j]=b_{ij}-\tfrac12b_{ij}+\tfrac12b_{ji}=0.
\]
The $q+a$ pairs just constructed span a nondegenerate subspace;
its symplectic orthogonal complement has dimension $2q$ and supplies
the remaining $q$ pairs.  Order this symplectic basis with the $H$
pairs first, the $(r_j,s_j)$ pairs next, and the remaining pairs last,
and form $\Gamma$ with all first members before all second members.
In standard coordinates it gives
\[
 \Gamma^{\mathsf T}J\Gamma=J,~\Gamma K_{\rm std}=K,
\]
where
\[
 K_{\rm std}
 =\operatorname{Span}\{e_i,f_i:1\le i\le q\}
 \oplus\operatorname{Span}\{e_i:q<i\le n-q\}.
\]
Here $e_i,f_i$ denote the standard coordinate pairs in $V$.
Starting with $\phi h_i,\phi k_i$ on $H$ and $\phi r_j$ on $R_0$,
repeat the completion in the image space to obtain a basis matrix
$\Gamma'$.  The corresponding columns on $K$ agree with $\phi$, so
\[
 F=\Gamma'\Gamma^{-1},~F^{\mathsf T}JF=J,~F|_K=\phi.
\]
This $F$ acts on the signal input; no environmental transport
condition is required.

Choose an exact Clifford lift $U_F$.
Since $\Xi_{U_F\rho U_F^\dagger}(y)=\Xi_\rho(F^{-1}y)$,
\begin{align*}
 \Xi_{\cN_0(U_F\rho U_F^\dagger)}(z)
 &=\mathbf1_K(z)\Xi_\rho(F^{-1}A^\sharp z)\\
 &=\mathbf1_K(z)\Xi_\rho(z).
\end{align*}
Fourier inversion gives $\cN_0\circ\Ad_{U_F}=\cE_K$, where
\[
 \cE_K(X)=d^{-n}\sum_{z\in K}\Tr[Xw(-z)]w(z).
\]
Using $\Gamma'=F\Gamma$ and the Weyl action of the exact lifts, we obtain
\[
 \Ad_{U_\Gamma^\dagger w(-Bu)}\circ\cN\circ\Ad_{U_{\Gamma'}}
 =\cE_{K_{\rm std}}.
\]
This equality specifies the input encoding and output decoding.

It remains to identify the resulting channel.  The tensor-product
identification in Eq.~\eqref{eq:sector-weyl-factorization}, applied
to the coordinate pairs, gives the usual qudit tensor factors.
On a single-qudit Weyl operator,
\begin{align*}
 \id_d(w(p,t))&=w(p,t),\\
 \Delta_Z(w(p,t))&=\delta_{t,0}w(p,t),\\
 \Omega_d(w(p,t))&=\delta_{p,0}\delta_{t,0}w(p,t).
\end{align*}
Their tensor product therefore has the same action as
\[
 \cE_{K_{\rm std}}(w(z))=\mathbf1_{K_{\rm std}}(z)w(z).
\]
Since Weyl operators form a basis of the full operator space,
\[
 \cE_{K_{\rm std}}
 =\id_d^{\otimes q}\otimes
 \Delta_Z^{\otimes a}\otimes\Omega_d^{\otimes q}.
\]
The equality holds for arbitrary inputs, including inputs entangled
across these factors, and proves Eq.~\eqref{eq:stabilizer-channel-normal}.

\emph{4. Capacities.}
The $q$ identity factors transmit $q\log_2d$ qubits per use.
To bound private capacity, we show that the dephased labels supply
no private information.  A Stinespring isometry for the normal form is
\[
 |i,x,j\rangle\longmapsto
 |i,x\rangle_{B_QB_C}\otimes|\Phi_{d^q}\rangle_{B_DE_0}
 \otimes|x,j\rangle_{E_CE_D},
\]
where $i,j$ label $q$ qudits, $x$ labels $a$ qudits, and
$|\Phi_{d^q}\rangle=d^{-q/2}\sum_{h=1}^{d^q}|h,h\rangle$.
Bob and Eve both have the dephased label $x$, while Bob's $B_D$
register is fixed at $I/d^q$.

For $r$ uses, let $Y$ index any input ensemble, allowing correlated
inputs, and let $X$ be the complete dephased label.
Write $p_x=\Pr[X=x]$, and let $\chi(Y;B)$ denote the Holevo
information of the output ensemble, with analogous notation for
conditional ensembles.  The variable register $B_Q$ has dimension
$d^{rq}$, so
\begin{align*}
 \chi(Y;B)-\chi(Y;E)
 &\le\chi(Y;B)-I(Y;X)\\
 &=\sum_xp_x\chi(Y;B_Q\mid X=x)\\
 &\le\sum_xp_x\log_2\dim B_Q=rq\log_2d.
\end{align*}
The steps use data processing for Eve's copy of $X$, the entropy
formula for a state with a classical label, and the dimension bound.
Optimization and regularization give
\[
 q\log_2d\le Q(\cN)\le P(\cN)\le q\log_2d.
\]
For classical communication, Bob's variable registers $B_QB_C$
have dimension $d^{r(n-q)}$, giving
$\chi^*(\cN^{\otimes r})\le r(n-q)\log_2d$.
Uniform computational-basis inputs on these factors attain equality,
so $C(\cN)=(n-q)\log_2d$.

\end{proof}

\subsection{Universal transport and entanglement breaking}
\label{app:universal-stabilizer-proof}

\begin{proof}[Proof of Corollary~\ref{cor:universal-stabilizer-eb}]
Fix a regular $S$ and write $\mathscr R=\mathscr R(S)$.

\emph{(a) Fixed environment.}
Theorem~\ref{thm:stabilizer-regular} gives the channel normal form
and $Q=q\log_2d$.
When $q=0$, the normal form is complete dephasing.  For any state
$\omega_{RA}$ with a reference $R$,
\[
 (\id_R\otimes\Delta_Z^{\otimes n})(\omega_{RA})
 =\sum_x\omega_R^{(x)}\otimes|x\rangle\langle x|,
\]
where $\omega_R^{(x)}=\langle x|\omega_{RA}|x\rangle$.
These operators are positive and their traces sum
to one, so the output is separable.  If $q>0$, an identity factor
preserves a maximally entangled pair with a reference, and the
channel is not EB.  This proves $\mathrm{EB}\iff q=0\iff Q=0$.
For the remaining equivalence, let $E$ be a basis matrix of $L$.
Since $L^\perp=L$ and $\mathscr R$ is invertible,
\begin{align*}
 q=0
 &\iff E^{\mathsf T}J\mathscr RE=0\\
 &\iff [x,\mathscr Ry]=0\quad(x,y\in L)\\
 &\iff \mathscr RL\subseteq L^\perp=L\\
 &\iff \mathscr RL=L.
\end{align*}

\emph{(b) All pure stabilizer environments.}
Corollary~\ref{cor:universal-transport} gives
(1)$\Leftrightarrow$(2).  A scalar $\mathscr R$ preserves every
Lagrangian, so part~(a) gives
(2)$\Rightarrow$(3)$\Rightarrow$(4).

To prove (4)$\Rightarrow$(2), assume that every pure stabilizer
environment induces a channel with zero quantum capacity.
Every Lagrangian $L\subset V$ admits a pure stabilizer state.
Part~(a) therefore gives
$\mathscr R L=L$ for every Lagrangian $L$.
We now show that a linear map with this property must be scalar.

Fix $x\ne0$ and let $y\in x^\perp$.
The isotropic subspace $\operatorname{Span}\{x,y\}$ extends to
a Lagrangian $L$.  Since $\mathscr R$ preserves $L$, both
$\mathscr R x$ and $y$ belong to $L$, and hence
$[y,\mathscr R x]=0$.
This holds for every $y\in x^\perp$, so nondegeneracy of the
symplectic form implies
\[
 \mathscr R x\in(x^\perp)^\perp=\operatorname{Span}\{x\}.
\]
Thus $\mathscr R x=r_xx$ for some $r_x\in\F$.
For independent nonzero $x,y$, linearity gives
\[
 r_{x+y}(x+y)=\mathscr R(x+y)=r_xx+r_yy,
\]
and independence forces $r_{x+y}=r_x=r_y$.
For dependent nonzero vectors, the same equality of scalars follows
by scaling.  Consequently $\mathscr R=rI_V$ for a single $r\in\F$.
Regularity makes both $\mathscr R$ and $I-\mathscr R$ invertible
by Lemma~\ref{lem:regular-block-identities}, so $r\notin\{0,1\}$.
This proves (4)$\Rightarrow$(2) and completes the equivalence.
\end{proof}

\subsection{Entanglement between two cross-ratio sectors}
\label{app:two-sector-entanglement}

\begin{proof}[Proof of Corollary~\ref{cor:two-sector-entanglement}]
We first obtain the reduced states in the sector coordinates, then
compute their entropies, and finally identify the same integer in
the channel-capacity formula.

\emph{Reduced states.}
Let $L$ be the stabilizer subspace of $\psi$, let
$\chi(v)=\Xi_\psi(v)$ for $v\in L$, and set
$L_i=L\cap V_i$ and $a_i=\dim L_i$.
Recall that $T_i:\F^{2n_i}\to V_i$ are symplectic isomorphisms,
where $\dim V_i=2n_i$ and $n_1+n_2=n$, and that
$T(z_1,z_2)=T_1z_1+T_2z_2$.
Write $w_i$ for the Weyl operators on the corresponding subsystem
$B_i$.  The defining relation for $U_T$ is
\[
 U_T^\dagger w(T_1z_1+T_2z_2)U_T
 =w_1(z_1)\otimes w_2(z_2).
\]
It identifies the symplectic sectors with tensor factors of the
environment.  Thus $\psi_{B_i}$ is the $i$th reduction of the pure
state $\widetilde\psi=U_T^\dagger\psi U_T$.  Substitute this relation into
$\psi=d^{-n}\sum_{v\in L}\chi(v)w(v)$ and use
$\Tr w_i(z)=d^{n_i}\delta_{z,0}$.  The partial trace removes every
term with a nonzero Weyl label on the discarded factor, leaving
\[
 \psi_{B_i}=d^{-n_i}\sum_{v\in L_i}
       \chi(v)w_i(T_i^{-1}v).
\]
Thus only stabilizers contained entirely in $V_i$ constrain the
$i$th reduced state.

\emph{Entanglement entropy.}
Define
\[
 \Pi_i=d^{-a_i}\sum_{v\in L_i}\chi(v)w_i(T_i^{-1}v).
\]
Since $L_i$ is isotropic, its Weyl operators multiply without phase;
the character law for $\chi$ therefore gives
\begin{align*}
 \Pi_i^2&=d^{-2a_i}\sum_{s\in L_i}
 d^{a_i}\chi(s)w_i(T_i^{-1}s)=\Pi_i.
\end{align*}
Here each $s\in L_i$ occurs as $u+v$ for $d^{a_i}$ ordered pairs
$(u,v)\in L_i^2$.
Replacing $v$ by $-v$ gives $\Pi_i^\dagger=\Pi_i$, and only the
identity Weyl operator contributes to its trace.  Hence
\[
 \operatorname{rank}\Pi_i=\Tr\Pi_i=d^{n_i-a_i},~
 \psi_{B_i}=\frac{\Pi_i}{d^{n_i-a_i}}.
\]
The nonzero spectrum is uniform, so
$S(\psi_{B_i})=(n_i-a_i)\log_2d$.
Purity of $\widetilde\psi$ makes the two reduced spectra equal by
Schmidt decomposition.  Consequently
\[
 k:=n_1-a_1=n_2-a_2,~
 S(\psi_{B_1})=S(\psi_{B_2})=k\log_2d.
\]

\emph{Channel capacity.}
Write $\mathscr R=\mathscr R(S)$.
It remains to prove that the integer $q$ in
Theorem~\ref{thm:stabilizer-regular} equals $k$.
If $x=x_1+x_2\in L\cap\mathscr R^{-1}L$, then both $x$ and
$\mathscr Rx$ belong to $L$.  The distinct eigenvalues allow us to
extract their sector components:
\[
 x_1=\frac{\mathscr Rx-r_2x}{r_1-r_2}\in L_1,
~
 x_2=\frac{r_1x-\mathscr Rx}{r_1-r_2}\in L_2.
\]
Conversely, $x_i\in L_i$ implies
$x_1+x_2,\,r_1x_1+r_2x_2\in L$.  Therefore
\[
 L\cap\mathscr R^{-1}L=L_1\oplus L_2.
\]
The geometric rank formula in
Theorem~\ref{thm:stabilizer-regular} now gives
\[
 2q=n-a_1-a_2=(n_1-a_1)+(n_2-a_2)=2k.
\]
Its capacity formula yields $Q=P=k\log_2d$, as required.
\end{proof}

\subsection{The two-qudit example}
\label{app:heterogeneous}

We verify Example~\ref{ex:two-sector-code} by constructing input and output Clifford
coordinates in which the isometry transmits one input coordinate
to each output and prepares a fixed Bell pair on the other two.
For the couplings in Eq.~\eqref{eq:inhomogeneous-example} over
$\mathbb F_5$, the position mixers are
\[
 M_1=\begin{pmatrix}2&-1\\-1&1\end{pmatrix},
 ~ M_2=\begin{pmatrix}4&2\\2&3\end{pmatrix}.
\]
For a basis input $|i_1,i_2\rangle$ and one summand $|j,j\rangle$ of
the Bell environment, the output coordinates are therefore
\begin{align*}
 x=2i_1-j,~y=4i_2+2j,~
 b=-i_1+j,~c=2i_2+3j.
\end{align*}
The combinations $4x+2y$ and $b+3c$ cancel the environmental
coordinate $j$.  They motivate the input coordinates
\[
 u=3i_1+3i_2,~v=i_2-i_1,
\]
whose inverse is $i_1=u+2v$, $i_2=u+3v$.  In particular, the code
$|i\rangle\mapsto|i,i\rangle$ in the main text is $|u,0\rangle$
with $u=i$.

To make these two transmitted coordinates explicit and match the
remaining coordinates, apply the output basis permutations
\begin{align*}
 (x,y)_{A'}&\longmapsto(4x+2y,\ 3x+2y),\\
 (b,c)_{B'}&\longmapsto(b+3c,\ 4b+4c).
\end{align*}
The input and output matrices have determinants $1,2,2$, respectively.
Each induces a Clifford basis permutation, whose Weyl action is
$(p,q)\mapsto(M^{-\mathsf T}p,Mq)$.

We now check the proposed coordinates directly:
\begin{align*}
 4x+2y&=3i_1+3i_2=u,\\
 3x+2y&=i_1+3i_2+j,\\
 b+3c&=-i_1+i_2=v,\\
 4b+4c&=i_1+3i_2+j.
\end{align*}
The second coordinate is the same at both outputs.  For fixed
$(i_1,i_2)$, the substitution $z=i_1+3i_2+j$ is a bijection of
$\mathbb F_5$, so reindexing the uniform Bell sum gives the isometry
\begin{equation*}
 |u,v\rangle\longmapsto
 |u\rangle_{A'_1}|v\rangle_{B'_1}
 \otimes|\Phi_5\rangle_{A'_2B'_2}.
\end{equation*}
Denote the receiver channel in these input and output coordinates
by $\widetilde{\cN}$.  Tracing $B'$ gives
\[
 \widetilde{\cN}(X)=\Tr_2(X)\otimes I/5
       =(\id_5\otimes\Omega_5)(X).
\]
The capacities follow from the proof of
Theorem~\ref{thm:stabilizer-regular} in
Appendix~\ref{app:stabilizer-normal-proof} with $n=2,q=1$.

\section{Proof of the environmental normal form}
\label{app:normal-form}

For a state $\rho$ with $\dim S_\rho=\ell$, we first turn its
unit-modulus Weyl expectations into exact support constraints.
Clifford coordinates then separate $\ell$ pure stabilizer factors.
Finally, we identify the mean state and prove that it minimizes
relative entropy over MSPS.

\emph{Support and stabilizer character.}
The first step is to show that every unit-modulus expectation fixes
a Weyl eigenvalue on the entire support of $\rho$.
For a unitary $U$, a state $\rho$, and $c=\Tr(\rho U)$ with $|c|=1$,
\[
 \|(U-cI)\rho^{1/2}\|_2^2
 =2-2\operatorname{Re}(\overline c\Tr(\rho U))=0,
\]
where $\|X\|_2^2=\Tr(X^\dagger X)$.
Thus $U=cI$ on $\operatorname{supp}\rho$; the converse is immediate.
For $z\in S_\rho$, put $\eta(z)=\Tr[\rho w(z)]$.
Applying Weyl commutation to a nonzero vector in the support gives
\[
 \eta(z)\eta(z')=\zeta^{[z,z']}\eta(z')\eta(z)
 \quad\Longrightarrow\quad [z,z']=0.
\]
The identities $w(z)w(z')=w(z+z')$ for commuting directions and
$w(az)=w(z)^a$ then show that $S_\rho$ is an isotropic subspace and
$\eta$ is a character.  Write $\ell=\dim S_\rho\le n$.

\emph{Canonical coordinates.}
Since $S_\rho$ is isotropic, we may put all its support constraints
on distinct computational-basis coordinates.
Extend a basis of $S_\rho$ to a symplectic basis and choose a Clifford
mapping it to the first $\ell$ $Z$-directions
\cite{HostensDehaeneDeMoor2005,Gross2006}.
If the corresponding eigenvalues are $\zeta^{a_i}$, apply $X(-a)$
to these coordinates.  The resulting Clifford $K$ satisfies
\begin{align*}
 \operatorname{supp}(K\rho K^\dagger)
 &\subseteq|0\rangle^{\otimes\ell}\otimes\mathcal H_m,~ m=n-\ell,\\
 K\rho K^\dagger&=|0\rangle\!\langle0|^{\otimes\ell}\otimes\tau.
\end{align*}
A nonzero element of $S_\tau$ would supply an additional stabilizer
direction of $\rho$, contradicting $\dim S_\rho=\ell$.
Hence $S_\tau=\{0\}$.

\emph{Mean state and relative-entropy minimum.}
We now identify the maximally mixed state on the joint eigenspace
and compare it with every admissible MSPS.
Character orthogonality gives the joint-eigenspace projector
\[
 P_\rho=\frac1{d^\ell}\sum_{z\in S_\rho}\overline{\eta(z)}w(z),
 ~\Tr P_\rho=d^{n-\ell}.
\]
Its normalization has the characteristic function
of Eq.~\eqref{eq:mean-state}; hence
\[
 \cM(\rho)=P_\rho/d^{n-\ell},~
 K\cM(\rho)K^\dagger
 =|0\rangle\!\langle0|^{\otimes\ell}\otimes I/d^m.
\]
An MSPS can have finite relative entropy from $\rho$ only if its
support contains that of $\rho$.  For
$\omega=P_W/d^{n-k}$ with stabilizer subspace $W$, $\dim W=k$,
and specified character, this condition is
\begin{align*}
 D(\rho\Vert\omega)<\infty
 &\iff\operatorname{supp}\rho\subseteq\operatorname{ran}P_W\\
 &\iff W\subseteq S_\rho\text{ with matching character}.
\end{align*}
In this case $k\le\ell$, and evaluating $\log\omega$ on its support gives
\begin{align*}
 D(\rho\Vert\omega)
 &=(n-k)\log_2d-S(\rho)\\
 &\ge(n-\ell)\log_2d-S(\rho)\\
 &=D(\rho\Vert\cM(\rho)).
\end{align*}
Equality requires $k=\ell$, so $W=S_\rho$ with the same character.
Thus $\cM(\rho)$ is the unique minimizer, and
\[
 \MRM(\rho)=m\log_2d-S(\tau),
~ S(\cM(\rho))=m\log_2d.
\]
These formulas include $m=0$, with the one-dimensional state $\tau=1$.

The same coordinates verify the convex-mixture statement in
Sec.~\ref{sec:stabilizer-notation}:
\[
 |0\rangle\!\langle0|^{\otimes\ell}\otimes I/d^m
 =\frac1{d^m}\sum_{x\in\F^m}|0,x\rangle\!\langle0,x|.
\]
Each summand is a pure stabilizer state.  Clifford conjugation back
therefore places every MSPS in their convex hull.

\end{document}